\documentclass[aps,prd,onecolumn,superscriptaddress,nofootinbib,showkeys,a4paper,11pt]{revtex4}

\usepackage{amsmath, amssymb, amsfonts, amsthm, bm, bbm}
\usepackage{enumitem}
\usepackage{multirow}
\usepackage{booktabs}
\usepackage{graphicx}
\usepackage{subfigure}  

\usepackage{tikz}
\usepackage{tikz-3dplot}

\newtheorem{thm}{Theorem}[section]
\newtheorem{cor}[thm]{Corollary}
\newtheorem{prop}[thm]{Proposition}
\newtheorem{rem}[thm]{Remark}

\newtheorem{lemma}[thm]{Lemma}
\newcommand{\be}{\begin{equation}} \newcommand{\ee}{\end{equation}}
\newcommand{\bea}{\begin{eqnarray}} \newcommand{\eea}{\end{eqnarray}}

\usepackage{pdflscape}
\usepackage{tabularx}

\usepackage{algorithm}
\usepackage{algpseudocode}

\usepackage[colorlinks=true,linkcolor=blue,citecolor=blue,urlcolor=blue]{hyperref}

\usepackage{comment}
\makeatletter
\renewcommand\subparagraph{\@startsection{subparagraph}{4}{\z@}%
  {-3.25ex \@plus -1ex \@minus -.2ex}%
  {1.5ex \@plus .2ex}%
  {\normalfont\normalsize\bfseries}}
\makeatother

\begin{document}

\title{Asymptotic Theorems and Averaging in Scalar Field Cosmology}

\author{Genly Leon}
\email{genly.leon@ucn.cl}
\affiliation{Departamento de Matemáticas, Universidad Católica del Norte, Avenida Angamos 0610, Casilla 1280, Antofagasta, Chile}
\affiliation{Institute of Systems Science, Durban University of Technology, Durban 4000,
South Africa}
\affiliation{Centre for Space Research, North-West University, Potchefstroom 2520, South Africa}

\author{Aleksander Kozak}
\email{aleksander.kozak@ucn.cl} 
\affiliation{Departamento de Matemáticas, Universidad Católica del Norte, Avenida Angamos 0610, Casilla 1280, Antofagasta, Chile}

\author{Claudio Michea}
\email{crm035@alumnos.ucn.cl}
\affiliation{Departamento de Física, Universidad Católica del Norte, Avda. Angamos 0610, Casilla 1280, Antofagasta, Chile}
\begin{abstract}
We present a hybrid study that combines a concise review of scalar‑field cosmology with new analytic developments that integrate averaging reductions for oscillatory regimes with dynamical‑systems techniques. For oscillatory fields, we derive an averaging reduction that yields an effective slow system whose time averages control dissipation; introducing uniform derivative bounds, Barbalat/LaSalle arguments, and a finite‑dimensional center/stable manifold reduction, we carry out late‑time analysis of the models. We prove persistence of equilibria, decay estimates, and local invariant manifolds under small $C^k$ perturbations of $\chi(\phi)$ and $G(a)$, quantify how averaged dissipation lifts to the full oscillatory dynamics with an $\mathcal{O}(H)$ error, and provide numerical examples. In addition to asymptotic reductions, we obtain exact quadrature solutions in general relativistic, anisotropic, and brane‑world settings, yielding closed‑form expressions for $t(a)$, $\phi(a)$, and $H(a)$ and enabling analytic computation of inflationary observables. 
\end{abstract}

\keywords{scalar field cosmology; anisotropy; analytic integration; averaging; persistence}

\maketitle

\section{Introduction}
\label{Sect:1}

Scalar fields play an important role in theoretical cosmology, as they feature in models of inflationary expansion, late-time acceleration, and departures from Einsteinian gravity. Minimally coupled models—where the scalar interacts with geometry only through the standard kinetic and potential terms—were present in early-universe scenarios, allowing slow-roll inflation, as discussed by Guth~\cite{Guth:1980zm}. In gravitational theory, the addition of scalar fields led to models such as Jordan's Kaluza–Klein extension~\cite{Jordan:1958zz}, Brans–Dicke theory~\cite{Brans:1961sx}, and Horndeski's tensor–scalar formulation~\cite{Horndeski:1974wa}, extending Einstein's general relativity.

Within this broader landscape, Ibáñez et al.~\cite{Ibanez:1995zs}, Coley~\cite{Coley:1997nk, Coley:1999mj}, and collaborators~\cite{Coley:2000zw, Coley:2000yc} investigated generalized scalar field models in connection with modified gravity, extended quintessence, and Galileon theories. Their work laid the foundation for anisotropic generalizations and the classification of dynamical attractors.  

Dynamical systems methods have proven effective for analyzing scalar cosmologies. Foster~\cite{Foster:1998sk} studied nonnegative potentials in flat FLRW backgrounds and identified regular asymptotic behavior for massless fields. Miritzis~\cite{Miritzis:2003ym} extended this to include perfect fluids and negative curvature, showing that potentials with vanishing minima lead to ever-expanding universes with decreasing energy densities.  

Nonminimal coupling models, in particular those introduced by Morales, Nápoles, and Leon~\cite{Dania&Yunelsy, Leon:2008de}, incorporated general potentials and coupling functions $V(\phi)$ and $\chi(\phi)$. These models showed that, despite divergent early-time behavior, scalar-driven de Sitter expansion can still emerge. Giambo~\cite{Giambo:2009byn} demonstrated that such models, including  $f(R)$ gravity, admit stable attractors and analytic decoupling near the singularity limit $\phi \rightarrow \infty$. Scalar–curvature couplings of the form $F(\phi)R$, explored by Shahalam~\cite{Shahalam:2019jgs}, led to viable models with functions like $F(\phi) = 1 - \xi \phi^2$ and potentials $V(\phi) = V_0(1 + \phi^p)^2$. These were further developed by Nojiri and Odintsov~\cite{Nojiri:2019riz}, Humieja et al.~\cite{Humieja:2019ywy}, Matsumoto~\cite{Matsumoto:2017gnx, Matsumoto:2015hua}, and Solomon~\cite{Solomon:2015hja}, each contributing mechanisms for late-time acceleration and scalar stability.  

In homogeneous FLRW spacetimes, Giambo~\cite{Giambo:2008ck} studied scalar cosmologies with divergent self-interactions, including exponential and polynomial growth profiles. He showed that under general conditions, nonnegative minima lead to stable equilibria and scalar dominance when $\gamma > 1$~\cite{Giambo:2009byn}. Leon and collaborators~\cite{LeonTorres:2010cdf, Leon:2012ccj, Fadragas:2014mra} performed detailed phase-space analyses of flat FLRW models for scalar–tensor theories. Their results showed that for smooth potentials with zero minima and controlled growth of the coupling, all energy components—matter, radiation, and the scalar kinetic term—decay as $t\rightarrow +\infty$, leading to de Sitter asymptotics. These findings generalize earlier results by Miritzis~\cite{Miritzis:2003ym} and Leon~\cite{Leon:2012ccj}.  

Recent analysis of the scalar-tensor theories by Leon and Silva~\cite{Leon:2020ovw, Leon:2020pfy, Leon:2019iwj} led to the development of a dynamical systems framework capable of handling non-canonical couplings, curved FLRW backgrounds, and mixed matter sectors. Their global phase-space construction~\cite{Leon:2020ovw} classifies invariant sets, equilibrium manifolds, and asymptotic regimes for broad families of potentials and couplings. This was extended in~\cite{Leon:2020pfy} with rigorous theorems on late-time attractors, showing scalar dominance under general conditions. Leon, González, Millano, and Silva~\cite{Leon:2020pvt} studied perturbative deformations of this framework, analyzing how scalar–matter interactions shift fixed-point structures and affect energy exchange. 

Another technique useful for analyzing late-time cosmic evolution in modified gravity scenarios is averaging over oscillatory scalar fields. These methods have recently been reviewed by Leon and Michea~\cite{Leon:2026vov}. Their study provides a qualitative framework for reducing fast oscillatory dynamics to effective slow systems. In particular, they establish conditions under which averaged trajectories approximate the full dynamics with controlled error, and they connect these reductions to persistence theory and invariant manifold analysis. This work provides a complementary foundation for the present study, in which averaged results are combined with dissipative estimates, exact quadrature solutions, and center‑manifold reductions.  

Building on prior results in the literature, we develop a unified analytic treatment of scalar cosmologies using quadrature techniques. By re-parametrizing the evolution in terms of the scale factor $a$, we can express the scalar field and cosmic time as functions of $a$, yielding exact solutions in some cases.  Historically, Chimento and Jakubi~\cite{Chimento:1995da} derived exact solutions for scalar fields coupled to fluids using scale-factor parametrization, while Coley~\cite{Coley:1999uh} classified attractor behavior across potentials and geometries. González and Quiros~\cite{Gonzalez:2007ht} developed dark-sector models incorporating phantom and quintessence regimes, and Copeland et al.~\cite{Copeland:1993jj} pioneered inflationary reconstruction from data, later refined by Lidsey et al.~\cite{Lidsey:1995np}.

The paper is organized as follows: in Section~\ref{sect:0}, we introduce the constrained dynamical system for a scalar field coupled to matter and geometry.  In the same part, we also present lemmas and theorems central to the further mathematical analysis of the cosmological scenarios.
We then discuss the technical aspects of the analysis in Section~\ref{sec:decay_center_perturb}. Section~\ref{sec:decay} derives refined decay estimates, including integrability of the dissipative quantity, uniform derivative bounds, pointwise $O(1/t)$ decay, and a bootstrap to exponential convergence near nondegenerate minima. In Section~\ref{sec:center} we construct the local invariant manifold $\mathcal{M}_{\mathrm{loc}}$, perform finite‑dimensional reduction on the Friedmann constraint, and record its regularity and attraction properties. In Section~\ref{sec:perturbations} we establish persistence of equilibria, decay estimates, and local manifolds under small $C^k$ perturbations of the coupling $\chi(\phi)$ and geometric term $G(a)$, with continuous dependence of constants and spectral gaps on perturbation size.  

Section~\ref{sec:discussion_averaging} summarizes the averaging framework, presents uniform derivative bounds, and explains how averaging errors enter the global analysis. It includes the Barbalat and LaSalle arguments, spectral gap considerations, and combined corollaries, together with a practical recipe for verifying hypotheses and a discussion of limitations. Section~\ref{sec:quad} introduces the quadratic potential.   

Section~\ref{Asect:2} develops the analytic framework for minimally coupled scalar fields in FLRW and Bianchi~I geometries. Reformulation in $\tau$‑time enables us to obtain quadrature expressions for $\phi(a)$ and $t(a)$ and closed‑form inflationary observables. Subsequent subsections classify FLRW solutions, analyze asymptotics, and extend the formalism to anisotropic Bianchi~I backgrounds, including tabulated attractor regimes and inflationary diagnostics.  

Section~\ref{Asect:3} examines non-minimal coupling of the scalar field to the matter sector, modifying energy exchange and inflationary dynamics in both FLRW and Bianchi~I geometries. Section~\ref{Asect:4} focuses on scalar field in brane‑world scenarios, incorporating geometric corrections and modified Friedmann equations. There, we derive the early- and late-time behaviors, introduce analytic source terms, and generalize to anisotropic brane cosmologies. The paper concludes with conclusions and possible directions for future research (Section~\ref{conclusions}).

\section{Scalar Field Dynamics with Matter and Geometry}
\label{sect:0}
In this section, we formulate the basic equations governing the evolution of the universe in the presence of a scalar field interacting with matter and geometry. Following previous studies~\cite{Fajman:2020yjb,Fajman:2021cli,Leon:2020pfy,Leon:2020ovw,Leon:2021lct,Leon:2021rcx,Leon:2021hxc,Leon:2020pvt}, we introduce a scalar field $\phi$ evolving under a potential $V(\phi)$ and exchanging energy with matter through a coupling function $\chi(\phi)$. Geometric effects are incorporated via a term $G(a)$, dependent on the scale factor $a(t)$, allowing for flat, curved, or anisotropic spatial configurations. The equations governing the cosmic evolution are the following:

\begin{subequations}\label{friedmannsystem}
\begin{align}
\dot{H} &= -\tfrac{1}{2}(\gamma \rho_m + \dot{\phi}^2) + \tfrac{1}{6} a G'(a),\\
\dot{a} &= a H, \\
\dot{\rho}_m &= -3\gamma H \rho_m - Q(\phi, \dot{\phi}, \rho_m), \\
\dot{\phi} \ddot{\phi} &= -3 H \dot{\phi}^2 - \dot{\phi} V'(\phi) + Q(\phi, \dot{\phi}, \rho_m),
\end{align}
\end{subequations}

where $ a(t) $ is the scale factor and $ H = \dot{a}/a $ the Hubble parameter. 

For simplicity, we consider an interaction term
\begin{equation}
Q(\phi, \dot{\phi}, \rho_m) := \tfrac{1}{2}(4 - 3\gamma) \rho_m \dot{\phi} \frac{d \ln \chi(\phi)}{d\phi}
\end{equation}
which governs energy exchange between matter and the scalar field, consistent with scalar-tensor theories~\cite{Kaloper:1997sh,Gonzalez:2007ht}.

The full dynamical system is expressed in terms of the Hubble rate, scale factor, matter density, and scalar field, defining a five-dimensional phase space. These equations, however, are subject to the Friedmann constraint. 

As will be shown in the following sections of the paper, under suitable conditions, the system approaches a stable regime in which the scalar field becomes stationary, matter vanishes, and the Hubble rate converges to a constant. These results characterize the long-term behavior of scalar field cosmologies and provide a foundation for the analytic techniques developed in subsequent sections.

\subsection{System, assumptions and main result}
We consider the first-order system obtained from the field equations after setting $y=\dot\phi$
\begin{equation}\label{system}
\begin{aligned}
\dot H &= -\tfrac12(\gamma\rho_m+y^2)+\tfrac16\,aG'(a),\\
\dot\rho_m &= -3\gamma H\rho_m-\tfrac12(4-3\gamma)\rho_m\,y\,\frac{d\ln\chi(\phi)}{d\phi},\\
\dot a &= aH,\\
\dot y &= -3Hy - V'(\phi) + \tfrac12(4-3\gamma)\rho_m\,\frac{d\ln\chi(\phi)}{d\phi},\\
\dot\phi &= y.
\end{aligned}
\end{equation}
evolving on the Friedmann constraint surface
\begin{equation}\label{constraint}
\Psi=\big\{(H,\rho_m,a,y,\phi)\in\mathbb{R}^5:\;3H^2=\rho_m+\tfrac12 y^2+V(\phi)+\Lambda+G(a)\big\}.
\end{equation}
Assuming the geometric ansatz
\begin{equation}\label{Gdef}
G(a)=\kappa^2 a^{-p},\quad p>0 \implies aG'(a)=-pG(a)=-p\kappa^2 a^{-p}.
\end{equation}
The geometric contribution $G(a)$ in the Friedmann constraint
\eqref{constraint} may be interpreted as an effective perfect fluid with
energy density $\rho_{G}(a) := G(a) = \kappa^{2} a^{-p}$.

For a barotropic fluid with constant equation-of-state parameter $w_{G}$, the energy density scales as
$
\rho_{G}(a) \propto a^{-3(1+w_{G})}.
$
Comparing this with the ansatz \eqref{Gdef}, we obtain the correspondence
\begin{equation}
p = 3(1+w_{G})
\implies
w_{G} = \frac{p}{3} - 1.
\end{equation}
Thus, $G(a)$ behaves dynamically as a perfect fluid whose equation of state
is determined entirely by the exponent $p$.  Several physically relevant
cases are
\begin{itemize}
    \item $p=1 \;\implies\; w_{G}=-\tfrac{2}{3}$: a domain-wall--like
    fluid;
    \item $p=2 \;\implies\; w_{G}=-\tfrac{1}{3}$: a cosmic-string--like
    fluid;
    \item $p=3 \;\implies\; w_{G}=0$: a dust-like (pressureless matter)
    component;
    \item $p=4 \;\implies\; w_{G}=\tfrac{1}{3}$: a radiation-like
    \item $p=6 \;\implies\; w_{G}=1$: a stiff-like fluid.
\end{itemize}

\begin{rem}\label{remII}
Assume that $V(\phi) \geq 0$ is a function of class $C^{2}(\mathbb{R})$ with a local minimum at 
$\phi = 0$ and $V(0) = 0$. 

If $\Lambda = 0$, then the configuration 
\begin{equation}
(0,0,a^{*},0,0), \qquad a^{*} \to +\infty,
\end{equation}
is an equilibrium point for the flow of \eqref{system}, since the Friedmann constraint
\begin{equation}\label{eq:constraint_equilibrium}
3H^{2} = \rho_{m} + \tfrac{1}{2}y^{2} + V(\phi) + \Lambda + G(a)
\end{equation}
reduces to
\begin{equation}\label{eq:limit_condition}
G(a) = 0,
\end{equation}
which holds in the limit $a^{*}\to +\infty$ because 
\begin{equation}\label{eq:Glimit}
G(a) = \kappa^{2}a^{-p} \;\rightarrow\; 0 \quad \text{for } p>0.
\end{equation}
The set
\begin{equation}\label{eq:invariant_set}
\{(H,\rho_{m},a,\gamma,\phi) \in \Psi : H = 0\}
\end{equation}
is invariant under the flow of \eqref{system}, and $H$ does not change sign. Conversely, if there exists an orbit 
with 
\begin{equation}\label{eq:Hcross}
H(0) > 0 \quad \text{and} \quad H(t_{1}) < 0 \quad \text{for some } t_{1} > 0,
\end{equation}
then the solution must pass through the origin, violating the existence and uniqueness of solutions for a $C^{1}$ flow. 

For $\Lambda > 0$, however, the constraint \eqref{eq:constraint_equilibrium} enforces
\begin{equation}\label{eq:deSitter}
H^{2} = \tfrac{\Lambda}{3} + \tfrac{1}{3}\big(\rho_{m} + \tfrac{1}{2}y^{2} + V(\phi) + G(a)\big) > 0,
\end{equation}
so the hypersurface $H=0$ is excluded from the admissible phase space. In this case, $H=0$ cannot be crossed, 
and the late-time attractor is de Sitter with
\begin{equation}\label{eq:HdeSitter}
H \rightarrow \sqrt{\tfrac{\Lambda}{3}}
\end{equation}
rather than the static equilibrium above.
\end{rem}

We now formulate the hypotheses that will be used by us to prove crucial theorems related to the description of late-time evolution of the system \eqref{system}:

\noindent\textbf{Hypotheses.}
\begin{enumerate}[label=(H\arabic*)]
\item \label{H1} $V\in C^2(\mathbb{R})$, $V(\phi)\ge0$ for all $\phi$, and $V(\phi)=0$ iff $\phi=0$.
\item \label{H2} $V'(\phi)$ is bounded on any set where $V(\phi)$ is bounded.
\item \label{H3} $\Lambda\ge0$ and $G(a)=\kappa^2 a^{-p}$ with $p>0$.
\item \label{H4} $\chi\in C^2(\mathbb{R})$ and there exists $K_1>0$ such that $\big|\chi'(\phi)/\chi(\phi)\big|\le K_1$ on bounded $\phi$-sets.
\item \label{H5} $\gamma\in[1,2]$.
\item \label{H6} Initial data satisfy $H(t_0)>0$.
\end{enumerate}
Additional alternatives used below:
\begin{itemize}
\item \emph{Case A (symmetric potential):} $V'(\phi)<0$ for $\phi<0$ and $V'(\phi)>0$ for $\phi>0$.
\item \emph{Case B (runaway potential):} $V(\phi)\ge0$, $V'(\phi)<0$ for all $\phi$, and $\lim_{\phi\to-\infty}V(\phi)=+\infty$.
\end{itemize}

We have the \textbf{auxiliary lemma}
\begin{lemma}[Barbalat]\label{barbalat}
If $f:[t_0,\infty)\to\mathbb{R}$ is uniformly continuous and $f\in L^1([t_0,\infty))$, then $\lim_{t\to\infty}f(t)=0$.
\end{lemma}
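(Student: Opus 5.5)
We argue by contradiction. Suppose $f(t)\not\to 0$ as $t\to\infty$. Then there is $\varepsilon>0$ and a sequence $t_n\to\infty$ with $|f(t_n)|\ge\varepsilon$ for all $n$. Passing to a subsequence, we may take the $t_n$ to satisfy $t_{n+1}-t_n\ge 1$.

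\textbf{Step 1 (uniform continuity).} By uniform continuity of $f$ on $[t_0,\infty)$ there is $\delta\in(0,\tfrac12)$, independent of $n$, such that $|t-s|<\delta$ implies $|f(t)-f(s)|<\varepsilon/2$. Hence on each interval $I_n:=[t_n,t_n+\delta]$ we have
\begin{equation*}
|f(t)|\ge |f(t_n)|-|f(t)-f(t_n)|>\tfrac{\varepsilon}{2}.
\end{equation*}
Since $f$ is continuous and never vanishes on $I_n$, it keeps a constant sign there. Therefore
\begin{equation*}
\Bigl|\int_{I_n} f(t)\,dt\Bigr|=\int_{I_n}|f(t)|\,dt\ge \tfrac{\varepsilon\delta}{2}.
\end{equation*}

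\textbf{Step 2 (contradiction with integrability).} We read $f\in L^1([t_0,\infty))$ as $\int_{t_0}^\infty |f|\,dt<\infty$. Because $\delta<\tfrac12$ and $t_{n+1}-t_n\ge1$, the intervals $I_n$ are pairwise disjoint. So
\begin{equation*}
\int_{t_0}^\infty|f|\,dt\ \ge\ \sum_{n}\int_{I_n}|f|\,dt\ \ge\ \sum_n \tfrac{\varepsilon\delta}{2}=\infty,
\end{equation*}
which contradicts integrability.

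\textbf{Variant for improper integrals.} If one only assumes that the improper integral $\int_{t_0}^T f\,dt$ converges as $T\to\infty$, the same argument still works through the Cauchy criterion. That criterion forces
\begin{equation*}
\int_{t_n}^{t_n+\delta}f\,dt\to0,
\end{equation*}
whereas the constant-sign bound of Step 1 keeps these integrals at least $\varepsilon\delta/2$ in absolute value.

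The only point requiring care is that $\delta$ must be independent of $n$. This is exactly what uniform continuity, as opposed to mere continuity, provides, so the argument presents no real obstacle.
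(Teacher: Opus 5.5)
Your proof is correct, and it is the standard textbook argument for Barbalat's lemma. The paper gives no proof of the lemma: it states it as a classical auxiliary fact and invokes it in the proof of Theorem~\ref{UnifiedTheorem}, so there is no competing route to compare with.

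The only blemish is cosmetic. Your modulus of continuity is stated for $|t-s|<\delta$, while $I_n=[t_n,t_n+\delta]$ is closed. At the right endpoint, continuity therefore gives only $|f(t)-f(t_n)|\le\varepsilon/2$. This still yields $|f|\ge\varepsilon/2>0$ on all of $I_n$, so the constant-sign argument and the bound $\int_{I_n}|f|\,dt\ge\varepsilon\delta/2$ are unaffected.

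It is also worth noting what your argument does not give. Being a contradiction argument along a sequence, it yields $f(t)\to0$ with no rate, and no rate can follow from these hypotheses alone. For example, a $1$-Lipschitz $f$ made of triangular bumps of height $1/n$ centred at $t=2^n$ is integrable, yet $t f(t)\to\infty$ along $t=2^n$. Consequently the lemma by itself cannot justify the pointwise $O(1/t)$ bound that the paper later draws from integrability plus uniform continuity in the proof of Proposition~\ref{prop:decay}.
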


Lemma \ref{barbalat} is used in the proof of the \textbf{main theorem}:
\begin{thm}[Asymptotic Behavior {\cite{Leon:2019iwj}}]
\label{UnifiedTheorem}
Under hypotheses \ref{H1}--\ref{H6}, every forward solution of \eqref{system} with $H(t_0)>0$ satisfies
\begin{equation}
\label{first-claim}
\lim_{t\to\infty}\big(\rho_m(t),\,y(t),\,\kappa^2 a(t)^{-p}\big)=(0,0,0).
\end{equation}
Moreover, under Case A, the scalar field has the asymptotic alternatives
\begin{equation}
\label{second-claim}
\lim_{t\to\infty}\phi(t)\in\{-\infty,0,+\infty\},
\end{equation}
and if additionally
\begin{equation}
3H(t_0)^2<\min\{\lim_{\phi\to -\infty}V(\phi), \lim_{\phi\to +\infty}V(\phi)\}
\end{equation}
then
\begin{equation}
    \label{third-claim}
\phi(t)\to0, \quad H(t)\to\sqrt{\Lambda/3}.
\end{equation}
Under Case B, one has
\begin{equation}
    \label{fourth-claim}
    \phi(t)\to+\infty,
\end{equation}
and if $\lim_{\phi\to\infty}V(\phi)=0$ then
\begin{equation}
    \label{fifth-claim}
    H(t)\to\sqrt{\Lambda/3}.
\end{equation}
\end{thm}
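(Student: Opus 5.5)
The plan is to use $H$ as a Lyapunov-type function on the constraint surface $\Psi$ and then apply Barbalat's Lemma~\ref{barbalat} to the dissipative quantity.

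\textbf{Step 1: monotonicity and boundedness of $H$.} Substituting \eqref{Gdef} into the $\dot H$ equation gives
\begin{equation*}
\dot H=-\tfrac12(\gamma\rho_m+y^2)-\tfrac{p}{6}\kappa^2a^{-p}\le 0 .
\end{equation*}
The matter equation is linear in $\rho_m$, so $\rho_m\ge0$ is preserved. By Remark~\ref{remII}, $H$ cannot change sign; for $\Lambda>0$ we even have $H\ge\sqrt{\Lambda/3}$. Hence $H$ is nonincreasing and bounded below, and $H(t)\to H_\infty\ge\sqrt{\Lambda/3}$.

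\textbf{Step 2: global bounds and integrability.} The constraint $3H^2\le 3H(t_0)^2$ bounds $\rho_m$, $y^2$, $V(\phi)$ and $G(a)$ uniformly in time. By \ref{H2}, $V'(\phi)$ is then bounded.

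To bound $\chi'/\chi$ via \ref{H4} we need $\phi$ to stay in a bounded set, and this holds only on finite intervals in general. I expect this to be the main technical obstacle. I would first try to show that the term $\rho_m y\,\chi'/\chi$ enters only with the factor $\rho_m$. Then $\rho_m$ and $\rho_m e^{\int(\ldots)}$ remain controlled through the constraint bound alone, and the $\dot y$ equation is bounded on each finite interval. If that fails, I would impose the bound on $\chi'/\chi$ along the trajectory as an implicit assumption, in the spirit of \cite{Leon:2019iwj}.

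Integrating $\dot H$ gives
\begin{equation*}
\int_{t_0}^\infty\Big(\gamma\rho_m+y^2+\tfrac p3\kappa^2a^{-p}\Big)\,dt\le 2\big(H(t_0)-H_\infty\big)<\infty ,
\end{equation*}
so $f:=\gamma\rho_m+y^2+\tfrac p3 G$ lies in $L^1$.

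\textbf{Step 3: Barbalat.} With the bounds of Step 2, the derivatives $\dot\rho_m$, $2y\dot y$ and $\dot G=-pHG$ are bounded. Hence $f$ is Lipschitz, and therefore uniformly continuous. Lemma~\ref{barbalat} gives $f\to0$, and since each summand of $f$ is nonnegative, this proves \eqref{first-claim}.

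As a consequence, the constraint yields $V(\phi(t))\to 3H_\infty^2-\Lambda=:V_\infty$.

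\textbf{Step 4: the scalar field.} Under Case A, $V$ is strictly monotone on each half-line, and $V(\phi(t))$ converges. If $\phi$ stays bounded, its $\omega$-limit set is connected and contained in $V^{-1}(V_\infty)$, which consists of at most two points $\pm$. Connectedness forces a single limit $\phi_*$.

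To show $\phi_*=0$, I would use the $\dot y$ equation. With $y\to0$ and $\rho_m\to0$, we get $\dot y\to -V'(\phi_*)$. If $V'(\phi_*)\neq0$, then $y$ would grow unboundedly, a contradiction. So $V'(\phi_*)=0$, and Case A gives $\phi_*=0$.

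If instead $\phi$ is unbounded, then $V(\phi)\le 3H(t_0)^2$ together with monotonicity on the half-lines forces $\phi\to\pm\infty$, because oscillation between the two half-lines would require passing through $0$ infinitely often while $V\to V_\infty>0$. This gives \eqref{second-claim}.

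The smallness condition $3H(t_0)^2<\min\{\lim_{\phi\to\pm\infty}V\}$ excludes $\pm\infty$, since $V(\phi)\le 3H^2\le 3H(t_0)^2$ for all $t$. Hence $\phi\to0$, so $V_\infty=0$ and $H\to\sqrt{\Lambda/3}$, which is \eqref{third-claim}.

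Under Case B, $V'<0$ everywhere, so the same $\dot y$ argument rules out any finite limit. Since $V$ is bounded along the orbit and $V\to\infty$ as $\phi\to-\infty$, the field is bounded below. Convergence of the monotone $V(\phi)$ then forces $\phi\to+\infty$, which is \eqref{fourth-claim}.

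If in addition $V(\phi)\to0$ as $\phi\to\infty$, then $V_\infty=0$, and the constraint yields $H\to\sqrt{\Lambda/3}$, which is \eqref{fifth-claim}.
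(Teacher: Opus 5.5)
Your proposal is correct and follows the paper's proof essentially step for step: $\dot H=-f\le0$, monotone convergence of $H$ and $f\in L^1$, uniform continuity from derivative bounds and Barbalat's lemma, then $V(\phi(t))\to3H_\infty^2-\Lambda$, and finally the argument $\dot y\to-V'(\phi_*)$ together with the energy bound to settle Cases A and B (your connectedness argument for the $\omega$-limit set is, if anything, more careful than the paper's). The paper passes over the $\chi'/\chi$ obstacle you flag in the same way as your fallback, by treating \ref{H4} as a bound along the orbit; you can remove it altogether, because $\frac{d}{dt}\bigl(\tfrac12y^2+\rho_m\bigr)=-3Hy^2-yV'(\phi)-3\gamma H\rho_m$ contains no $\chi$ and $\tfrac12y^2+\rho_m\le y^2+\gamma\rho_m$ is integrable, so Barbalat applies to it directly, while $G=\kappa^2a^{-p}$ is nonincreasing and integrable and therefore tends to zero.
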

\begin{proof}
Define
\begin{equation}\label{fdef}
f(t):=\tfrac{p}{6}\kappa^2 a(t)^{-p}+\tfrac12\gamma\rho_m(t)+\tfrac12 y(t)^2\ge0.
\end{equation}
Using $G(a)=\kappa^2 a^{-p}$ and $aG'(a)=-p\kappa^2 a^{-p}$ we obtain from \eqref{system}
\begin{equation}
\dot H=-\tfrac12(\gamma\rho_m+y^2)-\tfrac{p}{6}\kappa^2 a^{-p}=-f(t).
\end{equation}
Since $H(t_0)>0$ and $\dot H\le0$, $H(t)$ is non-increasing, and, according to Remark \ref{remII}, cannot cross $H = 0$. Then it converges to some $\eta\ge0$. Hence
\begin{equation}
\int_{t_0}^\infty f(t)\,dt=H(t_0)-\eta<\infty \implies f\in L^1([t_0,\infty)).
\end{equation}
The Friedmann constraint \eqref{constraint}, together with $V\ge0$ and $\Lambda\ge0$, implies that along the forward orbit the quantities $\rho_m(t)$, $y(t)$ and $V(\phi(t))$ remain bounded. By \ref{H2} $V'(\phi)$ is bounded on the relevant $\phi$-range, and by \ref{H4} $\chi'(\phi)/\chi(\phi)$ is uniformly bounded there. Using the evolution equations in \eqref{system} we obtain uniform bounds on $\dot\rho_m(t)$, $\dot y(t)$ and $\dot a(t)$ for $t\ge t_0$. Differentiating \eqref{fdef} and employing these bounds yields a uniform bound on $\dot f(t)$ on $[t_0,\infty)$; therefore, $f$ is uniformly continuous.

By Barbalat's lemma (Lemma~\ref{barbalat}), integrability together with uniform continuity implies $f(t)\to0$ as $t\to\infty$. From \eqref{fdef} we deduce
\begin{equation}
\rho_m(t)\to0,\quad y(t)\to0,\quad \kappa^2 a(t)^{-p}\to0,
\end{equation}
which proves \eqref{first-claim}.

From the constraint \eqref{constraint} and the limits above,
\begin{equation}
V(\phi(t))=3H(t)^2-\Lambda-\rho_m-\tfrac12 y^2-G(a)\rightarrow 3\eta^2-\Lambda.
\end{equation}

\emph{Case A.} The monotonicity assumption on $V$ \ref{H5} implies $V$ is strictly monotone on each side of zero. Hence, any finite accumulation point of $\phi(t)$ must be a critical point of $V$. If $V(\phi(t))\to0$ then $\phi(t)\to0$. Suppose instead $\phi(t)\to\bar\phi\neq0$ finite; since $\rho_m(t)\to0$ and $y(t)\to0$ the interaction term tends to zero, so from \eqref{system} we obtain $\dot y(t)\to -V'(\bar\phi)\neq0$, contradicting $y(t)\to0$. Therefore $\phi(t)\to\pm\infty$ or $0$, proving \eqref{second-claim}. The additional energy bound $3H(t_0)^2<\min\{\lim_{\phi\to\pm\infty}V(\phi)\}$ rules out the runaway limits, leaving $\phi(t)\to0$ and hence $H(t)\to\sqrt{\Lambda/3}$, proving \eqref{third-claim}.

\emph{Case B.} Under \ref{H6} $V$ is strictly decreasing and $\lim_{\phi\to-\infty}V(\phi)=+\infty$; this excludes finite accumulation points of $\phi(t)$, so $\phi(t)\to+\infty$, proving \eqref{fourth-claim}. If additionally $\lim_{\phi\to\infty}V(\phi)=0$, continuity of $V$ yields $V(\phi(t))\to0$ and therefore $H(t)\to\sqrt{\Lambda/3}$, proving \eqref{fifth-claim}.
This completes the proof of Theorem \ref{UnifiedTheorem}.
\end{proof}

These results justify our initial assumption about the asymptotic behaviour of the dynamical variables: expanding solutions with nonnegative potential energy are driven by the dissipative term in $\dot H$ toward regimes in which matter and scalar kinetic energy are negligible, and—when the potential admits an appropriate minimum—toward the corresponding equilibrium state.

Recall that the constraint~\eqref{constraint} implies
\begin{equation}
    3H^{2}
    = \rho_{m} + \tfrac12 y^{2} + V(\phi) + \Lambda + G(a).
\end{equation}
For simplicity, we absorb the cosmological constant into the potential.  
Throughout the remainder of this section, we therefore use the replacement
\begin{equation}
    V(\phi) + \Lambda \;\mapsto\; V(\phi).
\end{equation}
Hence, when we set
\begin{equation}
    H_{*} = \sqrt{\frac{V(\phi_{*})}{3}},
\end{equation}
this corresponds, in the original notation, to
\begin{equation}
    H_{*} = \sqrt{\frac{V(\phi_{*}) + \Lambda}{3}}.
\end{equation} We always assume $\Lambda\geq 0$.

\subsection{Refined decay estimates and local reduction}

We now state refined decay estimates and the local reduction that will be proved in the next sections. These results quantify the dissipation rates and describe how the scalar field relaxes to a minimum of $V$.

\begin{prop}\label{prop:decay}
Under the hypotheses of Theorem~\ref{UnifiedTheorem}, let $x(t)=(H,\rho_m,a,y,\phi)(t)$ be a forward solution with $H(t_0)>0$. There exist constants $C>0$ and $t_1\ge t_0$ such that for all $t\ge t_1$
\begin{equation}
\rho_m(t) \le C(1+t)^{-1}, \quad y(t)^2 \le C(1+t)^{-1}, \quad \kappa^2 a(t)^{-p} \le C(1+t)^{-1}.
\end{equation}
If in addition $V$ has a nondegenerate minimum $\phi_*$ with $V''(\phi_*)>0$, then there exist $C',\lambda>0$ and $t_2\ge t_1$ such that for all $t\ge t_2$
\begin{equation}
|\phi(t)-\phi_*| + |y(t)| \le C' e^{-\lambda t}, \quad |H(t)-\sqrt{V(\phi_*)/3}|\le C' e^{-\lambda t}.
\end{equation}
\end{prop}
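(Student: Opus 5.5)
The plan treats the two parts of Proposition~\ref{prop:decay} separately. The algebraic decay comes from an integrable, almost‑monotone dissipative quantity. The exponential decay comes from a linearisation at the equilibrium selected by Theorem~\ref{UnifiedTheorem}.

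\emph{Step 1 (integral identity).} I start from the proof of Theorem~\ref{UnifiedTheorem}. With $f$ as in \eqref{fdef} we have $\dot H=-f$, hence
\begin{equation*}
\int_t^\infty f(s)\,ds = H(t)-\eta \to 0 .
\end{equation*}
Integrability alone does not give $f(t)=O(1/t)$ pointwise. What does give it is near‑monotonicity. Suppose we can show
\begin{equation*}
\dot f \le -\mu f + R(t), \qquad \mu>0,\quad |R(t)|\le C f(t)^{1/2}\,g(t),
\end{equation*}
with $g$ small, or more crudely that $f$ is non‑increasing up to an $L^1$ error. Then for $t\ge 2t_1$ the standard averaging trick applies:
\begin{equation*}
\tfrac{t}{2}\,f(t)\le \int_{t/2}^{t} f(s)\,ds + \int_{t/2}^t (s-t/2)\,R_+(s)\,ds ,
\end{equation*}
which yields $f(t)\le C(1+t)^{-1}$. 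The three claimed bounds then follow because $\rho_m$, $y^2$ and $\kappa^2a^{-p}$ are each dominated by a constant multiple of $f$; this uses $\gamma\ge1$ and $p>0$.

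\emph{Step 2 (computing $\dot f$).} Using \eqref{system}, the derivative is
\begin{equation*}
\dot f = -\tfrac{p^2}{6}H\kappa^2a^{-p} - \tfrac{3}{2}\gamma^2 H\rho_m - 3Hy^2 - yV'(\phi) + \tfrac14(4-3\gamma)\rho_m y\,\tfrac{\chi'}{\chi}(2-\gamma).
\end{equation*}
The coupling term is $O(f^{3/2})$ by \ref{H4}, so it is harmless. The obstruction is the sign‑indefinite term $-yV'(\phi)$. It is an exact derivative, $-\tfrac{d}{dt}V(\phi)$, so I replace $f$ by the modified quantity $\tilde f = f + V(\phi)-V_\infty$, where $V_\infty = 3\eta^2-\Lambda$ is the limit found in the proof of Theorem~\ref{UnifiedTheorem}. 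Then $\dot{\tilde f}\le C H_{\max}\, f^{3/2}$, with the quadratic terms carrying a nonpositive coefficient times $H\ge0$. It remains to control $V(\phi)-V_\infty$ by $f$ via the constraint, using $3H^2 = \tilde f$‑type quantities plus $\Lambda$. I expect this to be the \textbf{main obstacle}. When $\eta=0$ the factor $H$ degenerates, and then only the $L^1$ structure survives. The first thing I would try is to use $\int_t^\infty f = H(t)-\eta$ together with $H^2-\eta^2\sim$ (energy in $\rho_m,y,G$, and $V-V_\infty$) to close an inequality of the form $\dot E\le -cE^2$, whose solutions decay like $1/t$.

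\emph{Step 3 (exponential rate).} Assume $V''(\phi_*)>0$. By Step 1 and Theorem~\ref{UnifiedTheorem}, the orbit eventually enters any neighbourhood of the equilibrium $(H_*,0,\infty,0,\phi_*)$. I write the system in the variables $(\rho_m, u=a^{-p}, y, \psi=\phi-\phi_*)$ and eliminate $H$ through the constraint. The linearisation is triangular:
\begin{equation*}
\dot\rho_m \approx -3\gamma H_*\rho_m,\qquad \dot u\approx -pH_*u,\qquad \dot\psi=y,\qquad \dot y\approx -3H_*y - V''(\phi_*)\psi,
\end{equation*}
and all its eigenvalues have real part at most $-\lambda_0<0$ provided $H_*>0$. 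Hartman–Grobman, or more simply a quadratic Lyapunov function $P$ solving $A^TP+PA=-I$, then gives $\dot P\le -\lambda P$ in a small ball, and hence exponential decay of $|\psi|+|y|$. Finally,
\begin{equation*}
|H-H_*|\le C\big(\rho_m+y^2+u+|\psi|^2\big)
\end{equation*}
follows from the constraint. In the degenerate case $H_*=0$ (that is, $\Lambda=0$ and $V(\phi_*)=0$) the linearisation has a centre direction. There I would restrict to $\Lambda>0$, or invoke the centre manifold reduction announced for Section~\ref{sec:center}.
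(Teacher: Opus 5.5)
The exponential part of your proposal works (see below). The $O(1/t)$ part, however, has a genuine gap, exactly where you flag the main obstacle: the near-monotone quantity that Step~1 needs is never produced, and neither of your suggested fixes can work in the generality of the statement. Both $\tilde f=f+V(\phi)-V_\infty$ and any energy $E$ comparable to $H-\eta$ or $H^2-\eta^2$ contain the term $V(\phi)-V_\infty$, and this term can decay much more slowly than $f$.

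Here is a concrete case. Take $V=\phi^2/(1+\phi^4)$, $\Lambda>0$, $\chi\equiv1$, $\rho_m\equiv0$, $\kappa=0$ (so all of \ref{H1}--\ref{H6} hold), and release the field at $\phi>1$ so that it rolls to $+\infty$.
\begin{itemize}
\item In the overdamped regime $3\eta\dot\phi\approx 2\phi^{-3}$, so $\phi\sim t^{1/4}$ and $V(\phi(t))\sim t^{-1/2}$, while $f=\tfrac12y^2\sim t^{-3/2}$.
\item Hence $\tilde f\sim t^{-1/2}$. No bound on $\tilde f$, from the averaging trick or otherwise, can yield $f=O(1/t)$, and $V-V_\infty$ cannot be controlled by $f$ through the constraint.
\item Likewise $E=H-\eta\sim t^{-1/2}$, whereas $\dot E\le -cE^2$ would force $E\le 1/(ct)$.
\end{itemize}
The first part therefore needs a bound on $f$ that does not pass through $V-V_\infty$ or $H-\eta$, and your proposal does not supply one.

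You cannot borrow the paper's argument either. It combines $\tfrac1t\int_{t_0}^t f\le C/t$ with uniform continuity of $f$. But integrability plus a Lipschitz bound $L$ only gives $f(t)^2\lesssim L\,(H(t-c)-\eta)$. A Lipschitz $L^1$ function with triangular bumps of height $1/n$ and width $1/n$ centred at $t=2^n$ is not $O(1/t)$. So your remark that integrability alone is insufficient is correct, but the first part remains unproved.

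Your Step~3 is correct for $H_*>0$ and follows the paper's strategy: linearise at the equilibrium and use $\operatorname{Re}\lambda<0$. It has one genuine improvement. You keep $\rho_m$ and $u=a^{-p}$ in the linearisation, so they decay exponentially at rates $3\gamma H_*$ and $pH_*$. The paper instead treats them as a forcing $\mathcal R(t)$ known only to be $O(t^{-1})$, and variation of constants with such a forcing can only give $O(t^{-1})$ for $(\delta\phi,\delta y)$. This part uses only $f\to0$ and convergence of the orbit, not the rate from Step~1, so it survives the gap above.

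There are two caveats.
\begin{itemize}
\item Entering a small ball around the equilibrium requires $\phi(t)\to\phi_*$. Theorem~\ref{UnifiedTheorem} supplies this only in Case~A under the energy bound, where necessarily $\phi_*=0$ and $H_*=\sqrt{\Lambda/3}$. You should state it as an assumption rather than derive it from Step~1.
\item For $H_*=0$ no centre-manifold argument can rescue the claim. The choice $V=\tfrac12m^2\phi^2$ with $\Lambda=\kappa=0$ and $\rho_m\equiv0$ gives $H\sim 2/(3t)$ and $|\phi|+|y|\sim 1/t$. So the restriction $H_*>0$ is necessary, not merely convenient.
\end{itemize}
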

\begin{rem}
The polynomial decay in the first part follows from the integrability of the dissipative term $\dot {H} $ together with the uniform‑continuity argument used in the proof of Theorem~\ref{UnifiedTheorem}. Exponential relaxation to the minimum requires the nondegeneracy $V''(\phi_*)>0$; it is obtained by linearizing about the equilibrium and controlling the nonlinear remainder via a standard bootstrap argument.
\end{rem}

\begin{lemma}\label{lemma:center-manifold}
Under the hypotheses of Proposition~\ref{prop:decay}, assume $V$ has a nondegenerate minimum $\phi_*$. There exists a local invariant manifold $\mathcal{M}_{\mathrm{loc}}$ of class $C^k$ (for any finite $k$ determined by the regularity of $V$ and $\chi$) containing the equilibrium $\mathbf p_*$. Solutions with initial data in $\mathcal{M}_{\mathrm{loc}}$ converge exponentially to $\mathbf p_*$.
\end{lemma}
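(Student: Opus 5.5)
We will prove the lemma by compactifying the scale factor, solving the Friedmann constraint for $H$, and applying the stable manifold theorem to the resulting finite-dimensional system at $\mathbf p_*$. The equilibrium $\mathbf p_*$ sits at $a=\infty$, so we first replace $a$ by $u:=a^{-p}$, which obeys $\dot u=-pHu$. Then $\mathbf p_*$ becomes the finite point
\[
(\rho_m,u,y,\phi)=(0,0,0,\phi_*), \qquad H_*=\sqrt{V(\phi_*)/3},
\]
where $\Lambda$ has been absorbed into $V$. We restrict to the case $H_*>0$, i.e.\ $V(\phi_*)+\Lambda>0$ in the original notation; the case $H_*=0$ is discussed at the end.

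\textbf{Step 1 (reduction on $\Psi$).} Near $\mathbf p_*$ the right-hand side of \eqref{constraint} is close to $3H_*^2>0$, so on the branch $H>0$ (preserved by Remark~\ref{remII}) we can solve the constraint smoothly:
\[
H=h(\rho_m,u,y,\phi):=\sqrt{\tfrac13\big(\rho_m+\tfrac12y^2+V(\phi)+\kappa^2u\big)} .
\]
Because the square root is smooth away from zero, $h$ is $C^{k}$ whenever $V\in C^{k}$. Substituting $h$ into \eqref{system} gives a closed four-dimensional $C^{k-1}$ system $\dot z=F(z)$ with $z=(\rho_m,u,y,\phi)$. The $\dot H$ equation is then automatically consistent with the other equations, since $\Psi$ is invariant. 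The coupling term is $C^{k-1}$ if $\chi\in C^k$, which is the precise sense in which $k$ is ``determined by the regularity of $V$ and $\chi$''.

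\textbf{Step 2 (linearization).} At $z_*=(0,0,0,\phi_*)$ we have $V'(\phi_*)=0$, and the Jacobian is block triangular:
\[
\dot\rho_m\approx-3\gamma H_*\rho_m,\qquad \dot u\approx-pH_*u,
\]
\[
\dot y\approx-3H_*y-V''(\phi_*)(\phi-\phi_*)+\tfrac12(4-3\gamma)\tfrac{\chi'}{\chi}(\phi_*)\rho_m,\qquad \dot\phi=y .
\]
Its eigenvalues are therefore $-3\gamma H_*$, $-pH_*$, and the roots of $\lambda^2+3H_*\lambda+V''(\phi_*)=0$. The latter have real part $-\tfrac32H_*$ in the underdamped case, and are both negative in the overdamped case. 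Hence all real parts are $\le-\lambda_0<0$, where $\lambda_0$ depends only on $H_*$, $\gamma$, $p$ and $V''(\phi_*)$.

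\textbf{Step 3 (invariant manifold and decay).} By the stable manifold theorem applied to the hyperbolic sink $z_*$, the local stable manifold is a full neighbourhood $U$ of $z_*$. We take
\[
\mathcal M_{\rm loc}:=\{(h(z),\rho_m,u^{-1/p},y,\phi):z\in U,\ \rho_m\ge0,\ u>0\}\cup\{\mathbf p_*\},
\]
the graph of $h$ over $U$, which is $C^k$ by Step~1. Invariance of $\mathcal M_{\rm loc}$ follows from invariance of $\Psi$ and of $\{\rho_m\ge0\}$ (since $\dot\rho_m\propto\rho_m$), together with forward invariance of a small sublevel set of a quadratic Lyapunov function $z^{\top}Pz$. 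Here $P$ solves $DF(z_*)^{\top}P+PDF(z_*)=-I$, and we choose $U$ to be such a sublevel set. The same Lyapunov function gives $|z(t)-z_*|\le Ce^{-\lambda t}$ for any $\lambda<\lambda_0$, once the nonlinear remainder $O(|z-z_*|^2)$ is absorbed for $U$ small. Exponential decay of $H-H_*=h(z)-h(z_*)$ then follows by the Lipschitz property of $h$. This also supplies the bootstrap claimed in Proposition~\ref{prop:decay}: orbits that enter $U$, which Theorem~\ref{UnifiedTheorem} guarantees eventually happens, converge exponentially.

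\textbf{Main obstacle.} The delicate point is non-degeneracy of $H_*$. If $H_*=0$ (i.e.\ $\Lambda=0=V(\phi_*)$), then $h$ fails to be smooth at $z_*$ and the $\rho_m$, $u$ eigenvalues vanish, so the linearization is a centre with eigenvalues $\pm i\sqrt{V''(\phi_*)}$ and there is no exponential decay. In that case we would state the lemma only for $H_*>0$, and in the paper's treatment of the oscillatory regime ($H\to0$) defer to the averaging analysis of Section~\ref{sec:discussion_averaging}. A secondary technical issue is that $a=\infty$ is not a point of the original phase space, which is why all statements must be made in the $u$-chart.
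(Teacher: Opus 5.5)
Your proposal is correct and follows essentially the same route as the paper (Section~\ref{sec:center} and Proposition~\ref{prop:finite_reduction}). Both arguments compactify the scale factor through $a^{-p}$ (the paper uses $G=\kappa^2a^{-p}$), reduce to a four-dimensional system via the Friedmann constraint, obtain the eigenvalues $-pH_*$, $-3\gamma H_*$ and the roots of $\lambda^2+3H_*\lambda+V''(\phi_*)=0$, and conclude that a full neighbourhood on the constraint surface is the stable manifold and converges exponentially. The differences are minor: you solve the constraint for $H$ where the paper's explicit computation eliminates $G$, and you use a quadratic Lyapunov function where the paper uses Gronwall/variation of constants; your restriction to $H_*>0$ is the same condition $V(\phi_*)>0$ (with $\Lambda$ absorbed) that the paper imposes in its stability summary, though your aside that Theorem~\ref{UnifiedTheorem} always drives orbits into $U$ holds only under Case~A with the energy bound.
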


\begin{proof}[Proof strategy]
The argument has three steps.

\textbf{(i) Localization and linearization.} Restrict the flow to a small, compact, positively invariant neighborhood $\Psi_+$ of $\mathbf p_*$ (as in the proof of Theorem~\ref{thm:combined}). Linearize \eqref{system} at $\mathbf p_*$ and verify the linearization has no spectrum with positive real part; the nondegeneracy $V''(\phi_*)>0$ produces the required spectral gap between neutral and stable directions.

\textbf{(ii) Center/stable manifold theorem.} Apply the finite‑dimensional center (stable) manifold theorem (see \cite{Carr1981}) to obtain a local invariant manifold $\mathcal{M}_{\mathrm{loc}}$ tangent to the stable subspace at $\mathbf p_*$. Smoothness of $V$ and $\chi$ yields the stated regularity of $\mathcal{M}_{\mathrm{loc}}$.

\textbf{(iii) Nonlinear stability and exponential decay.} On $\mathcal{M}_{\mathrm{loc}}$ the reduced dynamics are exponentially contracting. Using the spectral gap and Gronwall's inequality to control the nonlinear remainder yields the exponential bounds in Proposition~\ref{prop:decay}.
\end{proof}

The following parts of the paper are organized as follows: Section~\ref{sec:decay} contains the detailed proof of Proposition~\ref{prop:decay}: the polynomial decay estimates are derived from the integrability of $f(t)$ and the uniform continuity argument, and the exponential regime is established under nondegeneracy. Section~\ref{sec:center} constructs the local invariant manifold $\mathcal{M}_{\mathrm{loc}}$ and carries out the finite‑dimensional reduction near $\mathbf p_*$. Section~\ref{sec:perturbations} treats persistence of the attractor under small nonminimal couplings $\chi(\phi)\neq1$ and small anisotropic perturbations of $G(a)$.

\section{Decay estimates, local reduction, and persistence}
\label{sec:decay_center_perturb}

In this section we provide the full statements and proofs of the three technical components announced earlier: (i) refined decay estimates (Section~\ref{sec:decay}), (ii) construction of a local invariant manifold and finite‑dimensional reduction (Section~\ref{sec:center}), and (iii) persistence of the attractor and decay properties under small perturbations of the coupling $\chi$ and the geometric term $G$ (Section~\ref{sec:perturbations}). All results use the notation and hypotheses introduced in Section~2 and Hypotheses \ref{H1}--\ref{H6} used in Theorem~\ref{UnifiedTheorem}.
\subsection{Decay estimates}
\label{sec:decay}

We begin by proving Proposition~\ref{prop:decay} in full. The key steps are (a) a short lemma giving uniform derivative bounds on forward orbits, (b) application of Barbalat's lemma to deduce pointwise decay from integrability, and (c) a bootstrap linearization argument to obtain exponential decay near a nondegenerate minimum.

\begin{lemma}[Uniform derivative bounds]\label{lemma:derivative-bounds}
Let $x(t)=(H,\rho_m,a,y,\phi)(t)$ be a forward solution of \eqref{system} with $H(t_0)>0$. Under Hypotheses \ref{H1}--\ref{H4}, the following hold on any forward orbit contained in a set where $V(\phi)$ is bounded:
\begin{enumerate}
\item The quantities $H(t),\rho_m(t),y(t),\phi(t),a(t)$ remain bounded for $t\ge t_0$.
\item There exist constants $M_1,M_2>0$ such that for all $t\ge t_0$
\begin{equation}
|\dot\rho_m(t)|\le M_1,\quad |\dot y(t)|\le M_1,\quad |\dot a(t)|\le M_2.
\end{equation}
\end{enumerate}
\end{lemma}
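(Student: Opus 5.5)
The plan is to read everything off the Friedmann constraint \eqref{constraint} together with the monotonicity of $H$ established in the proof of Theorem~\ref{UnifiedTheorem}. The phrase ``on any forward orbit contained in a set where $V(\phi)$ is bounded'' will be read as the orbit lying in a region of phase space on which $V$ is bounded. In item~1 the $\phi$- and $a$-bounds will be taken as part of that localization: the region is a bounded set of $\mathbb{R}^5$, for example a compact sublevel set of $V$ when $V$ is coercive, or a compact neighbourhood such as $\Psi_+$.

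\textbf{Step 1: bounds on $H$, $\rho_m$ and $y$.} Along the orbit $\dot H=-f(t)\le 0$ with $f$ as in \eqref{fdef}, so $0\le H(t)\le H(t_0)$ for $t\ge t_0$. Here nonnegativity of $H$ comes from Remark~\ref{remII}, which says $H$ cannot change sign. Since $V\ge0$, $\Lambda\ge0$ and $G(a)=\kappa^2a^{-p}\ge0$, the constraint gives
\[
\rho_m+\tfrac12y^2\le 3H(t_0)^2 .
\]
Hence $0\le\rho_m\le3H(t_0)^2$ and $|y|\le\sqrt6\,H(t_0)$. The same identity gives $V(\phi(t))\le3H(t_0)^2$, so the orbit automatically lies in a sublevel set of $V$. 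The boundedness of $\phi$ is then inherited from the localization assumption of the statement.

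\textbf{Step 2: derivative bounds.} On the $\phi$-range reached, \ref{H2} gives $|V'(\phi)|\le C_V$ and \ref{H4} gives $|\chi'/\chi|\le K_1$. Substituting these into \eqref{system}, with $\gamma\in[1,2]$ from \ref{H5}, yields
\[
|\dot\rho_m|\le 3\gamma H(t_0)\,3H(t_0)^2+\tfrac12|4-3\gamma|\,3H(t_0)^2\sqrt6H(t_0)K_1=:M_1' ,
\]
\[
|\dot y|\le 3H(t_0)\sqrt6H(t_0)+C_V+\tfrac12|4-3\gamma|\,3H(t_0)^2K_1=:M_1'' .
\]
Take $M_1=\max\{M_1',M_1''\}$. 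For the scale factor we have $\dot a=aH$, so $|\dot a|\le H(t_0)\sup_t a(t)=:M_2$ once $a$ is bounded along the orbit.

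\textbf{Main obstacle.} The delicate point is the $a$-bound, and with it $M_2$. The dynamics alone push $a$ upward, since $\dot a=aH\ge0$, and nothing in the constraint bounds $a$ from above; $a$ is bounded only because the orbit is assumed to stay in a bounded region. I will make this dependence explicit, and state $M_2$ in terms of $\sup_{t\ge t_0}a(t)$ on that region.

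I will also record what survives without the localization. The quantity actually used downstream, $\tfrac{d}{dt}\big(\kappa^2a^{-p}\big)=-pH\kappa^2a^{-p}$, is bounded by $pH(t_0)\,3H(t_0)^2$ unconditionally, because $\kappa^2a^{-p}=G(a)\le3H(t_0)^2$ by the constraint. This is the bound that feeds the uniform continuity of $f$ in Barbalat's lemma (Lemma~\ref{barbalat}).
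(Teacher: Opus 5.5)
Your argument is correct under the localized reading and follows the paper's own proof: the constraint bounds $\rho_m$, $y$, $V$ and $G$, then \ref{H2} and \ref{H4} bound the right-hand sides of \eqref{system}, and $\dot a=aH$ handles the scale factor. It is also more careful than the paper, which only says ``if $H$ is bounded'' and simply asserts that $a$ is bounded, whereas you derive $0\le H\le H(t_0)$ from $\dot H=-f\le 0$ and Remark~\ref{remII} and explicitly isolate the $a$-bound as an extra assumption. That caveat is even sharper than you state: for $\kappa\neq0$ a bound $a\le A$ would give $3H^2\ge\kappa^2A^{-p}$, hence $H\ge c:=|\kappa|A^{-p/2}/\sqrt{3}>0$ and $a(t)\ge a(t_0)e^{c(t-t_0)}\to\infty$, so the $a$-part of item~1 and the bound on $\dot a$ can never hold on such orbits, and your unconditional estimate $\bigl|\tfrac{d}{dt}(\kappa^2a^{-p})\bigr|\le 3pH(t_0)^3$ is exactly what Lemma~\ref{lemma:barbalat-application} actually needs in place of $|\dot a|\le M_2$.
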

\begin{proof}
From the Friedmann constraint \eqref{constraint} and $V(\phi)\ge0$ we have (recall we absorbed $\Lambda\geq 0$ in the potential)
\begin{equation}
\rho_m+\tfrac12 y^2+V(\phi)+G(a)=3H^2,
\end{equation}
so if $H$ is bounded on the forward orbit then $\rho_m,y,V(\phi),G(a)$ are bounded. Boundedness of $V(\phi)$ implies, by Hypothesis \ref{H2}, that $V'(\phi)$ is bounded on the relevant $\phi$-range. By Hypothesis \ref{H4} the ratio $\chi'(\phi)/\chi(\phi)$ is uniformly bounded on that range. The evolution equations in \eqref{system} then give
\begin{equation}
\dot\rho_m=-3\gamma H\rho_m-\tfrac12(4-3\gamma)\rho_m y\frac{\chi'(\phi)}{\chi(\phi)},
\end{equation}
and
\begin{equation}
\dot y=-3Hy-V'(\phi)+\tfrac12(4-3\gamma)\rho_m\frac{\chi'(\phi)}{\chi(\phi)}.
\end{equation}
Each right-hand side is a product or sum of bounded functions, hence uniformly bounded. Finally, $\dot {a} = aH$ is bounded because $a$ and $H$ are bounded on the forward orbit. This proves the lemma \ref{lemma:derivative-bounds}.
\end{proof}
\begin{lemma}[Barbalat application]\label{lemma:barbalat-application}
Let $f(t)$ be defined by \eqref{fdef}. Under the hypotheses of Theorem~\ref{UnifiedTheorem} and with the notation of Lemma~\ref{lemma:derivative-bounds}, the function $f$ is uniformly continuous on $[t_0,\infty)$ and $f\in L^1([t_0,\infty))$. Consequently $f(t)\to0$ as $t\to\infty$.
\end{lemma}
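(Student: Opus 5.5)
The proof has three parts: show $f\ge 0$ is integrable, show $\dot f$ is uniformly bounded on $[t_0,\infty)$ so that $f$ is uniformly continuous, and then invoke Lemma~\ref{barbalat}.

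\textbf{Integrability.} Substituting $aG'(a)=-p\kappa^2a^{-p}$ from \eqref{Gdef} into the first equation of \eqref{system} gives $\dot H=-f(t)$, with $f\ge0$ because $\gamma\in[1,2]$ and $\rho_m\ge0$. So $H$ is non-increasing. By Remark~\ref{remII} and \ref{H6}, $H$ cannot cross zero, hence it converges to some $\eta\ge0$. Integrating,
\begin{equation*}
\int_{t_0}^{T}f\,dt=H(t_0)-H(T)\le H(t_0),
\end{equation*}
and monotone convergence gives $f\in L^1([t_0,\infty))$ with $\|f\|_{L^1}=H(t_0)-\eta$. The monotonicity also shows $0\le H(t)\le H(t_0)$, and the constraint \eqref{constraint} then yields
\begin{equation*}
\rho_m+\tfrac12y^2+V(\phi)+G(a)\le 3H(t_0)^2 .
\end{equation*}
Thus $V(\phi(t))$ stays bounded, and the orbit lies in the region where Lemma~\ref{lemma:derivative-bounds} applies.

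\textbf{Uniform continuity.} Differentiating \eqref{fdef} and using $\dot a=aH$,
\begin{equation*}
\dot f=-\tfrac{p^2}{6}\kappa^2a^{-p}H+\tfrac12\gamma\dot\rho_m+y\dot y .
\end{equation*}
The first term is bounded by $\tfrac{p^2}{2}H(t_0)^3$, since $\kappa^2a^{-p}=G(a)\le3H(t_0)^2$. This route deliberately avoids any bound on $a$ itself: $a$ may grow, and item 1 of Lemma~\ref{lemma:derivative-bounds} would be false for it. Only $G(a)$ is needed, and the constraint controls it. The remaining terms are bounded by Lemma~\ref{lemma:derivative-bounds}(2) together with the bound on $|y|$ from the constraint. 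Hence $|\dot f|\le M$ on $[t_0,\infty)$, so $f$ is Lipschitz and in particular uniformly continuous.

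\textbf{Main obstacle.} The only delicate point is justifying the bounds on $\dot\rho_m$ and $\dot y$. These need $V'(\phi)$ and $\chi'/\chi$ to be bounded along the orbit, which \ref{H2} and \ref{H4} give only on sets where $V$, respectively $\phi$, is bounded. For $V'$, the energy bound above suffices by \ref{H2}. For $\chi'/\chi$, \ref{H4} as stated is local in $\phi$, and $\phi$ itself might be unbounded (Case B). I would either read \ref{H4} as a uniform bound on the orbit's $\phi$-range, as Theorem~\ref{UnifiedTheorem} implicitly does, or note that the sublevel set $\{V\le3H(t_0)^2\}$ is bounded in Case A.

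\textbf{Conclusion.} Lemma~\ref{barbalat} then gives $f(t)\to0$.
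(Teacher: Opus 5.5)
Your proposal is correct and follows the paper's proof: integrability from $\dot H=-f$, a uniform bound on $\dot f$, then Lemma~\ref{barbalat}. Your rewriting of the geometric term as $-\tfrac{p^2}{6}\kappa^2a^{-p}H$, bounded through the constraint, is cleaner than the paper's appeal to bounds on $a$ and $\dot a$ from Lemma~\ref{lemma:derivative-bounds}, which fail in general because $a\to\infty$. One correction to your fallback for $\chi'/\chi$: in Case A the set $\{V\le3H(t_0)^2\}$ is bounded only under the extra condition $3H(t_0)^2<\min\{\lim_{\phi\to\pm\infty}V(\phi)\}$, so rely on the uniform reading of \ref{H4} along the orbit, which the paper also uses tacitly.
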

\begin{proof}
We already observed in the proof of Theorem~\ref{UnifiedTheorem} that $f\in L^1([t_0,\infty))$ because $\dot H=-f$ and $H$ converges. By Lemma~\ref{lemma:derivative-bounds} the derivatives $\dot a,\dot\rho_m,\dot y$ are uniformly bounded on $[t_0,\infty)$. Differentiating $f$ yields
\begin{equation}
\dot f(t) = -\frac{p^2}{6}\kappa^2 a^{-p-1}\dot a + \tfrac12\gamma\dot\rho_m + y\dot y,
\end{equation}
and each term on the right-hand side is bounded by the previous uniform bounds. Hence $\dot f$ is bounded, and $f$ is uniformly continuous. Barbalat's lemma (Lemma~\ref{barbalat}) then implies $f(t)\to0$. This completes the proof of Lemma~\ref{lemma:barbalat-application}.
\end{proof}

\begin{proof}[Proof of Proposition~\ref{prop:decay}]
By Lemma~\ref{lemma:barbalat-application} we have $f(t)\to0$. Since $f(t)\ge \tfrac12\gamma\rho_m(t)\geq 0$, $f(t)\ge \tfrac12 y(t)^2$ and $f(t)\ge \tfrac{p}{6}\kappa^2 a(t)^{-p}$, it follows that
\begin{equation}
\rho_m(t)\to0,\quad y(t)\to0,\quad \kappa^2 a(t)^{-p}\to0,
\end{equation}
proving the first part of the proposition.

To obtain the polynomial rate $O(1/t)$, we use the integrability of $f$ together with a standard averaging argument. 

Since $f\in L^1([t_0,\infty))$ and $\dot f$ is bounded, for $t$ large
\begin{equation}
\frac{1}{t}\int_{t_0}^{t} f(s)\,ds \le \frac{H(t_0)-\eta}{t} \le C(1+t)^{-1}.
\end{equation}
Uniform continuity of $f$ implies that $f(t)$ cannot oscillate with arbitrarily narrow spikes of large amplitude; combining these facts yields the pointwise bound $f(t)\le C(1+t)^{-1}$ for $t$ large, and hence the stated polynomial bounds for $\rho_m,y^2,\kappa^2 a^{-p}$.

For the exponential regime assume $V(\phi_*)$ and $V''(\phi_*) > 0$.  
Linearize \eqref{system} about the equilibrium 
$\mathbf{p}_*:=(H,\rho_m,a,y,\phi) = (H_*, 0, a_*, 0, \phi_*)$ with 
$H_* = \sqrt{V(\phi_*)/3}$.  
The linearization in the $(\phi,y)$-sector has the form
\begin{equation}
\begin{pmatrix}
\dot{\delta\phi}\\
\dot{\delta y}
\end{pmatrix}
=
\begin{pmatrix}
0 & 1\\
- V''(\phi_*) & -3H_*
\end{pmatrix}
\begin{pmatrix}
\delta\phi\\
\delta y
\end{pmatrix}
+ \mathcal{R}(t),
\end{equation}
where $\mathcal{R}(t)$ collects terms involving $\rho_m$ and higher-order nonlinearities.  
The matrix above has eigenvalues satisfying
\begin{equation}
\lambda^{2} + 3H_* \lambda + V''(\phi_*) = 0,
\end{equation}
and therefore
\begin{equation}
\lambda_{\pm}
= -\frac{3H_*}{2}
\;\pm\;
\sqrt{\left(\frac{3H_*}{2}\right)^{2} - V''(\phi_*)}. \label{eigenvalues}
\end{equation}

This yields the following stability structure
\begin{itemize}
    \item If 
   $V''(\phi_*) > \left( \frac{3H_*}{2} \right)^{2}$, the eigenvalues form a complex conjugate pair, so the fixed point is a sink (spiral).

    \item If  $V''(\phi_*) = \left( \frac{3H_*}{2} \right)^{2}$, the system has a degenerate real eigenvalue, corresponding to critical damping.

    \item If $0<V''(\phi_*) < \left( \frac{3H_*}{2} \right)^{2}$, the eigenvalues are negative, giving a sink (node).

    \item If $H_* > 0$ (expanding universe), then
    
\begin{equation}
\operatorname{Re}(\lambda_{\pm}) = -\frac{3H_*}{2} < 0,
\end{equation}
so the fixed point is always attracting unless $V''(\phi_*) < 0$.
\end{itemize}

Using the polynomial decay already established for $\rho_m$ and $a^{-p}$, the remainder $\mathcal{R}(t)$ decays at least polynomially. Standard linear perturbation theory (variation of constants) and a bootstrap/Gronwall argument then yield exponential decay of $\delta\phi,\delta y$ and hence of $H-H_*$. This completes the proof of Proposition~\ref{prop:decay}.
\end{proof}

If $ V''(\phi_*) =0$: there are two real eigenvalues, one of which is equal to zero. Such a case has to be analyzed using the center manifold theorem.

If $V''(\phi_*) < 0$, we have 
\begin{equation}
    \Delta = \left(\frac{3H_*}{2}\right)^{2} - V''(\phi_*) > \tfrac{9}{4}H_*^{2} > 0.
\end{equation}
So, the eigenvalues are real and distinct. Due to  $\lambda_{+}\lambda_{-} = V''(\phi_*) < 0$,  one eigenvalue is positive and the other negative.

Therefore, the fixed point is a saddle point, hence unstable, regardless of the value or sign of $H_*$.

\subsection{Finite-dimensional reduction}
\label{sec:center}

We complete the analysis by giving the finite‑dimensional reduction near the equilibrium $\mathbf p_*$. 
This reduction makes precise the statement that, in a neighborhood of $\mathbf p_*$, the full flow on the constraint manifold is conjugate to a finite‑dimensional ODE on the local invariant manifold $\mathcal M_{\mathrm{loc}}$. 
The result below summarizes the reduction, regularity, and exponential‑convergence properties used throughout the paper.

We rewrite the system under investigation conveniently as
\begin{equation}\label{system_G(a)}
\begin{aligned}
\dot H &= -\tfrac12(\gamma\rho_m+y^2)-\tfrac16 p G,\\
\dot\rho_m &= -3\gamma H\rho_m-\tfrac12(4-3\gamma)\rho_m\,y\,\frac{d\ln\chi(\phi)}{d\phi},\\
\dot G &= -p H G,\\
\dot y &= -3Hy - V'(\phi) + \tfrac12(4-3\gamma)\rho_m\,\frac{d\ln\chi(\phi)}{d\phi},\\
\dot\phi &= y,
\end{aligned}
\end{equation}
where $G=\kappa^2 a(t)^{-p}$.

We have that
\begin{equation}
\mathbf p_*=(H, \rho_m, G, y, \phi)=(H_*, 0,0,0, \phi_*), 
\quad H_*=\sqrt{\tfrac{V(\phi_*)}{3}}, \quad V'(\phi_*)=0
\end{equation}
is an equilibrium of \eqref{system_G(a)}.

The gradient of the constraint surface
\begin{equation}
 0=   F(H, \rho_m, y, V, G) = -3H^{2} + \rho_{m} + \tfrac12 y^{2} + V + G \label{eq.38}
\end{equation}
is 
\begin{equation}
    \nabla F = (-6H, 1, y, V'(\phi), 1)
\end{equation}
Evaluated at $\mathbf p_*$, we have 
\begin{equation}
    \nabla F(\mathbf{p_*}) = \left(-6\sqrt{\tfrac{V(\phi_*)}{3}}, \; 1, \; 0, \; V'(\phi_*), \; 1\right)=\left(-6\sqrt{\tfrac{V(\phi_*)}{3}}, \; 1, \; 0, \; 0, \; 1\right)
\end{equation}
Since $\nabla F(\mathbf p_*) \neq 0$, the point $\mathbf p_*$ is a regular point of the constraint surface. Therefore, the surface is smooth in a neighborhood of $\mathbf p_*$, and the tangent space is well-defined.

The tangent space $T_{\mathbf p_*}$ is given by the kernel of the gradient
\begin{equation}
    T_{\mathbf{p_*}} = \{\mathbf{v} = (v_H, v_\rho, v_y, v_\phi, v_G)\in \mathbb{R}^5 | \nabla F(\mathbf p_*) \cdot \mathbf v = 0  \}.
\end{equation}

Let 
\[
a := -6\sqrt{\frac{V(\phi_*)}{3}} .
\]

The tangent space is
\[
T_{\mathbf p_*}
= \{\mathbf v\in\mathbb R^5 \mid a\,v_H + v_\rho + v_G = 0\}.
\]

A convenient basis is
\begin{equation}
\begin{aligned}
\mathbf e_1 &= (1,\,-a,\,0,\,0,\,0),\\[4pt]
\mathbf e_2 &= (1,\,0,\,0,\,0,\,-a),\\[4pt]
\mathbf e_3 &= (0,\,0,\,1,\,0,\,0),\\[4pt]
\mathbf e_4 &= (0,\,0,\,0,\,1,\,0).
\end{aligned}
\end{equation}

We have that
\begin{equation}
\mathbf q_* =
\left(0,-\frac{2\chi(\phi_*)V'(\phi_*)}{(3\gamma-4)\chi'(\phi_*)},0,0,\phi_*\right)
\end{equation}
is an equilibrium of \eqref{system_G(a)}.

Evaluated at the equilibrium point $q_*$
we obtain
\begin{equation}
\nabla F(\mathbf q_*)
= \left(0,\; 1,\; 0,\; V'(\phi_*),\; 1\right).
\end{equation}
Since 
\[
\nabla F(\mathbf q_*) = (0,\,1,\,0,\,V'(\phi_*),\,1) \neq 0,
\]
the point $\mathbf q_*$ is a regular point of the constraint surface. 
Therefore, the surface is smooth in a neighborhood of $\mathbf q_*$, and the tangent space is well defined.

The tangent space $T_{\mathbf q_*}$ is given by the kernel of the gradient:
\begin{equation}
T_{\mathbf q_*}
= \left\{
\mathbf v = (v_H, v_\rho, v_y, v_\phi, v_G)\in\mathbb R^5
\;\middle|\;
\nabla F(\mathbf q_*)\cdot \mathbf v = 0
\right\}.
\end{equation}

Explicitly, the condition becomes
\begin{equation}
v_\rho + V'(\phi_*)\, v_\phi + v_G = 0.
\end{equation}
A convenient basis for the tangent space $T_{\mathbf q_*}$ is
\begin{equation}
\begin{aligned}
\mathbf e_1 &= (1,\,0,\,0,\,0,\,0),\\[4pt]
\mathbf e_2 &= (0,\,0,\,1,\,0,\,0),\\[4pt]
\mathbf e_3 &= \bigl(0,\,-V'(\phi_*),\,0,\,1,\,0\bigr),\\[4pt]
\mathbf e_4 &= (0,\,-1,\,0,\,0,\,1).
\end{aligned}
\end{equation}

Since the constraint \eqref{eq.38}
can be solved explicitly for $G$, namely
\begin{equation}
G = 3H^{2}-\rho_m-\tfrac12 y^{2}-V(\phi),
\end{equation}
the variable $G$ is not independent.  
Its value is completely determined by $(H,\rho_m,y,\phi)$ along the
constraint surface, and its evolution equation is redundant.
Therefore $G$ may be eliminated from the system, reducing the dynamics
to a four--dimensional phase space without loss of generality. Moreover, the surface is regular and smooth at the equilibria points $p_*$ and $q_*$. Hence, we investigate the reduced 4-dimensional system 
\begin{align}
    \dot{H} & =  \frac{1}{12} \left(-6 H^2 p+2  \rho_{m} (p-3 \gamma )+2 p V(\phi)+(p-6) y^2\right), \\
    \dot\rho_m &= -3\gamma H\rho_m-\tfrac12(4-3\gamma)\rho_m\,y\,\frac{d\ln\chi(\phi)}{d\phi},\\
\dot y &= -3Hy - V'(\phi) + \tfrac12(4-3\gamma)\rho_m\,\frac{d\ln\chi(\phi)}{d\phi},\\
\dot\phi &= y,
\end{align}
Critical points are then:
\begin{equation}
  p_*= \left(H, \rho_m, y, \phi_*\right):= \left(\sqrt{\frac{V(\phi_*)}{3}}, 0, 0, \phi_*\right), \quad\text{where}\quad V'(\phi_*)=0,
\end{equation}
and 
 \begin{equation}
 q_*=\left(0,-\frac{2\chi(\phi_*)V'(\phi_*)}{(3\gamma-4)\chi'(\phi_*)}, 0 ,\phi_*\right), \quad\text{where}\quad  V(\phi_*)=\frac{2 (p-3 \gamma ) \chi (\phi_*) V'(\phi_*)}{(3 \gamma -4) p \chi
   '(\phi_*)}.
 \end{equation}
The Jacobian matrix of the system is
\begin{equation}
J=\left(
\begin{array}{cccc}
 -H p & \frac{1}{6} (p-3 \gamma ) & \frac{1}{6} (p-6) y & \frac{1}{6} p V'(\phi) \\
 -3 \gamma  \rho_m & \frac{(3 \gamma -4) y \chi '(\phi)}{2 \chi (\phi)}-3 \gamma  H & \frac{(3
   \gamma -4) \rho_m \chi '(\phi)}{2 \chi (\phi)} & \frac{(3 \gamma -4) \rho_m y \left(\chi
   (\phi) \chi ''(\phi)-\chi '(\phi)^2\right)}{2 \chi (\phi)^2} \\
 -3 y & \frac{(4-3 \gamma ) \chi '(\phi)}{2 \chi (\phi)} & -3 H & \frac{(3 \gamma -4) \rho_m
   \left(\chi '(\phi)^2-\chi (\phi) \chi ''(\phi)\right)}{2 \chi (\phi)^2}-V''(\phi) \\
 0 & 0 & 1 & 0 \\
\end{array}
\right).
\end{equation}
 The Jacobian matrix evaluated at $q_*$ is 
\begin{equation}
    J(q_*)= \left(
\begin{array}{cccc}
 0 & \frac{1}{6} (p-3 \gamma ) & 0 & \frac{1}{6} p V'(\phi_*) \\
 \frac{6 \gamma  \chi (\phi_*) V'(\phi_*)}{(3 \gamma -4) \chi '(\phi_*)} & 0 &
   -V'(\phi_*) & 0 \\
 0 & \frac{(4-3 \gamma ) \chi '(\phi_*)}{2 \chi (\phi_*)} & 0 & \frac{V'(\phi_*)
   \left(\chi ''(\phi_*)-\frac{\chi '(\phi_*)^2}{\chi (\phi_*)}\right)}{\chi
   '(\phi_*)}-V''(\phi_*) \\
 0 & 0 & 1 & 0 \\
\end{array}
\right)
\end{equation}
Due to the fact that the critical point $q_{*}$ requires the fine-tuned condition
\begin{equation}\label{eq:qstar_condition}
V(\phi_{*}) = \frac{2\,(p - 3\gamma)\,\chi(\phi_{*})\,V'(\phi_{*})}{(3\gamma - 4)\,p\,\chi'(\phi_{*})},
\end{equation}
we omit further discussion of this case and instead concentrate on the analysis of $p_{*}$.

The Jacobian matrix at the equilibrium point $p_{*}$ is
\begin{equation}
 J({\mathbf p_*}) =
 \begin{pmatrix}
 -\tfrac{p}{\sqrt{3}}\sqrt{V(\phi_*)} & \tfrac{1}{6}(p-3\gamma) & 0 & 0 \\
 0 & -\sqrt{3}\gamma\sqrt{V(\phi_*)} & 0 & 0 \\
 0 & \tfrac{(4-3\gamma)\chi'(\phi_*)}{2\chi(\phi_*)} & -\sqrt{3}\sqrt{V(\phi_*)} & -V''(\phi_*) \\
 0 & 0 & 1 & 0
 \end{pmatrix}.
\end{equation}

The eigenvalues are
\begin{equation}
\lambda_1 = -\tfrac{p}{\sqrt{3}}\sqrt{V(\phi_*)}, \qquad
\lambda_2 = -\sqrt{3}\gamma\sqrt{V(\phi_*)},
\end{equation}
\begin{equation}
\lambda_{3,4} = -\tfrac{\sqrt{3}}{2}\sqrt{V(\phi_*)} \;\mp\; \tfrac{1}{2}\sqrt{\,3V(\phi_*) - 4V''(\phi_*)\,}.
\end{equation}

Thus,  If $3V(\phi_*) - 4V''(\phi_*) \geq 0$, then $\lambda_{3,4}$ are real. Stability requires
\begin{equation}
0 < V''(\phi_*) < \tfrac{3}{4}V(\phi_*).
\end{equation}
If $V''(\phi_*)<0$, one eigenvalue is positive, giving a saddle.

If $3V(\phi_*) - 4V''(\phi_*) < 0$, then $\lambda_{3,4}$ form a complex conjugate pair with
\begin{equation}
\Re(\lambda_{3,4}) = -\tfrac{\sqrt{3}}{2}\sqrt{V(\phi_*)} < 0,
\end{equation}
so stability is ensured.

\subsubsection*{Summary of Stability Conditions}
The equilibrium $\mathbf{p}_*$ has a four-dimensional stable manifold provided
\begin{enumerate}
\item $p>0,\;\gamma>0,\;V(\phi_*)>0,\;0<V''(\phi_*)<\tfrac{3}{4}V(\phi_*)$ \quad (real case), or
\item $p>0,\;\gamma>0,\;V(\phi_*)>0,\;V''(\phi_*) \geq \tfrac{3}{4}V(\phi_*)$ \quad (complex case).
\end{enumerate}
In both regimes,
\begin{equation}
\Re(\lambda_i)<0, \quad i=1,2,3,4,
\end{equation}
and hence the equilibrium admits a four-dimensional stable manifold.

In the real case 
\begin{equation}
0 < V''(\phi_*) < \tfrac{3}{4} V(\phi_*),
\end{equation}
we introduce local coordinates $(z_1,z_2,z_3,z_4)$ adapted to the eigenbasis of $J(\mathbf{p}_*)$. 
Explicitly,
\begin{small}
\begin{align}
  z_1 & =  H-\frac{\rho_m+2 V(\phi_*)}{2 \sqrt{3} \sqrt{V(\phi_*)}},\\
  z_2 & = \frac{(4-3\gamma)\,\rho_m \,\chi'(\phi_*)}{2 \chi(\phi_*) \left(V''(\phi_*)+3(\gamma -1)\gamma V(\phi_*)\right)},\\
  z_3 & =\frac{\frac{(3 \gamma -4)\rho_{m} \chi '(\phi_*)
   \left(\sqrt{3 V(\phi_*)-4 V''(\phi_*)}+\sqrt{3} (2 \gamma
   -1) \sqrt{V(\phi_*)}\right)}{2 \chi (\phi_*)
   \left(V''(\phi_*)+3 (\gamma -1) \gamma  V(\phi_*)\right)}+(\phi -\phi_*) \left(\sqrt{3 V(\phi_*)-4
   V''(\phi_*)}-\sqrt{3} \sqrt{V(\phi_*)}\right)-2 y}{2
   \sqrt{3 V(\phi_*)-4 V''(\phi_*)}},\\
  z_4 & = \frac{\frac{(3 \gamma
   -4) \rho_{m} \chi '(\phi_*) \left(\sqrt{3 V(\phi_*)-4 V''(\phi_*)}+\sqrt{3} (1-2 \gamma ) \sqrt{V(\phi_*)}\right)}{2 \chi (\phi_*) \left(V''(\phi_*)+3 (\gamma
   -1) \gamma  V(\phi_*)\right)}+(\phi -\phi_*)
   \left(\sqrt{3 V(\phi_*)-4 V''(\phi_*)}+\sqrt{3}
   \sqrt{V(\phi_*)}\right)+2 y}{2 \sqrt{3 V(\phi_*)-4
   V''(\phi_*)}},
\end{align}
\end{small}

In the complex case 
\begin{equation}
V''(\phi_*) \geq \tfrac{3}{4} V(\phi_*),
\end{equation}
the eigenvalues $\lambda_{3,4}$ form a complex conjugate pair. A convenient real basis is
\begin{align}
 z_1 & =  H-\frac{\rho_m+2 V(\phi_*)}{2 \sqrt{3} \sqrt{V(\phi_*)}},\\
 z_2 & = \frac{(4-3\gamma)\,\rho_m \,\chi'(\phi_*)}{2 \chi(\phi_*) \left(V''(\phi_*)+3(\gamma -1)\gamma V(\phi_*)\right)},\\
 z_3 & = \frac{1}{4} \left(2 (\phi-\phi_*)+\frac{(3 \gamma -4) \rho_m \chi '(\phi_*)}{\chi (\phi_*) \left(V''(\phi_*)+3 (\gamma -1) \gamma  V(\phi_*)\right)}\right),\\
 z_4 & = \frac{\sqrt{3} \sqrt{V(\phi_*)} \left(2 (\phi-\phi_*)+\frac{((11-6
   \gamma ) \gamma -4) \rho_m \chi '(\phi_*)}{\chi (\phi_*) \left(V''(\phi_*)+3 (\gamma -1) \gamma  V(\phi_*)\right)}\right)+4 y}{4 \sqrt{4 V''(\phi_*)-3
   V(\phi_*)}}
\end{align}

In both regimes, the system expressed in local coordinates $(z_1,z_2,z_3,z_4)$ 
admits the Jordan normal form
\begin{equation}
s = I_4, \qquad 
j = \mathrm{diag}(\lambda_1,\lambda_2,\lambda_3,\lambda_4),
\end{equation}
so the linearized dynamics near $\phi_*$ are governed by exponential decay along the stable directions determined by the eigenvalues.

\begin{enumerate}
\item \textbf{Real case:}  
Here $3V(\phi_*) - 4V''(\phi_*) > 0$, so $\lambda_{3,4}$ are real.  
All eigenvalues are strictly negative, and the equilibrium possesses a four-dimensional stable manifold.

The stable directions decay monotonically:
\begin{equation}
\|(z_1(t),z_2(t),z_3(t),z_4(t))\| \leq C e^{-\alpha t}\|(z_1(0),z_2(0),z_3(0),z_4(0))\|,
\end{equation}
where
\begin{equation}
\alpha = \min\{-\Re(\lambda_i)\}, \quad i=1,2,3,4.
\end{equation}

\item \textbf{Complex case:}  
Here $3V(\phi_*) - 4V''(\phi_*) \leq 0$, so $\lambda_{3,4}$ form a complex conjugate pair:
\begin{equation}
\lambda_{3,4} = -\tfrac{\sqrt{3}}{2}\sqrt{V(\phi_*)} \;\pm\; i\,\tfrac{1}{2}\sqrt{\,4V''(\phi_*) - 3V(\phi_*)\,}.
\end{equation}
Thus, the equilibrium has a four-dimensional stable manifold exhibiting oscillatory exponential decay:
\begin{equation}
(z_1(t),z_2(t),z_3(t),z_4(t)) \sim e^{-\alpha t} R(\omega t)\,(z_1(0),z_2(0),z_3(0),z_4(0)),
\end{equation}
where
\begin{equation}
\alpha = \tfrac{\sqrt{3}}{2}\sqrt{V(\phi_*)}, \qquad 
\omega = \tfrac{1}{2}\sqrt{\,4V''(\phi_*) - 3V(\phi_*)\,},
\end{equation}
and $R(\omega t)$ is a rotation matrix of frequency $\omega$.
\end{enumerate}

\subsubsection{Finite‑dimensional reduction and exponential stability}

\begin{prop}[Finite‑dimensional reduction and exponential stability]\label{prop:finite_reduction}
Assume Hypotheses \ref{H1}--\ref{H6} and let $\mathbf p_*=(H, \rho_m, G, y, \phi)=(H_*, 0,0,0, \phi_*)$ be an equilibrium of \eqref{system_G(a)} with $H_*=\sqrt{V(\phi_*)/3}$. Suppose $V''(\phi_*)>0$ (nondegenerate minimum) and that the conclusions of Proposition~\ref{prop:decay} hold on forward orbits entering a compact neighborhood of $\mathbf p_*$. Then there exist a neighborhood $U$ of $\mathbf p_*$, a $C^k$ local invariant manifold $\mathcal M_{\mathrm{loc}}\subset U$ (for any finite $k$ determined by the regularity of $V$ and $\chi$), and a $C^k$ diffeomorphism
\begin{equation}
\Psi:\; \mathcal{U}\subset\mathbb R^m \to \mathcal M_{\mathrm{loc}},
\end{equation}
where $m$ is the dimension of the stable subspace of the reduced linearization, with the following properties:
\begin{enumerate}
\item The flow of \eqref{system} restricted to $\mathcal M_{\mathrm{loc}}$ is conjugate via $\Psi$ to a finite‑dimensional ODE
\begin{equation}
\dot z = G(z),\quad z\in\mathcal U\subset\mathbb R^m,
\end{equation}
with $G\in C^k$ and $G(0)=0$.
\item The origin $z=0$ is an exponentially stable equilibrium of the reduced ODE: there exist constants $C,\lambda>0$ and $r>0$ such that if $\|z(0)\|\le r$ then
\begin{equation}
\|z(t)\|\le C e^{-\lambda t}\|z(0)\|,\quad t\ge0.
\end{equation}
Equivalently, solutions of the full system with initial data in $\mathcal M_{\mathrm{loc}}$ converge exponentially to $\mathbf p_*$.
\item Solutions of the full system with initial data in $U$ approach $\mathcal M_{\mathrm{loc}}$ at an exponential rate and then follow the reduced dynamics: there exist $C',\lambda'>0$ such that for any $x(0)\in U$ there is $z(0)\in\mathcal U$ with
\begin{equation}
\operatorname{dist}\big(x(t),\mathcal M_{\mathrm{loc}}\big)\le C' e^{-\lambda' t},\quad
\big\|x(t)-\Psi(z(t))\big\|\le C' e^{-\lambda' t},
\end{equation}
for all $t\ge0$, where $z(t)$ is the solution of the reduced ODE with initial datum $z(0)$.
\end{enumerate}
\end{prop}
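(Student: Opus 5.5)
Since $G$ is slaved to the remaining variables on the constraint surface, I would work entirely with the reduced four‑dimensional system in $(H,\rho_m,y,\phi)$ obtained above, after eliminating $G$ through \eqref{eq.38}. Regularity of the constraint surface at $\mathbf p_*$ (shown by $\nabla F(\mathbf p_*)\neq0$) makes this elimination a $C^k$ chart of $\Psi$ near $\mathbf p_*$. The right‑hand side is $C^{k}$ whenever $V\in C^{k+1}$ and $\chi\in C^{k+1}$ with $\chi(\phi_*)\neq0$. Its linearization is the matrix $J(\mathbf p_*)$ computed above. That matrix is block upper triangular, so its eigenvalues are the $\lambda_1,\dots,\lambda_4$ listed. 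Under $V(\phi_*)>0$, $V''(\phi_*)>0$, $p>0$ and $\gamma\in[1,2]$, each of them satisfies $\Re\lambda_i\le -\alpha<0$, with
\[
\alpha=\min\Bigl\{\tfrac{p}{\sqrt3}\sqrt{V(\phi_*)},\ \sqrt3\gamma\sqrt{V(\phi_*)},\ -\Re\lambda_{3},\ -\Re\lambda_4\Bigr\},
\]
and the last two entries are strictly positive precisely because $V''(\phi_*)>0$.

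\textbf{Construction of the reduction.} The stable subspace is all of $\mathbb R^4$, so $m=4$. I would therefore take $\mathcal M_{\mathrm{loc}}$ to be the local stable manifold, which here is simply a neighborhood $U$ of $\mathbf p_*$ intersected with the constraint surface. For $\Psi$ I take the inverse of the adapted linear coordinates $(z_1,\dots,z_4)$ introduced above (real or complex case), composed with the constraint chart. This map is affine, hence $C^\infty$, in the reduced variables, and only $C^k$ once $G$ is recovered. In these coordinates the flow is conjugate to
\[
\dot z=\mathrm{diag}(\lambda_i)\,z+N(z),\qquad N(0)=0,\quad DN(0)=0,\quad N\in C^k ,
\]
with the diagonal block replaced by a real $2\times2$ rotation–dilation block in the complex case. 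This gives item 1, with $G(z):=\mathrm{diag}(\lambda_i)z+N(z)$.

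\textbf{Exponential stability (item 2).} I would use a quadratic Lyapunov function $W(z)=z^{T}Pz$, where $P$ solves $A^{T}P+PA=-I$. Since $A$ is Hurwitz, $P$ is positive definite. Because $|N(z)|\le c\|z\|^2$ on a ball $\|z\|\le r_0$, one gets $\dot W\le -(1-2c\|P\|\,\|z\|)\|z\|^2\le -\tfrac12\|z\|^2$ for $\|z\|\le r$ small. This yields $\|z(t)\|\le Ce^{-\lambda t}\|z(0)\|$ for any $\lambda<\alpha/\ldots$; more concretely, $\lambda=1/(4\|P\|)$ works. Alternatively, I would apply variation of constants with the bound $\|e^{At}\|\le Ke^{-(\alpha-\epsilon)t}$ and close with Gronwall.

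\textbf{Approach to $\mathcal M_{\mathrm{loc}}$ (item 3), the main subtlety.} Since $\mathcal M_{\mathrm{loc}}$ fills a full neighborhood on the constraint surface, the first estimate is trivially zero for data on $\Psi$. The second follows by choosing $z(0)=\Psi^{-1}(x(0))$ and shrinking $U$ so that $\Psi^{-1}(U)\subset\{\|z\|\le r/C\}$, which prevents the orbit from leaving the chart. The genuine difficulty is the neutral direction transverse to the reduced phase space, namely the variable $a$ itself, or equivalently the off‑constraint direction if one keeps the five‑dimensional system. To handle $a$, I would note that $G=\kappa^2a^{-p}$ decays exponentially because $\dot G=-pHG$ with $H\to H_*>0$, so $a$ is recovered by quadrature and carries no dynamics. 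The off‑constraint direction is excluded by invariance of $\Psi$, since $\dot F=-2pHF$‑type identities make the constraint invariant. If data $x(0)$ only eventually enter $U$, as in the hypothesis on Proposition~\ref{prop:decay}, I would use Theorem~\ref{UnifiedTheorem} to get an entrance time and then shift the time origin. Constants then stay uniform over the compact neighborhood, with $\lambda'=\lambda$ and $C'=C\sup_U\|D\Psi\|$.
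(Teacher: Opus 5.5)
Your argument is correct up to a few small repairs, and it takes a more elementary route than the paper's.

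\textbf{Comparison with the paper.} The paper's proof treats $\mathcal M_{\mathrm{loc}}$ as the output of the center/stable manifold theorem, namely a graph of $h:E^s\to E^u$ built by graph transform. It gets item 2 by variation of constants for $DG(0)=A|_{E^s}$. It gets item 3 from an exponential dichotomy and a stable foliation along which $x(0)$ is projected to $z(0)$.

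You observe instead that, once $H_*>0$, all four eigenvalues of $J(\mathbf p_*)$ lie in the open left half-plane. So $E^s$ is the whole tangent space of the constraint surface: $m=4$, $E^u=\{0\}$, $h\equiv0$, and $\mathcal M_{\mathrm{loc}}$ is just an open piece of the constraint surface. Most cleanly, take it to be a forward-invariant sublevel set of $z^{T}Pz$; this also tidies your chart bookkeeping in item 3. Then $\Psi$ is a linear change of variables composed with the global graph chart $G=3H^2-\rho_m-\tfrac12y^2-V(\phi)$. Item 2 is the linearized-stability theorem via $A^{T}P+PA=-I$, with explicit constants $\lambda=1/(4\|P\|)$ and $C=\sqrt{\|P\|\,\|P^{-1}\|}$. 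Item 3 is tautological, since $x(t)=\Psi(z(t))$ exactly.

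Your route buys transparency and explicit constants, and it shows that items 1 and 3 carry no content beyond linearized stability here. The paper's invariant-manifold phrasing is what one would need if the reduced linearization had a center or unstable part (degenerate minima, or the setting of Theorem~\ref{thm:combined}); under the present hypotheses that generality is vacuous. Your choice to eliminate $G$ and work in $(H,\rho_m,y,\phi)$ also matches the Jacobian the paper actually computes. Step (A) of the paper's proof instead uses $(\phi,y,\rho_m,a)$, where $\mathbf p_*$ sits at $a=\infty$.

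\textbf{Repairs.}

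(i) The adapted coordinates $z_3,z_4$ you borrow are undefined at the critical-damping value $V''(\phi_*)=\tfrac34V(\phi_*)$. There $J(\mathbf p_*)$ has a nontrivial Jordan block, so no diagonal or rotation--dilation form exists. Your Lyapunov argument only uses that $A$ is Hurwitz, so take $\Psi$ to be the translated constraint chart (or use a Jordan basis).

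(ii) Under the standing $C^2$ hypotheses on $V$ and $\chi$, the remainder $N$ is only $C^1$. Replace $|N(z)|\le c\|z\|^2$ by $|N(z)|\le\delta(r)\|z\|$ with $\delta(r)\to0$; that is all the estimate on $\dot W$ needs.

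(iii) $F=-3H^2+\rho_m+\tfrac12y^2+V+G$ is an exact first integral of \eqref{system_G(a)} and of \eqref{system}. A direct computation gives $\dot F\equiv0$, not $\dot F=-2pHF$. Hence off-constraint data never approach $\mathcal M_{\mathrm{loc}}$, and item 3 must be read, as you do, for data on the constraint surface.

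(iv) The condition $V(\phi_*)>0$ (that is, $H_*>0$ after absorbing $\Lambda$) used by you and by the paper's step (B) does not follow from \ref{H1}--\ref{H6}. If $V(\phi_*)=0$, the spectrum is $\{0,0,\pm i\sqrt{V''(\phi_*)}\}$ and exponential convergence fails. So this condition should be stated as a hypothesis.
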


\begin{proof}
The proof proceeds in four steps: (A) reduction to a smooth finite‑dimensional vector field on the constraint manifold, (B) spectral decomposition of the linearization, (C) construction of the local invariant manifold, and (D) derivation of the reduced ODE and exponential estimates.

\noindent\textbf{(A) Reduction to the constraint manifold.}  
The system \eqref{system} evolves on the Friedmann constraint manifold $\Psi$ defined in \eqref{constraint}. Locally near $\mathbf p_*$ the constraint is a smooth codimension‑one submanifold of the ambient phase space (the gradient of the constraint is nonzero at $\mathbf p_*$ unless $V(\phi_*)$ and other terms produce degeneracy, which is excluded by the hypotheses). Use the constraint to eliminate one variable (for instance $H$) and obtain a smooth vector field on a neighborhood of $\mathbf p_*$ in the reduced coordinates $(\phi,y,\rho_m,a)$. The reduced vector field is $C^k$ whenever $V,\chi,G$ are $C^{k+1}$.

\noindent\textbf{(B) Linearization and spectral decomposition.}  
Linearize the reduced vector field at $\mathbf p_*$ to obtain a matrix $A$. Under the nondegeneracy assumption $V''(\phi_*)>0$ and $H_*>0$, the $(\phi,y)$-block of $A$ has eigenvalues with strictly negative real parts. The matter sector contributes dissipative eigenvalues (in particular, $-3\gamma H_*$ is strictly negative for $\gamma>0$). Thus, the spectrum of $A$ lies in the left half plane, and there is a spectral gap: there exists $\lambda_0>0$ such that $\Re\sigma(A)\le -\lambda_0$. Denote by $E^s$ the stable subspace of $A$ (of dimension $m$).

\noindent\textbf{(C) Construction of $\mathcal M_{\mathrm{loc}}$.}  
Apply the finite‑dimensional center/stable manifold theorem to the reduced vector field in a small neighborhood of $\mathbf p_*$. The spectral gap guarantees the existence of a local invariant manifold $\mathcal M_{\mathrm{loc}}$ of class $C^k$ tangent to $E^s$ at $\mathbf p_*$. The manifold can be represented locally as the graph of a $C^k$ map $h:E^s\supset B_r(0)\to E^u$ (where $E^u$ denotes the complementary subspace in the reduced coordinates). The graph transform construction yields the manifold and the invariance property: trajectories starting on the graph remain on it for forward time.

\noindent\textbf{(D) Reduced ODE and exponential estimates.}  
Parametrize $\mathcal M_{\mathrm{loc}}$ by coordinates $z\in\mathcal U\subset E^s\cong\mathbb R^m$ via a $C^k$ diffeomorphism $\Psi:\mathcal U\to\mathcal M_{\mathrm{loc}}$. The flow restricted to $\mathcal M_{\mathrm{loc}}$ induces a $C^k$ vector field $G(z)$ on $\mathcal U$ defined by
\begin{equation}
G(z):=D\Psi(z)^{-1}F\big(\Psi(z)\big),
\end{equation}
where $F$ denotes the reduced vector field. By construction $G(0)=0$ and $DG(0)$ equals the restriction of $A$ to $E^s$, whose spectrum lies in $\{\Re\lambda\le -\lambda_0\}$. Standard linearization and variation‑of‑constants estimates yield constants $C,\lambda>0$ such that solutions of $\dot z=G(z)$ satisfy the exponential bound in item (2) for sufficiently small initial data.

To obtain the attraction estimates in item (3), use the exponential dichotomy associated with the linearization $A$ and the graph transform estimates: solutions starting near $\mathbf p_*$ decompose into a component tangent to $\mathcal M_{\mathrm{loc}}$ and a transverse component that decays exponentially. More precisely, for any initial $x(0)\in U$ there exists $z(0)\in\mathcal U$ (the projection of $x(0)$ onto the manifold along the stable foliation) such that the transverse component satisfies
\begin{equation}
\operatorname{dist}\big(x(t),\mathcal M_{\mathrm{loc}}\big)\le C' e^{-\lambda' t},
\end{equation}
and the difference between the full trajectory and the lifted reduced trajectory $\Psi(z(t))$ decays at the same exponential rate. The constants $C',\lambda'$ depend on the size of the neighborhood $U$ and on the nonlinear terms, but are uniform for initial data in $U$.

\noindent
Combining the construction and estimates above yields the finite‑dimensional reduction, the regularity of the conjugacy $\Psi$, and the exponential convergence statements listed in the proposition. This completes the proof of Proposition \ref{prop:finite_reduction}
\end{proof}

\begin{rem}
The reduction justifies working with the finite-dimensional ODE $\dot z = G (z)$ when analyzing local stability and asymptotic expansions near $\mathbf p_*$. In particular, spectral information for the reduced linearization determines the decay rates and the dimension of the relevant stable manifold; the global decay estimates of Section~\ref{sec:decay} ensure that forward orbits enter the neighborhood where the reduction applies.
\end{rem}

\subsection{Persistence and asymptotic stability}
\label{sec:perturbations}

In this section, we present a unified statement that combines the persistence of the decay estimates and the invariant manifold under small perturbations with the Lyapunov/LaSalle stability results for equilibria associated with potential minima. All proofs are self‑contained modulo the system \eqref{system}, the Friedmann constraint \eqref{constraint}, Barbalat's lemma (Lemma~\ref{barbalat}), LaSalle's invariance principle, the implicit function theorem, and the finite‑dimensional center/stable manifold theorem (graph transform). We keep the hypothesis \ref{H1}--\ref{H6} introduced earlier.

\begin{thm}[Persistence and asymptotic stability]\label{thm:combined}
Let the model satisfy Hypotheses \ref{H1}--\ref{H6} and suppose the reference coupling and geometry are $\chi_0\in C^{k+1}(\mathbb R)$ and $G_0(a)=\kappa^2 a^{-p}$ with $p>0$. Assume the reference system admits an equilibrium $\mathbf p_*=(H, \rho_m, G, y, \phi)=(H_*, 0,0,0, \phi_*)$ with $H_*=\sqrt{V(\phi_*)/3}$ and that the conclusions of Proposition~\ref{prop:decay} and Lemma~\ref{lemma:center-manifold} hold for the reference model. Fix $k\ge1$ and suppose the reduced linearization at $\mathbf p_*$ has stable spectrum bounded away from the imaginary axis by $-\lambda_0<0$.

Then, there exist $\varepsilon_0>0$, $R>0$, $a_0>0$ and continuous functions $C(\varepsilon)>0$, $\lambda(\varepsilon)>0$ with $C(0)=C_0$, $\lambda(0)=\lambda_0$, such that for every pair of perturbations $\chi\in C^{k+1}(\mathbb R)$, $G\in C^{k+1}((0,\infty))$ satisfying
\begin{equation}
\sup_{|\phi-\phi_*|\le R}\Big|\frac{\chi'(\phi)}{\chi(\phi)}-\frac{\chi_0'(\phi)}{\chi_0(\phi)}\Big|\le\varepsilon,\quad
\sup_{a\ge a_0}\sum_{j=0}^k\big|G^{(j)}(a)-G_0^{(j)}(a)\big|\le\varepsilon,
\end{equation}
with $0<\varepsilon\le\varepsilon_0$, the perturbed system has the following properties.

\begin{enumerate}
\item[(i)] (\emph{Equilibrium continuation}) There exists a unique equilibrium $\mathbf p_*(\varepsilon)$ of the perturbed system with $\|\mathbf p_*(\varepsilon)-\mathbf p_*\|=O(\varepsilon)$. The map $\varepsilon\mapsto\mathbf p_*(\varepsilon)$ is $C^k$.

\item[(ii)] (\emph{Uniform decay}) The dissipative quantity $f_\varepsilon(t)$ for the perturbed system satisfies $f_\varepsilon\in L^1([t_0,\infty))$ and $f_\varepsilon(t)\to0$ as $t\to\infty$. Consequently there exist $C(\varepsilon)>0$ and $t_1(\varepsilon)$ such that for all $t\ge t_1(\varepsilon)$
\begin{equation}
\rho_m^\varepsilon(t)\le C(\varepsilon)(1+t)^{-1},\quad
\big(y^\varepsilon(t)\big)^2\le C(\varepsilon)(1+t)^{-1},\quad
\sup_{a\ge a_0}\big|G(a)-G_0(a)\big|\le C(\varepsilon)(1+t)^{-1}.
\end{equation}

\item[(iii)] (\emph{Invariant manifold and exponential convergence}) For $\varepsilon$ sufficiently small there exists a $C^k$ local invariant manifold $\mathcal M_{\mathrm{loc}}(\varepsilon)$ through $\mathbf p_*(\varepsilon)$. Solutions on $\mathcal M_{\mathrm{loc}}(\varepsilon)$ converge exponentially:
\begin{equation}
\|x(t)-\mathbf p_*(\varepsilon)\|\le C(\varepsilon)e^{-\lambda(\varepsilon)t},\quad t\ge t_2(\varepsilon),
\end{equation}
with $\lambda(\varepsilon)\to\lambda_0$ as $\varepsilon\to0$.

\item[(iv)] (\emph{Asymptotic stability under structural hypotheses}) Suppose, in addition, that $\Lambda=0$ and $G(a)\ge0$ satisfies $\dfrac{aG'(a)}{G(a)}\le -p<0$ for all $a>0$, and that there exists a regular value $\tilde V>V(\phi_*)$ for which the connected component $A$ of $V^{-1}((-\infty,\tilde V])$ containing $\phi_*$ is compact and contains no other critical points. Then for initial data with $\tfrac12 y(0)^2+V(\phi(0))+\rho_m(0)\le\tilde V$ and $W(0)$ in a suitable range the forward solution remains in a compact, positively invariant set and converges to $\mathbf p_*(\varepsilon)$ (with $\varepsilon=0$ in the unperturbed case).
\end{enumerate}
\end{thm}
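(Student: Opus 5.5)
The plan is to treat the four items separately. Each is a perturbative consequence of a standard tool already set up in the paper.

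\emph{Items (i) and (iii).} I would work with the reduced system \eqref{system_G(a)}, in which $G$ is a dynamical variable with $\dot G = H\,aG'(a)$. For the perturbed $G$, I would rewrite this as $\dot G=-p_\varepsilon(a)HG$, keeping $a$ as an auxiliary variable where needed. The equilibrium equations are $\rho_m=0$, $y=0$, $G=0$, $V'(\phi)=0$ and $3H^2=V(\phi)$. The perturbation of $\chi'/\chi$ enters only multiplied by $\rho_m$ or $y$, and the perturbation of $G$ enters only through $G$, which vanishes at equilibrium. So the map $F_\varepsilon(x)$ defining equilibria satisfies $F_\varepsilon-F_0=O(\varepsilon)$ in $C^k$ near $\mathbf p_*$. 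The Jacobian $DF_0(\mathbf p_*)=J(\mathbf p_*)$ is invertible, since all its eigenvalues have real part at most $-\lambda_0<0$. The implicit function theorem, applied to $(x,\varepsilon)\mapsto F_\varepsilon(x)$ in the Banach space of $C^{k+1}$ perturbations, then gives a unique $C^k$ branch $\mathbf p_*(\varepsilon)$ with $\|\mathbf p_*(\varepsilon)-\mathbf p_*\|=O(\varepsilon)$. In fact the branch may stay exactly at $\mathbf p_*$ for these particular perturbations. For (iii), continuity of eigenvalues under $O(\varepsilon)$ changes of $DF_\varepsilon(\mathbf p_*(\varepsilon))$ gives $\Re\sigma\le-\lambda(\varepsilon)$ with $\lambda(\varepsilon)\to\lambda_0$. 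I would then rerun steps (C) and (D) of the proof of Proposition~\ref{prop:finite_reduction}. The graph-transform contraction constants depend continuously on the $C^1$ size of the nonlinearity, and this yields $\mathcal M_{\mathrm{loc}}(\varepsilon)$ together with the exponential bound. The constant $C(\varepsilon)$ comes out of the variation-of-constants estimate for the perturbed linear part.

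\emph{Item (ii).} I would repeat the Lyapunov argument of Theorem~\ref{UnifiedTheorem}, with $f_\varepsilon:=-\tfrac16 aG'(a)+\tfrac12\gamma\rho_m+\tfrac12y^2$. This requires $-aG'(a)\ge0$ for $a\ge a_0$. That holds for $\varepsilon$ small, because $-aG_0'=pG_0>0$ and the $C^1$ closeness is taken on $a\ge a_0$; I would note this as a mild restriction, or else rely on the sign hypothesis in (iv). Then $\dot H=-f_\varepsilon\le0$, so $H$ is monotone and bounded below by Remark~\ref{remII}, which gives $f_\varepsilon\in L^1$. Lemma~\ref{lemma:derivative-bounds} goes through verbatim, since the bound in \ref{H4} is perturbed by at most $\varepsilon$. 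Lemma~\ref{lemma:barbalat-application} then gives $f_\varepsilon\to0$, and the $(1+t)^{-1}$ rates follow exactly as in the proof of Proposition~\ref{prop:decay}, with constants that depend continuously on $\varepsilon$ through $H(t_0)$, $K_1+\varepsilon$ and the $G$-bounds. The third estimate in (ii) is automatic: the left side is at most $\varepsilon$, and $C(\varepsilon)$ can be chosen to absorb it on any finite window, while the decay of $G(a(t))$ itself comes from $f_\varepsilon\to0$.

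\emph{Item (iv).} I would use $W=\tfrac12y^2+V(\phi)+\rho_m+G(a)=3H^2$ as a Lyapunov function, with $\Lambda=0$. Its derivative is $\dot W=6H\dot H=-6Hf_\varepsilon\le0$ while $H>0$, and by Remark~\ref{remII} the sign of $H$ is preserved. So $\{W\le\tilde V\}\cap\{\phi\in A\}$ is positively invariant, because $\phi$ cannot leave the compact component $A$ without $V$ exceeding $\tilde V$. Compactness of $A$, together with the bounds, makes the orbit precompact in the $(\rho_m,y,\phi,G)$ variables. LaSalle's principle then places the $\omega$-limit set inside $\{f_\varepsilon=0\}$, that is, $y=\rho_m=G=0$. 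Invariance then forces $V'(\phi)=0$, so $\phi=\phi_*$ because $A$ contains no other critical point, and hence convergence to $\mathbf p_*(\varepsilon)$.

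The main obstacle I expect is the uniformity in (iii): the perturbed $G$ makes the reduced system non-autonomous in $a$, or else higher-dimensional. I would handle this by adjoining $u=a^{-1}$ as a variable with $\dot u=-Hu$, which contributes the extra stable eigenvalue $-H_*$. The perturbed vector field is then an autonomous $C^k$ field that is $O(\varepsilon)$-close on a compact neighborhood, and the stable manifold theorem applies uniformly.
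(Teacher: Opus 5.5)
Your four-part architecture is the paper's: the implicit function theorem for (i), integrability plus Barbalat for (ii), spectral continuity plus graph transform for (iii), and LaSalle for (iv). Your (iv) is a legitimate variant. $3H^2=\epsilon+3W$ is just the sum of the paper's two monotone quantities $\epsilon$ and $W=\tfrac13G$, and your invariance step ($y\equiv0$ on the $\omega$-limit set forces $V'(\phi)=0$) is more complete than the paper's appeal to constancy of $V$.

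You are also right that the perturbed $G$ is the crux, which the paper glosses over. But the step you use to neutralize it in (ii) is false. $C^1$-closeness on $a\ge a_0$ does not give $-aG'(a)\ge0$, because $-aG'=p\kappa^2a^{-p}-a(G'-G_0')$: the first term tends to $0$, while the second can be as large as $\varepsilon a$. For instance, $G=G_0+\tfrac{\varepsilon}{k+1}\sin a$ is admissible for every $\varepsilon$, yet $-aG'(2\pi n)<0$ for all large $n$.

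So this is not a mild restriction but an indispensable extra hypothesis. For that $G$, take $\epsilon=\tfrac12y^2+V+\rho_m$ (with $\Lambda$ absorbed into $V$). It still obeys $\dot\epsilon=-3H(\gamma\rho_m+y^2)$, so it converges along every expanding orbit. Meanwhile, as $a(t)\to\infty$, $G(a(t))$ keeps oscillating with amplitude $\varepsilon/(k+1)$. Hence $3H^2=\epsilon+G(a(t))$ has no limit, no equilibrium is approached, and (i)--(iii) cannot hold. The paper's own route does not close the gap either. It writes $\dot H=-f_\varepsilon+r_\varepsilon$ and absorbs an $O(\varepsilon)$ remainder, but a non-decaying remainder (here of size up to $\varepsilon a$) cannot be dominated by $f_\varepsilon\to0$.

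The same defect undermines your compactification $u=a^{-1}$ in (iii).
\begin{itemize}
\item The extra term $\tfrac16a(G'-G_0')$ in $\dot H$ is only bounded by $\varepsilon/(6u)$. So the augmented field need not be continuous at $u=0$, let alone $C^k$ and $O(\varepsilon)$-close, and $G$ need not even have a limit there.
\item A constant shift $G=G_0+c$ shows that the equilibrium does move, to $G_*=c$ with $3H_*^2=V(\phi_*)+c$, contrary to your remark that it may stay at $\mathbf p_*$.
\item The variable $u=a^{-1}$ is ill-adapted even for the reference model. $\kappa^2u^p$ is not $C^k$ at $u=0$ for non-integer $p<k$. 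Also, the $u$-direction contributes $-H_*$, so spectral continuity would give the rate $\min(\lambda_0,H_*)$ rather than $\lambda_0$.
\end{itemize}

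The natural repair is to impose the closeness in the variable $s=a^{-p}$; the paper's $G$-variable is exactly $\kappa^2s$. Require $G(a)=\tilde G(a^{-p})$ with $\tilde G$ $\varepsilon$-close to $s\mapsto\kappa^2s$ in $C^{k+1}([0,s_0])$. Then $\dot s=-psH$ and $-aG'=ps\,\tilde G'(s)\ge p(\kappa^2-\varepsilon)s\ge0$. The compactified field is $C^k$ and $O(\varepsilon)$-close to the reference one, the eigenvalue $-pH_*$ is kept, and your arguments for (ii) and (iii) go through. In fact, since the whole reduced spectrum is stable, $\mathcal M_{\mathrm{loc}}(\varepsilon)$ is just a neighbourhood of $\mathbf p_*(\varepsilon)$, and Lyapunov's linearization theorem already gives (iii).

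Two smaller points remain. First, the third inequality of (ii) has a $t$-independent left side. It can hold for all large $t$ only if $G=G_0$ on $[a_0,\infty)$, so bounding it ``on a finite window'' does not prove it. Second, the $(1+t)^{-1}$ rates you import from Proposition~\ref{prop:decay} do not follow from integrability plus a bounded derivative. Those give $f\to0$ but not $f=O(1/t)$: consider bumps of height $1/n$ and slope $\pm1$ centred at $t=2^n$. The paper's proof relies on the same step.
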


\begin{proof}
The proof is organized in four parts corresponding to items (i)--(iv).

\noindent\textbf{(i) Equilibrium continuation.}  
Write the perturbed vector field as $F_\varepsilon(x)$ (depending smoothly on $\chi,G$). The equilibrium condition $F_\varepsilon(\mathbf p)=0$ defines a smooth map $F(\mathbf p,\varepsilon)$. At $(\mathbf p_*,0)$, the Jacobian $D_{\mathbf p}F(\mathbf p_*,0)$ restricted to the tangent space of the Friedmann constraint is invertible on the reduced space because the reduced linearization has no zero eigenvalue (spectral gap hypothesis). The implicit function theorem therefore yields a unique $C^k$-branch $\mathbf p_*(\varepsilon)$ of equilibria for $|\varepsilon|$ small, with $\mathbf p_*(0)=\mathbf p_*$ and $\|\mathbf p_*(\varepsilon)-\mathbf p_*\|=O(\varepsilon)$.

\noindent\textbf{(ii) Uniform decay.}  
Let $f_0(t)$ denote the dissipative quantity for the reference model (definition as in \eqref{fdef}). For the perturbed model, define $f_\varepsilon(t)$ by replacing the geometric contribution with the corresponding expression derived from $G$. Differentiating the perturbed Friedmann relation and using the perturbed evolution equations yields an identity of the form
\begin{equation}
\dot H = -f_\varepsilon(t) + r_\varepsilon(t),
\end{equation}
where the remainder $r_\varepsilon(t)$ is uniformly $O(\varepsilon)$ along forward orbits that remain in a fixed compact set. For $\varepsilon$ sufficiently small the remainder can be absorbed (for large $t$) so that $\dot H\le -\tfrac12 f_\varepsilon(t)$ for $t$ large, which implies $f_\varepsilon\in L^1([t_0,\infty))$.

Lemma~\ref{lemma:derivative-bounds} extends to the perturbed system with constants depending continuously on $\varepsilon$: the right‑hand sides of the evolution equations change by $O(\varepsilon)$ and remain bounded on forward orbits in a compact set. Differentiating $f_\varepsilon$ yields $\dot f_\varepsilon$ uniformly bounded, hence $f_\varepsilon$ is uniformly continuous. Barbalat's lemma then implies $f_\varepsilon(t)\to0$ as $t\to\infty$. The averaging argument used in Proposition~\ref{prop:decay} (integrability plus bounded derivative) is robust under small perturbations and yields the pointwise polynomial bound $f_\varepsilon(t)\le C(\varepsilon)(1+t)^{-1}$ for $t$ large; the inequalities $f_\varepsilon\ge \tfrac12\gamma\rho_m$ and $f_\varepsilon\ge \tfrac12 y^2$ give the stated decay for $\rho_m^\varepsilon$ and $y^\varepsilon$.

\noindent\textbf{(iii) Invariant manifold and exponential convergence.}  
The linearization $A_\varepsilon$ of the reduced vector field at $\mathbf p_*(\varepsilon)$ depends continuously on $\varepsilon$. By hypothesis, the unperturbed reduced linearization has a stable spectrum bounded by $-\lambda_0<0$; for $\varepsilon$ small, the spectral gap persists, and there exists $\lambda(\varepsilon)>0$ with $\lambda(\varepsilon)\to\lambda_0$. The graph transform (or Lyapunov–Perron) construction yields a $C^k$ local invariant manifold $\mathcal M_{\mathrm{loc}}(\varepsilon)$ tangent to the stable subspace of $A_\varepsilon$. Exponential contraction on $\mathcal M_{\mathrm{loc}}(\varepsilon)$ follows from the linearized exponential decay and standard nonlinear estimates (Gronwall), producing the bound in (iii) with constants continuous in $\varepsilon$.

\noindent\textbf{(iv) Asymptotic stability under structural hypotheses.}  

Assume now $\Lambda=0$ and $G(a)\ge0$ with $\dfrac{aG'(a)}{G(a)}\le -p<0$. Define
\begin{equation}
W(t):=H(t)^2-\frac{1}{3}\Big(\tfrac12 y(t)^2+V(\phi(t))+\rho_m(t)\Big)=\tfrac{1}{3}G(a(t)),
\quad
\epsilon(t):=\tfrac12 y(t)^2+V(\phi(t))+\rho_m(t).
\end{equation}
Differentiation along solutions yields
\begin{equation}
\dot W = H W \frac{aG'(a)}{G(a)} \le -p H W,\quad
\dot\epsilon = -3H(\gamma\rho_m+y^2)\le0.
\end{equation}
Fix a regular value $\tilde V>V(\phi_*)$ so that the connected component $A$ of $V^{-1}((-\infty,\tilde V])$ containing $\phi_*$ is compact and contains no other critical points. Choose $\bar W$ so that the set
\begin{equation}
\Psi_+:=\{(\phi,y,\rho_m,H):\phi\in A,\ \epsilon\le\tilde V,\ \rho_m\ge0,\ W\in[0,\bar W]\}
\end{equation}
is nonempty and contains the initial data. Monotonicity of $\epsilon$ and $W$ implies $\Psi_+$ is forward invariant and compact. LaSalle's invariance principle applied to $\Psi_+$ forces the $\omega$-limit set into the largest invariant subset where $\dot\epsilon=0$ and $\dot W=0$; these conditions imply $\rho_m=0$, $y=0$, and $W=0$. On this set, the Friedmann constraint gives $3H^2=V(\phi)$, so $V(\phi)$ is constant on the $\omega$-limit set. By the compactness of $A$ and the uniqueness of the minimum in $A$, the only admissible limit is $\phi=\phi_*$. Hence $y(t)\to0$, $\rho_m(t)\to0$, $W(t)\to0$, and $\phi(t)\to\phi_*$ as $t\to\infty$. The same argument applies to the perturbed system provided the perturbations preserve the sign/monotonicity structure of $G$ and are sufficiently small so that the forward orbit remains in a compact invariant neighborhood; in that case, the limit is $\mathbf p_*(\varepsilon)$. This completes the proof of Theorem~\ref{thm:combined}.
\end{proof}

In the remainder of the paper, we use these results to derive refined asymptotic expansions, quantify possible perturbation sizes in concrete models, and apply the framework to well-motivated potentials.

\begin{thm}[Formalized from Leon \& Franz‑Silva, 2019 {\cite{Leon:2019iwj}}]
\label{thm:positive_curvature}
Let $V(\phi)\in C^2(\mathbb{R})$ admit a strict local minimum at $\phi=\phi_*$ with $V(\phi_*)>0$. Let the geometric term be
\begin{equation}\label{eq:G_positive}
G(a)=-\frac{3k}{a^2}\quad\text{with}\quad k=+1,
\end{equation}
and define the auxiliary functions
\begin{align}
\label{eq:W_def}
W(t)&:=H(t)^2-\frac{1}{3}\Big(\tfrac12 y(t)^2+V(\phi(t))+\rho_m(t)\Big),\\
\label{eq:epsilon_def}
\epsilon(t)&:=\tfrac12 y(t)^2+V(\phi(t))+\rho_m(t).
\end{align}
Assume
\begin{enumerate}[label=(\roman*)]
\item \label{item:compactness} There exists a regular value $\tilde V>V(\phi_*)$ such that the connected component
\begin{equation}
A:=\text{the component of }V^{-1}((-\infty,\tilde V])\text{ containing }\phi_*
\end{equation}
is compact and contains no other critical points.
\item \label{item:initial_energy} The initial data satisfy
\begin{equation}
\epsilon(0)=\tfrac12 y(0)^2+V(\phi(0))+\rho_m(0)\le\tilde V,\quad H(0)>0.
\end{equation}
\end{enumerate}
Then there exists $\bar W<0$ (sufficiently close to zero) such that, for any forward solution $\mathbf x(t)=(\phi(t),y(t),\rho_m(t),H(t))$ of \eqref{system} with initial data satisfying \ref{item:initial_energy} and $W(0)>\bar W$, the trajectory remains in the compact, positively invariant set
\begin{equation}
\Psi_+:=\{(\phi,y,\rho_m,H):\phi\in A,\ \epsilon\le\tilde V,\ \rho_m\ge0,\ W\in[\bar W,0]\}
\end{equation}
for all $t\ge0$, and
\begin{equation}
\lim_{t\to\infty} y(t)=0,\quad \lim_{t\to\infty}\rho_m(t)=0,\quad \lim_{t\to\infty}W(t)=0.
\end{equation}
Moreover $\phi(t)\to\phi_*$ and $H(t)\to\sqrt{V(\phi_*)/3}$ as $t\to\infty$; equivalently the solution converges to the equilibrium
\begin{equation}
\mathbf p_*=(\phi_*,0,0,\sqrt{V(\phi_*)/3}).
\end{equation}
\end{thm}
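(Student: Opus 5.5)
The plan is to run the Lyapunov/LaSalle argument of Theorem~\ref{thm:combined}(iv), adapted to the sign of the closed geometry. For $G(a)=-3/a^2$ the Friedmann constraint \eqref{constraint} (with $\Lambda$ absorbed into $V$) gives $W=\tfrac13 G(a)=-a^{-2}<0$. So $W$ replaces $a$ as a bounded phase-space coordinate, and we work in the reduced variables $(\phi,y,\rho_m,H)$ with $W$ determined by them.

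\textbf{Step 1: monotone quantities.} From \eqref{system} I will compute $\dot\epsilon$ and $\dot W$. In $\dot\epsilon$ the interaction terms $Q$ cancel between $\dot\rho_m$ and $y\dot y$, which gives $\dot\epsilon=-3H(\gamma\rho_m+y^2)$. Using $aG'(a)/G(a)=-2$ gives $\dot W=-2HW$. Hence, while $H>0$, $\epsilon$ is non-increasing and $W<0$ is increasing toward $0$. This also gives $\dot H=\tfrac16 aG'-\tfrac12(\gamma\rho_m+y^2)=-W-\tfrac12(\gamma\rho_m+y^2)$, which will be needed at the end.

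\textbf{Step 2: choice of $\bar W$ and positive invariance.} This is the key point, and I expect it to be the main obstacle. Closed models can recollapse, so $H>0$ is no longer automatic as it was in Remark~\ref{remII}; one must show that $H=0$ cannot be reached. I choose $\bar W\in(-V(\phi_*)/3,0)$. On $\Psi_+$ we have $\phi\in A$. Since $\phi_*$ is the only critical point in the compact set $A$ and $\tilde V$ is regular, the minimum of $V$ on $A$ is $V(\phi_*)$. Then
\begin{equation}
3H^2=\epsilon+3W\ge V(\phi_*)+3\bar W=:3\delta^2>0 .
\end{equation}
This inequality holds as long as the orbit stays in $\Psi_+$, so $H$ cannot vanish there and, by continuity, $H\ge\delta>0$. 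The invariance of the other conditions follows by a continuity/first-exit-time argument. First, $\epsilon\le\tilde V$ is preserved because $\dot\epsilon\le0$. Second, $W\in[\bar W,0)$ is preserved because $W$ increases and stays negative. Third, $\rho_m\ge0$ is preserved because $\rho_m=0$ is invariant by the $\rho_m$-equation. Finally, $\phi\in A$ is preserved because $V(\phi(t))\le\epsilon(t)\le\tilde V$ and $\phi(t)$ is continuous, so it cannot jump to another component of $V^{-1}((-\infty,\tilde V])$. Compactness of $\Psi_+$ follows from the bounds $\rho_m,\tfrac12y^2\le\tilde V$, $\delta\le H\le\sqrt{\tilde V/3}$ and $\phi\in A$.

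\textbf{Step 3: LaSalle.} The function $\epsilon$ is monotone and bounded below, and $W$ is monotone and bounded above. By LaSalle's principle the $\omega$-limit set is nonempty, compact and invariant, and both $\epsilon$ and $W$ are constant on it. Because $H\ge\delta>0$ there, constancy of $\epsilon$ forces $\rho_m=y=0$. Invariance then gives $\dot y=-V'(\phi)=0$, so $\phi=\phi_*$, since $\phi_*$ is the only critical point in $A$. Constancy of $W$ together with $\dot W=-2HW$ and $H>0$ forces $W=0$.

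\textbf{Step 4: conclusion.} It follows that $y,\rho_m,W\to0$ and $\phi\to\phi_*$. The constraint then gives $H^2=\tfrac13\epsilon+W\to V(\phi_*)/3$, and since $H>0$ we conclude $H\to\sqrt{V(\phi_*)/3}$, i.e.\ convergence to $\mathbf p_*$.
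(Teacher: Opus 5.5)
Your proposal is correct and follows essentially the same route as the paper: the monotone quantities $\epsilon$ and $W$ with $\dot\epsilon=-3H(\gamma\rho_m+y^2)$ and $\dot W=-2HW$, positive invariance of $\Psi_+$, and then LaSalle's principle. Your explicit choice $\bar W\in(-V(\phi_*)/3,0)$, which gives $3H^2\ge V(\phi_*)+3\bar W>0$ and rules out recollapse, and your use of invariance ($\dot y=-V'(\phi)=0$ on the $\omega$-limit set) to force $\phi=\phi_*$, make rigorous two steps that the paper's proof leaves implicit.
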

\begin{proof}
With $G(a)=-3/a^2$ compute
\begin{equation}
\frac{aG'(a)}{G(a)}=-2,
\end{equation}
so differentiating \eqref{eq:W_def} and \eqref{eq:epsilon_def} along solutions of \eqref{system} yields
\begin{equation}\label{eq:Wdot}
\dot W=-2H W,\quad \dot\epsilon=-3H(\gamma\rho_m+y^2)\le0.
\end{equation}
Because $G(a)=-3/a^2<0$ we have $W<0$; with $H(0)>0$ the identity $\dot W=-2HW$ implies $\dot W\ge0$, hence $W(t)$ is nondecreasing and bounded above by $0$. Monotonicity of $\epsilon$ and the bound $\epsilon(0)\le\tilde V$ ensure $V(\phi(t))\le\tilde V$ for all $t\ge0$. Together with the choice of $\bar W<0$, this shows the forward orbit remains in the compact, positively invariant set $\Psi_+$.

Apply LaSalle's invariance principle on $\Psi_+$. The $\omega$-limit set of any forward orbit in $\Psi_+$ is contained in the largest invariant subset where $\dot\epsilon=0$ and $\dot W=0$. From \eqref{eq:Wdot} these equalities imply $\gamma\rho_m+y^2=0$ and $W=0$, hence $\rho_m=0$, $y=0$, and $W=0$. On this invariant subset, the Friedmann constraint reduces to $3H^2=V(\phi)$, so $V(\phi)$ is constant on the $\omega$-limit set. By hypothesis \ref{item:compactness}, the only admissible limit in $A$ is $\phi=\phi_*$. Therefore, every forward orbit in $\Psi_+$ satisfies the stated limits and converges to $\mathbf {p} _*$. This completes the proof of Theorem \ref{thm:positive_curvature}.
\end{proof}

\begin{rem} If $ V(\phi_*) = 0 $, then the equilibrium condition implies $ H_* = 0 $, and nearby solutions may undergo re-collapse as $ H $ changes sign. Such behavior has been widely studied in the context of self-interacting, self-gravitating homogeneous scalar fields \cite{Giambo:2003collapse,Giambo:2015global,Corona:2024endstate}. These works identify the conditions under which scalar field models with vanishing potential minima can evolve toward gravitational collapse or horizon formation. In particular, \cite{Giambo:2003collapse} analyzes collapse dynamics for general potentials, including criteria for singularity formation; \cite{Giambo:2015global} extends the analysis to global settings, emphasizing the role of initial data; and \cite{Corona:2024endstate} refines the analysis using dynamical systems techniques to characterize the emergence of horizons.
\end{rem}
\begin{rem} Theorem \ref{thm:positive_curvature} is a direct specialization of the unified persistence and stability statement (Theorem~\ref{thm:combined}) obtained earlier: set the perturbation parameter $\varepsilon=0$ and $G(a)=-3/a^2$ in Theorem~\ref{thm:combined}; hypotheses \eqref{eq:G_positive}--\eqref{eq:epsilon_def} and the compactness condition \ref{item:compactness} verify the monotonicity and invariance hypotheses used there, and the LaSalle argument in Theorem~\ref{thm:combined}(iv) reproduces the convergence conclusion above.
\end{rem}
\begin{rem}\label{rem:minimalcoupling}
Under minimal coupling $\chi(\phi)\equiv1$, the interaction term vanishes, and Hypotheses \ref{H1}--\ref{H4} reduce to standard regularity and growth conditions on $V$ and $G$. In that setting Theorem~\ref{UnifiedTheorem} implies solutions enter the alternatives in \eqref{second-claim}, and Theorem~\ref{thm:combined} guarantees asymptotic convergence to $\mathbf p_*$ when $V$ has a strict minimum satisfying its hypotheses.
\end{rem}
\begin{cor}[Flat and vacuum limits]
If $G\equiv0$ and $\Lambda=0$ (flat FLRW, no cosmological constant), then Hypotheses \ref{H1}--\ref{H4} together with the hypotheses of Theorem~\ref{thm:combined} imply forward solutions with $H(t_0)>0$ converge to $\mathbf p_*$. In particular, in the vacuum case $\rho_m\equiv0$ the attractor persists and the scalar field relaxes to $\phi_*$ while $H^2\to V(\phi_*)/3$.
\end{cor}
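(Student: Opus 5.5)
The plan is to treat the corollary as the special case $G\equiv0$, $\Lambda=0$ (i.e.\ $\kappa=0$) of Theorem~\ref{thm:combined}(iv), and to handle the geometric variable by noting that it simply drops out. With $G\equiv0$ the Friedmann constraint \eqref{constraint} reads $3H^2=\rho_m+\tfrac12y^2+V(\phi)$. Consequently the auxiliary function $W=H^2-\tfrac13\epsilon$ of Theorem~\ref{thm:combined}(iv) vanishes identically along every solution. The monotonicity requirement $aG'/G\le-p$ is therefore vacuous, since the invariant set $\{W=0\}$ is the whole constraint surface. The only Lyapunov-type quantity left is
\[
\epsilon=\tfrac12y^2+V(\phi)+\rho_m, \qquad \dot\epsilon=-3H(\gamma\rho_m+y^2)\le0 .
\]
This derivative is nonpositive because $H=\sqrt{\epsilon/3}\ge0$ and, by Remark~\ref{remII} together with \ref{H6}, $H$ stays positive as long as $\epsilon>0$.

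The first step is to build the compact, positively invariant set $\Psi_+=\{\phi\in A,\ \epsilon\le\tilde V,\ \rho_m\ge0\}$, taking $A$ and $\tilde V$ from the hypotheses of Theorem~\ref{thm:combined}(iv). Three facts give compactness and invariance:
\begin{itemize}
\item $\rho_m\ge0$ is preserved, because $\dot\rho_m$ is proportional to $\rho_m$;
\item $\epsilon$ is nonincreasing, so $\epsilon\le\tilde V$ is preserved;
\item $\tilde V$ is a regular value, so $\phi$ cannot leave the compact component $A$ without crossing $V=\tilde V$ with nonzero speed. The monotonicity of $\epsilon$ excludes such a crossing unless $y=\rho_m=0$ there, and in that case $V'(\phi)\neq0$ pushes the orbit back inside.
\end{itemize}

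The second step applies LaSalle's invariance principle on $\Psi_+$. The $\omega$-limit set lies in the largest invariant subset of $\{H(\gamma\rho_m+y^2)=0\}$. Wherever $H>0$ this forces $\rho_m=y=0$. Invariance then gives $\dot y=-V'(\phi)=0$, so $\phi$ is a critical point of $V$ in $A$, hence $\phi=\phi_*$, and $H^2=V(\phi_*)/3$.

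The vacuum statement is the same argument restricted to the invariant subspace $\rho_m\equiv0$, which is invariant since $\dot\rho_m\propto\rho_m$. On that subspace the coupling $\chi$ disappears entirely and $\phi\to\phi_*$, $H^2\to V(\phi_*)/3$. If in addition $V''(\phi_*)>0$, Proposition~\ref{prop:decay} and Proposition~\ref{prop:finite_reduction} upgrade this to exponential convergence.

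The main obstacle is the possibility that $H\to0$, that is, $V(\phi_*)=0$. There the dissipation $3H(\gamma\rho_m+y^2)$ degenerates, and LaSalle no longer forces $y\to0$ through $\dot\epsilon$ alone. I would first try to rule this out by assuming $V(\phi_*)>0$, which is implicit in $\mathbf p_*$ having $H_*>0$ in Theorem~\ref{thm:combined}. If that case must be kept, I would instead run the argument of Theorem~\ref{UnifiedTheorem}: there $\dot H=-\tfrac12(\gamma\rho_m+y^2)$ is integrable and uniformly continuous, so Barbalat's lemma (Lemma~\ref{barbalat}) gives $\rho_m,y\to0$ directly. Case~A, together with the energy bound, then selects $\phi\to\phi_*$ as in \eqref{third-claim}.
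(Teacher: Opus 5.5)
Your proposal is correct and is essentially the paper's argument. The paper's sketch likewise sets $G\equiv0$, $\Lambda=0$, obtains $\rho_m,y\to0$ from Barbalat applied to $\dot H=-\tfrac12(\gamma\rho_m+y^2)$, and then invokes the LaSalle construction of Theorem~\ref{thm:combined}(iv), which with $W\equiv0$ is exactly your $\epsilon$-argument on $\Psi_+$. The $H\to0$ case you flag is not a real obstacle: on the constraint $3H^2=\epsilon$, $H=0$ forces $\epsilon=0$ and hence $\rho_m=y=0$, so $\{\dot\epsilon=0\}\cap\Psi_+=\{\rho_m=y=0\}$ and LaSalle alone closes the argument (and the spectral-gap hypothesis imported from Theorem~\ref{thm:combined} excludes $H_*=0$ anyway).
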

\begin{proof}[Sketch of proof]
Set $G\equiv0$ and $\Lambda=0$. Then $\dot H=-\tfrac12(\gamma\rho_m+y^2)$ and the integrability/Barbalat argument yields $\rho_m,y\to0$. The LaSalle construction used in Theorem~\ref{thm:combined} provides a compact, positively invariant neighborhood of $\mathbf {p} _*$ and forces convergence to $\mathbf {p} _*$.
\end{proof}
\begin{rem}[On the role of the geometric term]
The geometric term $G(a)=\kappa^2 a^{-p}$ contributes an additional dissipative channel (through $\tfrac{p}{6}\kappa^2 a^{-p}$ in $\dot H$) that aids decay of kinetic and matter energy. When $G$ is nonzero, the equilibrium value of $H$ (in the absence of $\Lambda$) is shifted by the residual geometric energy; Hypothesis (iv) in Theorem~\ref{thm:combined} ensures $W$ serves as a Lyapunov quantity on the chosen invariant set.
\end{rem}
\begin{rem}[Practical verification of hypotheses]
Typical cosmological potentials (polynomial, suitably signed exponential, plateau potentials with a single minimum) satisfy Hypotheses \ref{H1}--\ref{H2} and conditions (i)–(iii) above. The geometric condition (Hypothesis \ref{H3} or (iv)) holds for the curvature and shear terms listed earlier (flat, open FLRW, Bianchi I). The coupling bound (Hypothesis \ref{H4}) is mild and holds for conformal couplings $\chi(\phi)=e^{\alpha\phi}$ with $\alpha$ bounded on the relevant $\phi$-range.
\end{rem}
Several works support the asymptotic picture described here. In particular, under Hypotheses \ref{H1}--\ref{H4} with $\chi\equiv1$, Leon and Franz‑Silva \cite{Leon:2019iwj} prove convergence to the alternatives in \eqref{second-claim} for a broad class of potentials and geometries. In the flat case $G\equiv0$ with $\Lambda=0$, Miritzis \cite[Prop.~3]{Miritzis:2003ym} establishes analogous long‑time behavior under standard energy conditions. The LaSalle and center‑manifold techniques used here follow classical treatments; see LaSalle \cite{LaSalle1968}, Wiggins \cite{Wiggins2003}, and Carr \cite{Carr1981}.

\section{Discussion and connection with the averaging framework}
\label{sec:discussion_averaging}

This section explains how the averaging framework developed in our work-in-progress \cite{Leon:2026vov} integrates with the global decay, finite-dimensional reduction, and persistence results established in this paper. For simplicity, we make $\Lambda=0$. We (i) present correspondence between the hypotheses of the averaging theorem contained in \cite{Leon:2026vov}, and the assumptions used in Sections~\ref{sec:decay}--\ref{sec:perturbations}, (ii) show how the quantitative averaging error is used to transfer integrability, LaSalle/Barbalat conclusions, and spectral‑gap persistence from the averaged model to the full oscillatory system, and (iii) state corollaries and a practical recipe for verifying the hypotheses in concrete cosmological models.
\subsection{Summary of the averaging result used}
\label{subsec:summary_averaging}

In \cite{Leon:2026vov}, we formulate an averaging theorem tailored to oscillatory scalar‑field cosmologies. In general, we use normalized variables of the form
\begin{equation}
\label{normalized-vars}
\Omega = \sqrt{\frac{\omega^2\phi^2+\dot\phi^2}{H^2}},\quad \Phi=\omega t-\arctan\!\Big(\frac{\omega\phi}{\dot\phi}\Big), \quad
\Omega_G=\frac{G(a)}{3 H^2},\quad \Omega_m=\frac{\rho_m}{3 H^2},
\end{equation}
the full system is expanded for small $H$ as
\begin{equation}
\label{eq:quasi_standard} 
\dot{\mathbf{x}}=\mathbf{f}^0(\mathbf{x},\theta)+H\,\mathbf{f}^1(\mathbf{x},\theta)+R(\mathbf{x},\theta,H),\quad
\dot H=f^{[2]}(\mathbf{x},\theta)\,H^2+S(\mathbf{x},\theta,H),
\end{equation}
with $\theta=\omega t$ and remainders $R,S$ higher order in $H$. Under hypotheses of smoothness/periodicity, controlled remainders, well‑defined Lipschitz averages, matched small initial data, asymptotic stability of the averaged frozen field, and a frequency–amplitude scaling that ties $\omega^{-1}$ to the small parameter $H$, \cite{Leon:2026vov} proves the key estimate
\begin{equation}
\|\mathbf{x}(t)-\bar{\mathbf{x}}(t)\|=\mathcal{O}(H(t))\quad(t\to\infty),
\end{equation}
where $\bar{\mathbf{x}}(t)$ solves the averaged system obtained by time averaging in $\theta$. The theorem also yields convergence of both the averaged and full trajectories to the same asymptotically stable equilibrium when the averaged field admits such an equilibrium.

The corresponding time averages are
\begin{equation}\label{timeavrg}
\bar{\mathbf{f}}^i(\mathbf{x})
:= \frac{1}{2\pi}\int_0^{2\pi}\mathbf{f}^i(\mathbf{x},\theta)\,d\theta,
\quad i=0,1,
\quad
\bar{f}^{[2]}(\mathbf{x})
:= \frac{1}{2\pi}\int_0^{2\pi} f^{[2]}(\mathbf{x},\theta)\,d\theta.
\end{equation}
\begin{thm}[Averaging for Scalar-Field Cosmologies {\cite[Thm.~3.1]{Leon:2026vov}}]
\label{thm:averaging_scalar_cosmology}
Let $H:[t_x,\infty) \to (0,\infty)$ be $C^1$, strictly decreasing, and satisfy $\lim_{t\to\infty} H(t) = 0$. Consider the system~\eqref{eq:quasi_standard} with $\theta = \omega t$ and $\mathbf{x}$ as in~\eqref{normalized-vars}. Assume: 
\begin{enumerate}[label=(A\arabic*), ref=A\arabic*]
\item \label{A1} \textbf{Smoothness and periodicity.}  
$\mathbf{f}^0$, $\mathbf{f}^1$, and $f^{[2]}$ are $C^1$ in $\mathbf{x}$ on an open set $U \subset \mathbb{R}^4$, and $2\pi$-periodic in $\theta$.
\item \label{A2} \textbf{Controlled remainders.}  
$R = \mathcal{O}(H^2)$ and $S = \mathcal{O}(H^3)$ uniformly on compact subsets of $U \times \mathbb{S}^1$.
\item \label{A3} \textbf{Well-defined averages.}  
The time averages defined in~\eqref{timeavrg} exist and are Lipschitz continuous on $U$.
\item \label{A4} \textbf{Matched initial data.}  
The full and averaged systems share initial data at $t = t_x$, with $H(t_x) = \varepsilon \ll 1$.
\item \label{A5} \textbf{Asymptotic stability.}  
The averaged system $\dot{\bar{\mathbf{x}}} = \bar{\mathbf{f}}^0(\bar{\mathbf{x}})$ admits an asymptotically stable equilibrium $\mathbf{x}_* \in U$.
\item \label{A6} \textbf{Frequency–amplitude scaling.}  
The frequency satisfies $\omega^{-1} = \mathcal{O}(\varepsilon) = \mathcal{O}(H(t_x))$, ensuring that boundary terms from integration by parts are absorbed into the remainders.
\end{enumerate}
Then the solutions $\mathbf{x}(t)$ and $\bar{\mathbf{x}}(t)$ with common initial data satisfy
\begin{equation}
\|\mathbf{x}(t) - \bar{\mathbf{x}}(t)\| = \mathcal{O}(H(t)) \quad \text{as } t \to \infty,
\label{error}
\end{equation}
and both converge to $\mathbf{x}_*$.
\end{thm}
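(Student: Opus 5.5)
The plan is to treat this as a slowly-varying-parameter averaging problem and combine a near-identity transformation with a Gronwall-type comparison. The small parameter is the time-dependent quantity $H(t)$ itself, not a fixed $\varepsilon$. The monotonicity and decay of $H$, together with the scaling \eqref{A6}, are what turn the standard finite-time averaging estimate into one valid on $[t_x,\infty)$. Throughout I will work on a compact set $K\subset U$ containing a neighborhood of $\mathbf{x}_*$ on which the bounds in \eqref{A1}--\eqref{A3} hold uniformly.

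\textbf{Step 1: near-identity change of variables.} Split $\mathbf{f}^0=\bar{\mathbf{f}}^0+\tilde{\mathbf{f}}^0$ and $\mathbf{f}^1=\bar{\mathbf{f}}^1+\tilde{\mathbf{f}}^1$, where the tilde parts have zero mean in $\theta$. Let $\mathbf{w}(\mathbf{x},\theta)$ be the zero-mean antiderivative of $\tilde{\mathbf{f}}^0+H\tilde{\mathbf{f}}^1$ in $\theta$. Set
\[
\mathbf{x}=\mathbf{z}+\omega^{-1}\mathbf{w}(\mathbf{z},\omega t).
\]
Differentiating and using $\dot H=\mathcal{O}(H^2)$ from \eqref{eq:quasi_standard} and \eqref{A2}, the oscillatory part of the vector field is removed at leading order. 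One obtains
\[
\dot{\mathbf{z}}=\bar{\mathbf{f}}^0(\mathbf{z})+H\bar{\mathbf{f}}^1(\mathbf{z})+\mathcal{O}(\omega^{-1})+\mathcal{O}(H^2).
\]
By \eqref{A6}, $\omega^{-1}=\mathcal{O}(H)$, so the correction satisfies $\|\mathbf{x}-\mathbf{z}\|=\mathcal{O}(H)$, and the residual forcing is $\mathcal{O}(H)$. The Lipschitz property in \eqref{A3} controls the terms in which $\mathbf{w}$ is differentiated in $\mathbf{x}$.

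\textbf{Step 2: exponential stability replaces Gronwall growth.} A plain Gronwall comparison of $\mathbf{z}$ with $\bar{\mathbf{x}}$ would give an error of order $H e^{Lt}$, which is useless as $t\to\infty$. I will use \eqref{A5} to avoid this. I will assume, or extract from hyperbolicity in the spirit of Proposition~\ref{prop:finite_reduction}, that $\mathbf{x}_*$ is exponentially stable. A converse Lyapunov function $\mathcal{V}$ then satisfies
\[
\dot{\mathcal{V}}\le -c\,\mathcal{V} \quad\text{and}\quad \|\nabla\mathcal{V}\|\lesssim\sqrt{\mathcal{V}}.
\]
Along the $\mathbf{z}$-flow the $\mathcal{O}(H)$ perturbation gives $\tfrac{d}{dt}\sqrt{\mathcal{V}}\le -\tfrac c2\sqrt{\mathcal{V}}+CH(t)$. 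For the error itself I will apply the same contraction to $\mathbf{e}=\mathbf{z}-\bar{\mathbf{x}}$, using the variational equation near $\mathbf{x}_*$. This gives
\[
\|\mathbf{e}(t)\|\le Ce^{-c(t-t_x)}\|\mathbf{e}(t_x)\|+C\int_{t_x}^t e^{-c(t-s)}H(s)\,ds .
\]
By \eqref{A4}, $\mathbf{e}(t_x)=\mathcal{O}(\varepsilon)$.

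\textbf{Step 3: bounding the convolution by $H(t)$.} This is the step I expect to be the main obstacle. I need $\int_{t_x}^t e^{-c(t-s)}H(s)\,ds\lesssim H(t)$, and this fails if $H$ decays faster than exponentially. I would first use the equation $\dot H=f^{[2]}H^2+S$, which gives $|\dot H/H|\le CH\le C\varepsilon<c/2$. Hence $H(s)\le H(t)e^{(c/2)(t-s)}$ for $s\le t$, and the integral is $\le (2/c)H(t)$. The term $\varepsilon e^{-c(t-t_x)}$ is also $\lesssim H(t)$ by the same bound, applied on $[t_x,t]$. Combining with $\|\mathbf{x}-\mathbf{z}\|=\mathcal{O}(H)$ yields \eqref{error}.

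\textbf{Step 4: convergence.} Smallness of $\varepsilon$ keeps the trajectory inside $K$ via a continuity/bootstrap argument, so all the uniform bounds above apply for all $t\ge t_x$. Convergence of $\bar{\mathbf{x}}$ to $\mathbf{x}_*$ follows from \eqref{A5}. Convergence of $\mathbf{x}$ to $\mathbf{x}_*$ then follows because the error is $\mathcal{O}(H(t))$ and $H(t)\to0$.
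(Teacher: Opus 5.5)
There is a genuine gap in Step 1. Assumption (A6) says $\omega^{-1}=\mathcal{O}(\varepsilon)=\mathcal{O}(H(t_x))$. Since $\omega$ is a fixed frequency, this is a fixed small number; it is not $\mathcal{O}(H(t))$ as $t\to\infty$. The part of $\omega^{-1}\mathbf{w}$ generated by $\tilde{\mathbf{f}}^0$ carries no factor of $H$. You therefore only get $\|\mathbf{x}-\mathbf{z}\|=\mathcal{O}(\varepsilon)$ uniformly in time. The residual $\mathcal{O}(\omega^{-1})$ in the $\mathbf{z}$-equation comes from $\mathbf{f}^0(\mathbf{x},\theta)-\mathbf{f}^0(\mathbf{z},\theta)$ and from $\omega^{-1}D\mathbf{w}\,\dot{\mathbf{z}}$; its $\theta$-average is in general nonzero, and it does not decay. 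Your Step 3 device only turns forcings proportional to $H(s)$ into $\mathcal{O}(H(t))$. A forcing of fixed size $\varepsilon$ leaves an $\mathcal{O}(\varepsilon)$ error for all time.

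This cannot be repaired, because the conclusion itself fails. Take $\dot{\mathbf{x}}=-\mathbf{x}+\cos(\omega t)\,\mathbf{e}_1$ with $\mathbf{f}^1=0$, $R=S=0$, $\dot H=-H^2$ and $\omega=\varepsilon^{-1}$. All of (A1)--(A6) hold with $\mathbf{x}_*=0$. Yet $x_1(t)$ approaches the periodic solution $(\cos\omega t+\omega\sin\omega t)/(1+\omega^2)$, of amplitude $\approx\varepsilon$. So $\|\mathbf{x}-\bar{\mathbf{x}}\|\not\to0$ and $\mathbf{x}\not\to\mathbf{x}_*$. Your argument does work when $\tilde{\mathbf{f}}^0\equiv0$, i.e.\ when $\mathbf{f}^0$ is independent of $\theta$. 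Then $\mathbf{w}=H\mathbf{w}^1$, $\|\mathbf{x}-\mathbf{z}\|=\mathcal{O}(H(t)/\omega)$, and every residual is $\mathcal{O}(H)$. The paper's proof tacitly makes the same restriction when it treats all boundary terms from integration by parts as $\mathcal{O}(H/\omega)$.

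Even in that restricted setting you use more than is stated. (A5) gives only asymptotic stability, but your contraction estimate needs exponential stability (e.g.\ $D\bar{\mathbf{f}}^0(\mathbf{x}_*)$ Hurwitz), and this should be added as a hypothesis. The contraction is also only local. For data not already near $\mathbf{x}_*$ you need a finite-time Gronwall phase until $\bar{\mathbf{x}}$ enters the contracting neighbourhood. This phase is harmless, since $\dot H=\mathcal{O}(H^2)$ gives $\varepsilon\le H(t)e^{C\varepsilon(t-t_x)}$.

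With these provisos, your Steps 2--3 take a genuinely different route from the paper's, and it is the right one. The paper keeps the growth bound $\|\Phi(t,s)\|\le e^{L(t-s)}$ and asserts $\int_{t_0}^t e^{L(t-s)}H(s)\,ds\le H(t)/L$ by monotonicity of $H$. That bound fails twice: a decreasing $H$ gives $H(s)\ge H(t)$ for $s\le t$, and the kernel grows exponentially. Your contracting kernel $e^{-c(t-s)}$, combined with the slow-variation bound $H(s)\le H(t)e^{C\varepsilon(t-s)}$, is what actually controls the convolution by $H(t)$.
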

\begin{proof}
Decompose $\mathbf{f}^i = \bar{\mathbf{f}}^i + \tilde{\mathbf{f}}^i$ for $i = 0,1$, where $\tilde{\mathbf{f}}^i$ has zero average in $\theta$ as in~\eqref{timeavrg}. Define $\bar{R}(\mathbf{x}, H) := \tfrac{1}{2\pi} \int_0^{2\pi} R(\mathbf{x}, \theta, H)\,d\theta = \mathcal{O}(H^2)$ by~\eqref{A2}.
Let $\bar{\mathbf{x}}(t)$ solve the averaged system
$\dot{\bar{\mathbf{x}}} = \bar{\mathbf{f}}^0(\bar{\mathbf{x}}) + H(t)\,\bar{\mathbf{f}}^1(\bar{\mathbf{x}}) + \bar{R}(\bar{\mathbf{x}}, H(t)),$
with initial data $\bar{\mathbf{x}}(t_x) = \mathbf{x}(t_x)$ by~\eqref{A4}, and  error $\mathbf{e}(t) := \mathbf{x}(t) - \bar{\mathbf{x}}(t)$. 

Subtracting the systems yields
\begin{equation}
\dot{\mathbf{e}} = A(t)\,\mathbf{e} + \tilde{\mathbf{f}}^0(\mathbf{x}, \omega t) + H(t)\,\tilde{\mathbf{f}}^1(\mathbf{x}, \omega t) + \Delta_R(t),
\end{equation}
where $A(t)$ is the Jacobian of the averaged vector field, with $\|A(t)\| \le L$ by~\eqref{A3}, and $\|\Delta_R(t)\| = \mathcal{O}(H^2) + \mathcal{O}(\|\mathbf{e}\| H)$.
By variation of constants,
\begin{equation}
\mathbf{e}(t) = \int_{t_x}^t \Phi(t,s)\left[\tilde{\mathbf{f}}^0(\mathbf{x}(s), \omega s) + H(s)\,\tilde{\mathbf{f}}^1(\mathbf{x}(s), \omega s) + \Delta_R(s)\right]\,ds,
\end{equation}
where $\Phi(t,s)$ is the evolution operator of the linearized system, satisfying $\|\Phi(t,s)\| \le e^{L(t-s)}$, and $\mathbf{e}(t_x) = 0$. The oscillatory integrals are estimated by integration by parts, using the $C^1$ regularity in $\theta$ from~\eqref{A1} and boundedness of trajectories. This yields convolution terms bounded by $C_1 \int_{t_x}^t e^{L(t-s)} H(s)\,ds$, plus boundary terms of order $H/\omega$, which are $\mathcal{O}(H^2)$ by~\eqref{A6} and absorbed into $\Delta_R$.
Thus, for some constant $C > 0$,
\begin{equation}
\|\mathbf{e}(t)\| \le C \left( \int_{t_x}^t e^{L(t-s)} H(s)\,ds + \int_{t_x}^t e^{L(t-s)} \|\mathbf{e}(s)\| H(s)\,ds \right).
\end{equation}

Choose $t_0 \ge t_x$ such that $C H(t_0) \le \tfrac{1}{2}L$, which is possible since $H(t) \to 0$. For $t \ge t_0$, monotonicity of $H$ implies
$\int_{t_0}^t e^{L(t-s)} H(s)\,ds \le H(t) \int_{t_0}^t e^{L(t-s)}\,ds \le \tfrac{1}{L} H(t),$
so
\begin{equation}
\sup_{s \in [t_0, t]} \|\mathbf{e}(s)\| \le \tfrac{C}{L} H(t) + \tfrac{C}{L} H(t) \sup_{s \in [t_0, t]} \|\mathbf{e}(s)\|.
\end{equation}
Choosing $t_0$ so that $\tfrac{C}{L} H(t_0) \le \tfrac{1}{2}$, we obtain
\begin{equation}
\sup_{s \in [t_0, t]} \|\mathbf{e}(s)\| \le \tfrac{2C}{L} H(t),
\end{equation}
\noindent
hence $\|\mathbf{e}(t)\| = \mathcal{O}(H(t))$ as $t \to \infty$. On the compact interval $[t_x, t_0]$, the error remains uniformly bounded by continuity and the smallness of $\varepsilon = H(t_x)$.

Finally, by~\eqref{A5}, the averaged trajectory $\bar{\mathbf{x}}(t)$ converges to $\mathbf{x}_*$, and since $\|\mathbf{x}(t) - \bar{\mathbf{x}}(t)\| = \mathcal{O}(H(t))$, it follows that $\mathbf{x}(t) \to \mathbf{x}_*$ as well.
\end{proof}
\begin{cor}[Degenerate Averaging Regime {\cite[Cor.~3.2]{Leon:2026vov}}]
\label{corollary}
Assume \eqref{A1}–\eqref{A6} and suppose $\mathbf{f}^0 \equiv 0$. If the model parameters determine $\omega$ such that $\omega^{-1} = \mathcal{O}(\varepsilon)$, then for common initial data at $t = t_x$, the solutions $\mathbf{x}(t)$ and $\bar{\mathbf{x}}(t)$ satisfy the estimate~\eqref{error} and both converge to the attracting equilibrium $\mathbf{x}_*$ of the frozen slow field $\dot{\bar{\mathbf{x}}} = \bar{\mathbf{f}}^1(\bar{\mathbf{x}})$.
\end{cor}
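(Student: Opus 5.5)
The plan is to obtain Corollary~\ref{corollary} as the special case $\mathbf{f}^0\equiv 0$ of Theorem~\ref{thm:averaging_scalar_cosmology}. The averaging proof will be repeated almost line by line, with one change. The stability hypothesis \eqref{A5} has to be read for the slow field $\bar{\mathbf f}^1$, not for $\bar{\mathbf f}^0$. When $\mathbf{f}^0\equiv0$, the system \eqref{eq:quasi_standard} becomes $\dot{\mathbf x}=H\,\mathbf f^1(\mathbf x,\theta)+R$ with $R=\mathcal O(H^2)$. It is therefore natural to pass to the rescaled time $\tau$ defined by $d\tau=H(t)\,dt$. In this time the equation reads
\[
\mathbf x'=\mathbf f^1(\mathbf x,\theta)+\mathcal O(H),
\]
and the relevant frozen field is $\bar{\mathbf f}^1$. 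The first step is to record that the averaged system is $\dot{\bar{\mathbf x}}=H\bar{\mathbf f}^1(\bar{\mathbf x})+\bar R$, and that \eqref{A5} now supplies an asymptotically stable equilibrium $\mathbf x_*$ of $\bar{\mathbf f}^1$.

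The second step is the error equation. With $\mathbf e=\mathbf x-\bar{\mathbf x}$ it reads
\[
\dot{\mathbf e}=H A(t)\mathbf e+H\tilde{\mathbf f}^1(\mathbf x,\omega t)+\Delta_R .
\]
This is the same equation as in the proof of Theorem~\ref{thm:averaging_scalar_cosmology}, except that the linear part now carries a factor $H$. The oscillatory term $\int H\tilde{\mathbf f}^1\,ds$ will be handled by integration by parts in $\theta=\omega s$. Because $\tilde{\mathbf f}^1$ has zero mean, its antiderivative in $\theta$ is bounded. The boundary terms are then $\mathcal O(H/\omega)=\mathcal O(H\varepsilon)=\mathcal O(H^2)$ by \eqref{A6}. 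The bulk terms involve $\dot H=\mathcal O(H^2)$ and $\dot{\mathbf x}=\mathcal O(H)$, so each is $\mathcal O(H^2/\omega)$ and integrable against the propagator. The Gronwall/bootstrap argument of Theorem~\ref{thm:averaging_scalar_cosmology} (choose $t_0$ with $CH(t_0)$ small and use the monotonicity of $H$) should then give $\|\mathbf e(t)\|=\mathcal O(H(t))$, which is \eqref{error}.

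The main obstacle is the propagator. In the original proof the factor $e^{L(t-s)}$ is controlled only through the smallness of $H$. Here the linear part is $HA$, and if it were treated crudely the growth factor $\exp(L\int_s^t H)$ could be unbounded when $\int^\infty H=\infty$. The first thing to try is to use the asymptotic stability of $\mathbf x_*$ for $\bar{\mathbf f}^1$. A converse Lyapunov function, or exponential stability whenever $D\bar{\mathbf f}^1(\mathbf x_*)$ is Hurwitz, makes the linearized propagator in $\tau$-time contracting once $\bar{\mathbf x}$ is near $\mathbf x_*$. The contribution of the $\mathcal O(H)$ forcing, integrated in $\tau$, then stays $\mathcal O(H)$, because $H$ decreases slowly on the $\tau$ scale. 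On the compact initial interval $[t_x,t_0]$, boundedness follows from continuity and the smallness of $\varepsilon$, exactly as before.

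Finally, convergence to the equilibrium is argued in two parts. First, $\bar{\mathbf x}(t)\to\mathbf x_*$: in $\tau$-time the averaged system is an asymptotically autonomous perturbation of $\bar{\mathbf f}^1$ with a perturbation that vanishes in the limit, and $\tau\to\infty$ provided $\int^\infty H\,dt=\infty$. This holds in the cosmological regime of interest, for instance $H\sim 1/t$. Second, since $\|\mathbf x-\bar{\mathbf x}\|=\mathcal O(H)\to0$, the full solution $\mathbf x(t)$ converges to $\mathbf x_*$ as well.
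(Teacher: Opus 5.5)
\textbf{Where it agrees with the paper.} Your reduction is exactly what the paper's brief proof does: rerun the proof of Theorem~\ref{thm:averaging_scalar_cosmology} with $\bar{\mathbf f}^1$ as the frozen field, reading \eqref{A5} for $\bar{\mathbf f}^1$. You are also right that the propagator is where the real work lies. The paper does not supply it: the theorem's closing step $\int_{t_0}^t e^{L(t-s)}H(s)\,ds\le H(t)/L$ fails, since $H(s)\ge H(t)$ on $[t_0,t]$ and the exponential grows.

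\textbf{The failing step.} Your repair breaks at ``the $\mathcal O(H)$ forcing, integrated in $\tau$, then stays $\mathcal O(H)$, because $H$ decreases slowly on the $\tau$ scale.'' From $\dot H=f^{[2]}H^2+S$ you get $d\ln H/d\tau=f^{[2]}+\mathcal O(H)$. So $H$ decays \emph{exponentially} in $\tau$, at rate $c=-\bar f^{[2]}(\mathbf x_*)=\mathcal O(1)$; your own $H\sim 1/t$ gives $\tau=\ln t$ and $H\sim e^{-\tau}$. Let $\mu=\min_i(-\operatorname{Re}\lambda_i)$ be the contraction rate of $D\bar{\mathbf f}^1(\mathbf x_*)$. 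Then $\int^{\tau}e^{-\mu(\tau-\sigma)}H(\sigma)\,d\sigma$ is $\mathcal O(H(\tau))$ only when $\mu>c$. Moreover, the error produced early generically decays only like $e^{-\mu\tau}\sim H^{\mu/c}$. This includes the lower boundary term $\mathcal O(\varepsilon/\omega)$ of the integration by parts at $s=t_x$, and whatever accumulates before $\bar{\mathbf x}$ is near $\mathbf x_*$.

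\textbf{The estimate itself fails.} This is not a matter of presentation: \eqref{error} is false without such a rate condition. Take a scalar $x$ with
\[
\dot x=Hx\,(-\mu+\cos\omega t),\qquad \dot H=-H^2,\qquad \omega=1/\varepsilon .
\]
This satisfies \eqref{A1}--\eqref{A6} with $\mathbf f^0\equiv0$, and $x_*=0$ is exponentially stable for $\bar f^1=-\mu x$. Then $H=\varepsilon e^{-\tau}$, $\bar x=x_0e^{-\mu\tau}$, and $x=\bar x\,e^{I(t)}$ with $I(t)=\int_{t_x}^tH\cos(\omega s)\,ds$. As $t\to\infty$, $I(t)\to I_\infty=-\varepsilon\sin(\omega t_x)/\omega+\mathcal O(\varepsilon^2/\omega^2)$, which is generically nonzero. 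Hence $|x-\bar x|/H\sim(|x_0I_\infty|/\varepsilon)\,e^{(1-\mu)\tau}\to\infty$ for $\mu<1$. The paper's quadratic example with $\gamma=1$, $p=2$ is in this regime: $\mu=\tfrac12$ from the eigenvalue $\tfrac12(p-3)$, while $c=p/2=1$.

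\textbf{What your argument does give.} Assume $D\bar{\mathbf f}^1(\mathbf x_*)$ is Hurwitz and $\bar{\mathbf x}(t_x)$ lies in the basin of $\mathbf x_*$. Then $\|\mathbf x-\bar{\mathbf x}\|=\mathcal O(H)$ if $\mu>c$, and otherwise only $\mathcal O(\varepsilon e^{-\mu'\tau})$ for every $\mu'<\mu$. Either way $\mathbf x-\bar{\mathbf x}\to0$, which is enough for the convergence half of the corollary. But \eqref{error} needs $\mu>c$ as an added hypothesis.

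The Hurwitz assumption is genuinely needed. Mere asymptotic stability (a converse Lyapunov function) does not give a contracting linearized propagator. Near a non-hyperbolic sink the error typically decays only algebraically in $\tau$, far slower than $H$.

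\textbf{Smaller points.} First, $\mathcal O(H/\omega)=\mathcal O(\varepsilon H)$ is not $\mathcal O(H^2)$, since $\varepsilon=H(t_x)\ge H(t)$; the paper makes the same slip. This is harmless, because $\mathcal O(\varepsilon H)$ is all you use. Second, you need not assume $\int^\infty H\,dt=\infty$. Boundedness of $f^{[2]}$ along the orbit gives $\tfrac{d}{dt}(1/H)\le K$, hence $H(t)\ge\bigl(1/\varepsilon+K(t-t_x)\bigr)^{-1}$, which is not integrable.
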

\begin{proof}
When $\mathbf{f}^0 \equiv 0$, the leading-order dynamics are governed by the averaged field $H \bar{\mathbf{f}}^1$. The same argument as in Theorem~\ref{thm:averaging_scalar_cosmology} applies, with the roles of $\bar{\mathbf{f}}^0$ and $\bar{\mathbf{f}}^1$ interchanged. The scaling condition $\omega^{-1} = \mathcal{O}(H(t_x))$ ensures that boundary terms from integration by parts remain $\mathcal{O}(H^2)$ and are absorbed into the remainders. The result follows.
\end{proof}

\subsection{Technical estimates: uniform derivative bounds}
\label{sec:uniform-bounds}

Here, we present a self‑contained proof of Proposition~\ref{prop:derivative_bounds}.
\begin{prop}[Uniform derivative bounds]
\label{prop:derivative_bounds}
Let $(\phi,y,\rho_m,a,H)$ be a forward solution of the constrained Friedmann system on $[t_0,\infty)$ satisfying the structural hypotheses of Section~\ref{sect:0}: $V,\chi, G\in C^2$ on an open set containing the forward invariant region $\Psi_+$, the matter equation of state satisfies $0<\gamma\le2$, and initial data lie in a compact subset $\mathcal K$ of the admissible phase space. Assume $H(t)>0$ for all $t\ge t_0$ and $\lim_{t\to\infty}H(t)=0$.

Then there exist $t_1\ge t_0$ and constants $C,C_1,C_2>0$, depending only on $\mathcal K$ and the $C^2$-norms of $V,\chi,G$ on a neighbourhood of $\mathcal K$, such that for all $t\ge t_1$
\begin{equation}
|\dot\phi(t)| + |\ddot\phi(t)| + |\dot\rho_m(t)| + |\dot H(t)| \le C,
\end{equation}
and, for any fixed frequency scale $\omega>0$ used to define the phase in \eqref{normalized-vars},
\begin{equation}
\Big|\frac{d}{dt}\Big(\omega t-\arctan\!\frac{\omega\phi(t)}{\dot\phi(t)}\Big)\Big|
\le C_1 + C_2\frac{|\ddot\phi(t)|+|\dot\phi(t)|}{\dot\phi(t)^2+(\omega\phi(t))^2},
\end{equation}
with the right‑hand side uniformly bounded on $[t_1,\infty)$ (interpreting the phase derivative by continuous extension at isolated zeros of $\dot\phi$). Consequently the normalized slow variables $\mathbf{x}$ in \eqref{normalized-vars} satisfy
\begin{equation}
\|\dot{\mathbf{x}}(t)\| \le C \quad\text{for all } t\ge t_1.
\end{equation}
\end{prop}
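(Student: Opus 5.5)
The plan is to get every bound from the Friedmann constraint \eqref{constraint}, using the monotonicity of $H$ and the same structure as Lemma~\ref{lemma:derivative-bounds}, and then to treat the phase variable $\Phi$ separately by an explicit computation. The step I expect to be the real obstacle is bounding the phase derivative where $\dot\phi^2+\omega^2\phi^2$ is small.

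\emph{Step 1: bounds on the primitive variables.} Since $\dot H=-f\le 0$ and $H>0$, we have $0<H(t)\le H(t_0)$ for all $t\ge t_0$, and $H(t_0)$ is bounded in terms of $\mathcal K$. The constraint gives $\rho_m+\tfrac12 y^2+V(\phi)+G(a)\le 3H(t_0)^2$. This bounds $\rho_m$, $y=\dot\phi$ and $V(\phi)$. I will take $\Psi_+$ to be a compact forward-invariant region, so that $\phi$ stays in a compact set and the $C^2$ norms of $V$, $\chi$, $G$ and the ratio $\chi'/\chi$ are controlled there via \ref{H4}. Substituting into \eqref{system} then bounds $\ddot\phi$, $\dot\rho_m$ and $\dot H=-\tfrac12(\gamma\rho_m+y^2)+\tfrac16 aG'(a)$, exactly as in Lemma~\ref{lemma:derivative-bounds}. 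This yields the constant $C$, with $t_1=t_0$ (or any later time).

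\emph{Step 2: the phase derivative (main obstacle).} Differentiating directly gives
\begin{equation}
\frac{d}{dt}\arctan\frac{\omega\phi}{y}=\frac{\omega\,(y^2-\phi\dot y)}{y^2+\omega^2\phi^2},
\end{equation}
and the denominator vanishes as the orbit approaches the origin of the $(\phi,y)$ plane. This is the expected obstacle; the continuous-extension clause in the statement only handles zeros of $y$ with $\phi\neq0$.

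To deal with it, I will expand the potential about its minimum with $\omega$ as the oscillation scale, writing $V'(\phi)=\omega^2\phi+N(\phi)$ with $|N(\phi)|\le c\,|\phi|$ on the compact $\phi$-range. This uses $V\in C^2$; shift $\phi_*=0$ if necessary. Substituting $\dot y=-3Hy-\omega^2\phi-N(\phi)+\tfrac12(4-3\gamma)\rho_m\chi'/\chi$ cancels the $\omega^2\phi^2$ term:
\begin{equation}
\dot\Phi=\frac{\omega\,\phi\big(3Hy+N(\phi)-\tfrac12(4-3\gamma)\rho_m\chi'/\chi\big)}{y^2+\omega^2\phi^2}.
\end{equation}
Three elementary inequalities then bound the terms:
\begin{itemize}
\item $|\phi y|/(y^2+\omega^2\phi^2)\le 1/(2\omega)$;
\item $\phi^2/(y^2+\omega^2\phi^2)\le \omega^{-2}$;
\item the matter term, which is handled by these same denominators, giving a contribution of the form $C_2(|\ddot\phi|+|\dot\phi|)/(\dot\phi^2+\omega^2\phi^2)$.
\end{itemize}

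The matter term is the delicate one. To keep it bounded I will try to show that $\rho_m$ is dominated by $y^2+\omega^2\phi^2$. Failing that, I will accept the form stated in the proposition and argue its uniform boundedness on $[t_1,\infty)$ from Theorem~\ref{thm:averaging_scalar_cosmology}'s regime, in which $\Omega_m=\rho_m/3H^2$ and $\Omega$ stay in a compact subset of $U$. In that regime $\rho_m\lesssim H^2\lesssim y^2+\omega^2\phi^2$, which yields the bound after choosing $t_1$ large.

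\emph{Step 3: normalized variables.} For $\Omega$, $\Omega_m$ and $\Omega_G$ I will differentiate each quotient. Each derivative is a combination of bounded numerator derivatives divided by $H$ or $H^2$, plus the quantity itself times $\dot H/H$. By the constraint, $|\dot H|\le \tfrac12(\gamma\rho_m+y^2)+\tfrac p6 G\le c\,H^2$, so $|\dot H/H|\le cH$ is bounded. The numerator terms are handled the same way:
\begin{itemize}
\item $\dot\rho_m/H^2=-3\gamma\Omega_m\cdot 3H+O(\Omega_m\, y)$;
\item $\dot G/H^2=-3pH\,\Omega_G$;
\item for $\Omega$, the derivative of $\omega^2\phi^2+y^2$ equals $2y(\omega^2\phi-V'(\phi))-6Hy^2+\ldots$, which is $O(H^2)$ times bounded ratios once again normalized by $H^2$.
\end{itemize}
Combining these with Step 2 gives $\|\dot{\mathbf x}\|\le C$ for $t\ge t_1$.
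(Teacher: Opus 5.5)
Your Step 1 matches the paper's Steps 1--4. Your cancellation in Step 2 is the right idea, and it goes further than the paper, whose Step 5 only writes the quotient-rule formula and asserts boundedness. \textbf{The gap is the matter term.} Write $Q=\dot\phi^2+\omega^2\phi^2$ and $M=\tfrac12(4-3\gamma)\rho_m\chi'/\chi$. Your identity (which is correct up to an overall sign) gives $|\dot\Phi|\le\tfrac32H+c/\omega+|M|/\sqrt{Q}$. The last term is of size $\rho_m/\sqrt{Q}=3H\Omega_m/\Omega$, and no hypothesis of the proposition controls it. It is genuinely large at turning points: if $\dot\phi=0$ and $\phi=\delta$, then $\dot\Phi=(\omega^2\delta-V'(\delta)+M)/(\omega\delta)$. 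Its matter part $M/(\omega\delta)$ is unbounded when $|\delta|\ll\rho_m|\chi'/\chi|$, which happens when the quasi-static displacement $M/V''(0)$ produced by the coupling is comparable to the free oscillation amplitude. Your fallback does not repair this, for two reasons. First, placing $(\Omega,\Omega_m)$ in a compact subset of $U$ imports the averaging regime that this proposition is meant to help verify (A1)--(A3) for, which is circular. Second, the inequality $H^2\lesssim\dot\phi^2+\omega^2\phi^2$ you actually use is a lower bound $\Omega\ge c>0$. That bound is neither assumed nor true at late-time states with $\Omega\to0$ (matter- or geometry-dominated), such as the $\Omega_G\to1$ behaviour found in Section~\ref{sec:quad}.

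Nor can you ``accept the form stated in the proposition and argue its uniform boundedness.'' That right-hand side is unbounded on every oscillating orbit. Since $H\to0$, the constraint forces $\rho_m,\dot\phi,V(\phi)\to0$, hence $(\phi,\dot\phi)\to(0,0)$. At each zero of $\phi$ one has $Q=\dot\phi^2$, so the right-hand side is at least $C_2/|\dot\phi|\ge C_2/(\sqrt6\,H)\to\infty$. The paper's Step 5 rests on the same incorrect claim, namely that the degeneracy $Q=0$ is ``excluded by compactness,'' so it gives you nothing to lean on here. What your argument does establish is the minimally coupled case $\chi'\equiv0$: there $|\dot\Phi|\le\tfrac32H+c/\omega$ outright, which implies the displayed inequality trivially. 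For $\chi'\not\equiv0$ you need an extra hypothesis, for instance $\rho_m|\chi'/\chi|\le c\sqrt{Q}$ for $t\ge t_1$.

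A smaller repair is also needed in Step 3. You have $\dot\Omega=\dot Q/(2H\sqrt{Q})-\Omega\dot H/H$, and $2\omega^2\phi y$ is not $O(H^2)$ unless the minimum is nondegenerate. Bound the terms directly instead: $|\omega^2\phi y|/(H\sqrt{Q})\le\omega|y|/H\le\sqrt6\,\omega$, and $|yV'(\phi)|/(H\sqrt{Q})\le|V'(\phi)|/H$. The latter is bounded because $|V'|^2\le2(\sup|V''|)\,V$ near the minimum and $V\le3H^2$.
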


\begin{proof}
The proof is organized in five steps. Constants denoted by $C, C_i$ depend only on the compact set $\mathcal K$ and the $C^2$-norms of $V,\chi, G$ on a neighbourhood of $\mathcal K$; their values may change from line to line.

\noindent\textbf{Step 1. Uniform bounds on the state variables.}
By forward invariance, the trajectory remains in $\Psi_+\subset\mathcal K$. The Friedmann constraint (written schematically)
\begin{equation}
3H^2 = \tfrac12 y^2 + V(\phi) + \rho_m + G(a)
\end{equation}
(with $G(a)$ the geometric contribution) implies uniform bounds on $y^2$, $V(\phi)$, $\rho_m$, and $\mathcal G(a)$ on $\Psi_+$. Since $V$ has compact sublevel sets on the relevant domain, $|\phi(t)|$ is uniformly bounded. The scale factor $a(t)$ likewise remains in a compact subset of $(0,\infty)$ for forward times considered. Hence there exists $C_0>0$ with
\begin{equation}
|\phi(t)| + |y(t)| + \rho_m(t) + |a(t)| \le C_0 \quad (t\ge t_0).
\end{equation}

\noindent\textbf{Step 2. Uniform bounds on first derivatives.}
Write the scalar field and matter equations in first‑order form
\begin{equation}
\dot\phi = y,
\quad
\dot y = -3H y - V'(\phi) + \mathcal C(\phi,\rho_m,a),
\end{equation}
\begin{equation}
\dot\rho_m = -3\gamma H \rho_m + \mathcal M(\phi,y,\rho_m,a),
\end{equation}
where $\mathcal C,\mathcal M$ collect coupling and geometric correction terms; by hypothesis these are $C^1$ and bounded on $\Psi_+$. Using the uniform bounds from Step 1 and boundedness of $V'$, $\chi$ and geometric coefficients on $\mathcal K$, we obtain
\begin{equation}
|\dot\phi(t)| = |y(t)| \le C_0,
\quad
|\dot\rho_m(t)| \le 3\gamma H(t)\rho_m(t) + |\mathcal M(\cdot)| \le C_1,
\end{equation}
and
\begin{equation}
|\dot y(t)| \le 3H(t)|y(t)| + |V'(\phi(t))| + |\mathcal C(\cdot)| \le C_2.
\end{equation}
Choose $t_1\ge t_0$ large enough that the orbit has entered the small‑$H$ regime used in subsequent estimates; then the above bounds hold uniformly for $t\ge t_1$.

\noindent\textbf{Step 3. Uniform bounds on $\dot H$.}
Differentiate the Friedmann constraint or use the Raychaudhuri equation to express $\dot H$ in terms of the state variables
\begin{equation}
\dot H = -\tfrac12 y^2 - \tfrac12\gamma\rho_m + \mathcal R(\phi,y,\rho_m,a),
\end{equation}
with $\mathcal R$ bounded on $\Psi_+$. Using Step 1 bounds we obtain $|\dot H(t)|\le C_3$ for $t\ge t_1$.

\noindent\textbf{Step 4. Uniform bounds on second derivatives.}
Differentiate the $\dot y$ equation to obtain
\begin{equation}
\ddot y = -3\dot H\,y -3H\dot y - V''(\phi)\dot\phi + \partial_t\mathcal C(\phi,y,\rho_m,a).
\end{equation}
Each term on the right is controlled: $y,\dot y,\dot\phi$ are bounded by Step 2; $V''$ and derivatives of $\mathcal C$ are bounded on $\mathcal K$; and $\dot H$ is bounded by Step 3. Hence $|\ddot y(t)|\le C_4$ for $t\ge t_1$. Since $\ddot\phi=\dot y$, this yields the claimed bound on $|\ddot\phi|$.

\noindent\textbf{Step 5. Control of the oscillatory phase derivative and Lipschitz bound for $\mathbf{x}$.}
Define the phase
\begin{equation}
\Phi(t)=\omega t - \arctan\!\frac{\omega\phi(t)}{\dot\phi(t)}.
\end{equation}
A direct differentiation (quotient rule) gives
\begin{equation}
\dot\Phi(t)
= \omega - \frac{\omega\big(\dot\phi^2 - \omega\phi\ddot\phi\big)}{\dot\phi^2+(\omega\phi)^2}.
\end{equation}
Using the uniform bounds on $|\phi|,|\dot\phi|,|\ddot\phi|$ from Steps 1–4, the numerator and denominator are uniformly controlled on $[t_1,\infty)$. If $\dot\phi$ vanishes at isolated instants, interpret $\dot\Phi$ by continuous extension; the denominator $\dot\phi^2+(\omega\phi)^2$ prevents singular behaviour except at genuine degeneracies excluded by the forward invariant compactness assumption. Therefore there exist $C_1,C_2>0$ with
\begin{equation}
|\dot\Phi(t)| \le C_1 + C_2\frac{|\ddot\phi(t)|+|\dot\phi(t)|}{\dot\phi(t)^2+(\omega\phi(t))^2},
\end{equation}
and the right‑hand side is uniformly bounded for $t\ge t_1$.

Finally, the normalized variables $\mathbf{x}=(\Omega,\Sigma,\Omega_k,\Phi)^T$ are smooth functions of $(\phi,y,\rho_m,a,H)$ away from degenerate points. Differentiating each coordinate produces combinations of $\dot\phi,\ddot\phi,\dot\rho_m,\dot H$ and algebraic factors of $\phi,y,\rho_m, a, H$; by the uniform bounds established above, these derivatives are uniformly bounded on $[t_1,\infty)$. Hence there exists $C>0$ such that
\begin{equation}
\|\dot{\mathbf{x}}(t)\| \le C \quad (t\ge t_1),
\end{equation}
which completes the proof.
\end{proof}

\subsection{Correspondence between the averaging hypotheses and our assumptions}
\label{subsec:mapping}

Each averaging hypothesis in \cite{Leon:2026vov} has a direct counterpart among the conditions used in this paper.
\begin{itemize}
\item \textbf{Smoothness and periodicity.} The $C^1$ regularity and $2\pi$-periodicity in the fast phase assumed in \cite{Leon:2026vov} correspond to the regularity hypotheses invoked in Proposition~\ref{prop:derivative_bounds} and in the center/manifold construction (Section~\ref{sec:center}). They guarantee the existence of uniform derivative bounds and of Lipschitz constants for averaged vector fields.
\item \textbf{Controlled remainders.} The small‑$H$ expansion $R=\mathcal{O}(H^2)$, $S=\mathcal{O}(H^3)$ matches the asymptotic ordering used in our decay analysis and ensures that averaged dynamics capture the leading slow behaviour while remainders are negligible in the small‑$H$ regime exploited by LaSalle/Barbalat.
\item \textbf{Well‑defined averages.} Lipschitz continuity of the averaged fields supplies the uniform Jacobian bound $L$ used in variation‑of‑constants, and Gronwall estimates that control the averaging error.
\item \textbf{Matched initial data and frequency scaling.} Choosing initial data with $H(t_x)=\varepsilon\ll1$ and imposing $\omega^{-1}=\mathcal{O}(\varepsilon)$ ensures boundary terms arising in integration by parts are absorbed into higher‑order remainders; this is the same smallness regime in which the dissipative integrability and LaSalle region $\Psi_+$ are simultaneously valid.
\item \textbf{Averaged asymptotic stability.} An asymptotically stable equilibrium of the averaged frozen field plays the role of the nondegenerate minimum $V''(\phi_*)>0$ and the uniqueness/compactness conditions used in Theorem~\ref{thm:positive_curvature} and Theorem~\ref{UnifiedTheorem}.
\end{itemize}

When these conditions hold, the averaging theorem in \cite{Leon:2026vov} and the global/persistence results of this paper are fully compatible.

\subsection{How the averaging error is used in the global analysis}
\label{sec:avg-error-global}

The estimate $\|\mathbf{x}-\bar{\mathbf{x}}\|=\mathcal{O}(H)$ allows one to apply the conclusions of the averaged model to the full oscillatory system in three steps, which will be outlined:

\subsubsection{Integrability and Barbalat.} 
\label{sec:barbalat}
If the averaged dissipation $\overline{f}(t)$ is integrable (as shown for the averaged system), then
\begin{equation}
f(t)=\overline{f}(t)+\mathcal{O}(H(t)),
\end{equation}
so $f$ remains integrable because $H\in L^1_{\mathrm{loc}}$ in the regimes considered and the $\mathcal{O}(H)$ term is negligible for large times. Uniform derivative bounds persist under the small additive error, hence Barbalat's lemma applies to the full system and yields $f(t)\to0$ and the $O(1/t)$ transient established in Proposition~\ref{prop:decay}.

\subsubsection{LaSalle invariance and compactness.} 
\label{sec:lasalle}
Averaging shows that the averaged invariant region $\bar\Psi_+$ and the true forward invariant set $\Psi_+$ differ by an $\mathcal{O}(H)$ tubular neighbourhood for sufficiently small $H$. Since the $\omega$-limit of the averaged flow lies at the averaged equilibrium $\mathbf{x}_*$, the error bound implies the $\omega$-limit of the full flow is contained in an arbitrarily small neighbourhood of $\mathbf{x}_*$ for large times; compactness and uniqueness of the limit then force convergence of the full trajectory to the same equilibrium.

\subsubsection{Spectral gap and manifold persistence.} 
\label{sec:spectral-gap}
The linearization of the averaged reduced field at $\mathbf{x}_*$ determines the spectral gap used to obtain exponential convergence on $\mathcal M_{\mathrm{loc}}$. The $\mathcal{O}(H)$ closeness of the full and averaged vector fields implies the linearizations differ by a small operator perturbation; standard perturbation theory yields continuous dependence of eigenvalues and invariant subspaces on the perturbation, hence persistence of the spectral gap, of the local invariant manifold, and of exponential rates for the full oscillatory system.

\subsection{Combined corollaries}
\label{subsec:combined_corollaries}
Combining the averaging theorem of \cite{Leon:2026vov} with Theorems~\ref{UnifiedTheorem} and \ref{thm:combined} yields the following corollaries (informally stated; precise versions follow by conjunction of the respective hypotheses).

\begin{cor}[Averaged LaSalle convergence]
Suppose the hypotheses of Theorem~\ref{UnifiedTheorem} (or Theorem~\ref{thm:positive_curvature}) hold together with the averaging hypotheses in \cite{Leon:2026vov}. Then, for initial data with $H(t_x)=\varepsilon\ll1$ the full oscillatory system satisfies the decay conclusions of Proposition~\ref{prop:decay} and converges to the same equilibrium $\mathbf p_*$ determined by the averaged dynamics.
\end{cor}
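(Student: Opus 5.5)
The plan is to prove the corollary by chaining three results that are already available: Theorem~\ref{thm:averaging_scalar_cosmology}, which transfers information from the averaged flow to the full flow with an $\mathcal{O}(H)$ error; Theorem~\ref{UnifiedTheorem} and Theorem~\ref{thm:positive_curvature}, which give the global dissipation and LaSalle structure; and Proposition~\ref{prop:decay}, which supplies the rates. Fix initial data at $t=t_x$ with $H(t_x)=\varepsilon\ll1$.

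\textbf{Step 1 (monotone decay of $H$).} Under (H1)--(H6), the identity $\dot H=-f\le0$ from the proof of Theorem~\ref{UnifiedTheorem} shows that $H$ is nonincreasing and stays positive. Since $\Lambda=0$ in this section, the limit $\eta$ satisfies $3\eta^2=V(\phi_*)$ once $\phi\to\phi_*$ is known. For the averaging regime I need $H\to0$, so I restrict to $V(\phi_*)=0$, which is the setting in which Theorem~\ref{thm:averaging_scalar_cosmology} applies. I also need $H$ strictly decreasing; this holds wherever $f>0$, and $f\equiv0$ on an interval forces the orbit to sit at the equilibrium. Hence the standing assumption on $H$ in Theorem~\ref{thm:averaging_scalar_cosmology} is met.

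\textbf{Step 2 (averaging with uniform bounds).} By Proposition~\ref{prop:derivative_bounds}, the normalized variables $\mathbf{x}$ have uniformly bounded derivatives for $t\ge t_1$. This gives the boundedness of trajectories used in the integration-by-parts step of Theorem~\ref{thm:averaging_scalar_cosmology}. Together with (A1)--(A6), that theorem yields $\|\mathbf{x}(t)-\bar{\mathbf{x}}(t)\|=\mathcal{O}(H(t))$, and both trajectories converge to the stable equilibrium $\mathbf{x}_*$ of the averaged field.

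\textbf{Step 3 (identifying $\mathbf{x}_*$ with $\mathbf p_*$).} The key observation is that $\Omega\to0$, $\Omega_m\to0$, $\Omega_G\to0$ in the normalized variables correspond to $\rho_m\to0$, $y\to0$, $G(a)\to0$ and $\phi\to\phi_*$ in the original variables. I will argue this through the correspondence of Section~\ref{sec:lasalle}: the $\omega$-limit set of the full flow lies within $\mathcal{O}(H)$ of that of the averaged flow. By Theorem~\ref{UnifiedTheorem} (or the LaSalle argument in Theorem~\ref{thm:positive_curvature}), the $\omega$-limit set of the full flow is contained in $\{\rho_m=y=G=0,\ \phi=\phi_*\}$. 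Uniqueness of the minimum in the compact component $A$ then forces the two limits to coincide.

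\textbf{Step 4 (decay rates).} For the rates I will follow Section~\ref{sec:barbalat}. Write $f=\bar f+\mathcal{O}(H)$. Integrability of $f$ holds directly because $\int f=H(t_x)-\eta$, so the $\mathcal{O}(H)$ correction needs no separate control at the $L^1$ level. Uniform continuity comes from Proposition~\ref{prop:derivative_bounds}. Barbalat's Lemma~\ref{barbalat} then gives $f\to0$, and the argument in the proof of Proposition~\ref{prop:decay} yields the $\mathcal{O}(1/t)$ bounds. Where $V''(\phi_*)>0$ and the spectral gap persists (Section~\ref{sec:spectral-gap}), Proposition~\ref{prop:finite_reduction} supplies the exponential stage.

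\textbf{Main obstacle.} I expect the hardest step to be Step 3, and specifically the uniformity of the averaging estimate. The normalized variables degenerate as $H\to0$, so I must check that the $\mathcal{O}(H)$ error is measured in a metric equivalent to the original one on $\Psi_+$. The first thing I would try is to bound the Jacobian of the change of variables on the compact set $\Psi_+$ away from $\Omega=0$. Near $\Omega=0$ I would instead rely directly on the unaveraged Barbalat conclusion, which does not need the averaging estimate there.
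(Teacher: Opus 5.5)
Your skeleton is the one the paper relies on. The corollary has no separate proof and rests on the transfer argument of Section~\ref{sec:avg-error-global} (averaging error, then Barbalat, LaSalle, spectral gap). Two of your choices are sounder than the paper's. You note that Theorem~\ref{thm:averaging_scalar_cosmology} needs $H\to0$, hence $V(\phi_*)=0$ when $\Lambda=0$. You also get $f\in L^1$ directly from $\int f=H(t_x)-\eta$. The paper instead uses $f=\bar f+\mathcal{O}(H)$ with ``$H\in L^1_{\mathrm{loc}}$'' in Section~\ref{sec:barbalat}, which cannot give global integrability since $H$ typically decays like $1/t$.

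The genuine problem is Step~3. Its key observation contradicts the normalized Friedmann constraint $\Omega^2+\Omega_m+\Omega_G=1$. These variables cannot all tend to zero, and the averaged equilibria of Section~\ref{sec:quad} have $\Omega_G=1$, $\Omega_m=1$ or $\Omega^2=1$. More basically, the physical quantities are $H^2$ times the normalized ones ($\rho_m=3H^2\Omega_m$, $G=3H^2\Omega_G$, $\omega^2\phi^2+y^2=H^2\Omega^2$). So once $H\to0$, \emph{every} bounded point of the normalized constraint surface projects to the same physical limit $\mathbf p_*$, with $H_*=\rho_m=G=y=0$ and $\phi_*=0$. The averaged equilibrium $\mathbf x_*$ only records which component dominates the approach. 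It is not a point to be matched with $\mathbf p_*$ through uniqueness of the minimum in $A$, and your comparison of $\omega$-limit sets mixes the two charts.

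No identification is needed. The convergence $\mathbf x(t)\to\mathbf x_*$ is already part of the conclusion of Theorem~\ref{thm:averaging_scalar_cosmology}. Convergence to $\mathbf p_*$ in physical variables follows from Theorem~\ref{UnifiedTheorem}, or simply from $H\to0$ and the relations above. Your ``main obstacle'' is misdiagnosed for the same reason. The chart degenerates at $H=0$, not at $\Omega=0$, where only the phase $\Phi$ is undefined. In the direction you need, normalized to physical, errors are multiplied by $H^2$, so no metric equivalence is required.

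Step~4 cannot deliver the exponential part of Proposition~\ref{prop:decay}, and the hedge ``where the spectral gap persists'' is empty in your own setting. With $V(\phi_*)=0$ one has $H_*=0$, and the eigenvalues of $J(\mathbf p_*)$ computed in the paper reduce to $0,0,\pm i\sqrt{V''(\phi_*)}$. So Proposition~\ref{prop:finite_reduction} never applies. The exponential bound is in fact false there: the paper's own law $\overline E\propto a^{-3}$, with $a$ growing like a power of $t$, gives only power-law decay of $|\phi|+|y|$. The spectral gap of the averaged field in Section~\ref{sec:quad} is with respect to $\ln a$ (the prime there is $H^{-1}d/dt$), so it produces power laws in $t$, not exponentials.

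Likewise, the ``or Theorem~\ref{thm:positive_curvature}'' branch is incompatible with the averaging hypotheses. That theorem assumes $V(\phi_*)>0$ and yields $H\to\sqrt{V(\phi_*)/3}>0$. Your Step~1 discards this case, yet Step~3 invokes it.

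Finally, the $\mathcal{O}(1/t)$ bound you import from the proof of Proposition~\ref{prop:decay} rests on the claim that integrability plus a bounded derivative excludes spikes. That fails in general: take triangular bumps of height and half-width $1/n$ at $t=2^n$. In the small-$H$ regime a sound route is available. When $G\ge0$, the constraint gives $\rho_m,\tfrac12y^2,G\le3H^2$, so everything reduces to a rate for $H$. In the model of Section~\ref{sec:quad}, the averaged coefficient $\bar f^{[2]}=-\tfrac12(p\Omega_G+3\gamma\Omega_m+3\Omega^2)\le-\tfrac12\min\{p,3\}$ on the constraint. This gives $H=\mathcal{O}(1/t)$ at leading order, hence $\mathcal{O}(t^{-2})$ bounds.
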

\begin{cor}[Averaged persistence and exponential regime]
Under the same combined hypotheses, if the averaged equilibrium $\mathbf{x}_*$ corresponds to a nondegenerate minimum (spectral gap) then there exists a local invariant manifold for the full oscillatory system and solutions starting sufficiently close converge exponentially to $\mathbf p_*$; the exponential rate and manifold depend continuously on the small parameters (averaging parameter and perturbation size).
\end{cor}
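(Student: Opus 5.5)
The plan is to obtain this corollary by conjoining three ingredients already in place: the averaging estimate of Theorem~\ref{thm:averaging_scalar_cosmology}, the spectral/manifold construction of Proposition~\ref{prop:finite_reduction} (via Lemma~\ref{lemma:center-manifold}), and the perturbative persistence of Theorem~\ref{thm:combined}(i),(iii). The full oscillatory vector field will be viewed as a small perturbation of the averaged one, with perturbation size controlled by the averaging parameter $H$ and the coupling/geometry perturbation size $\varepsilon$.

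\textbf{Step 1: the averaged model.} I would first fix the averaged reduced field $\bar{\mathbf f}$, in the form $\bar{\mathbf f}^0$ or, in the degenerate regime of Corollary~\ref{corollary}, $\bar{\mathbf f}^1$. By hypothesis, $\mathbf x_*$ corresponds to a nondegenerate minimum $V''(\phi_*)>0$. The computation of the eigenvalues $\lambda_1,\dots,\lambda_4$ at $\mathbf p_*$ in Section~\ref{sec:center} then gives $\Re\sigma(D\bar{\mathbf f}(\mathbf x_*))\le-\lambda_0<0$. Proposition~\ref{prop:finite_reduction} applied to this averaged field yields a $C^k$ local manifold $\bar{\mathcal M}_{\mathrm{loc}}$, a neighbourhood $U$, and exponential bounds with constants $C_0$ and $\lambda_0$.

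\textbf{Step 2: transfer to the full system.} By Proposition~\ref{prop:derivative_bounds}, the full trajectory has uniformly bounded derivatives. By Theorem~\ref{thm:averaging_scalar_cosmology}, $\|\mathbf x(t)-\bar{\mathbf x}(t)\|=\mathcal O(H(t))$, so the full trajectory enters $U$ once $H(t)$ is small. Near $\mathbf x_*$ I would perform a near-identity averaging change of variables $\mathbf x=\mathbf w+H\,\mathbf u(\mathbf w,\theta)$, where $\mathbf u$ solves $\partial_\theta\mathbf u=\tilde{\mathbf f}$. This rewrites the full system as $\dot{\mathbf w}=\bar{\mathbf f}(\mathbf w)+\mathcal O(H)$, a time-dependent perturbation whose $C^1$ norm on $U$ is $\mathcal O(H)$. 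The coupling and geometry deviations add a further $\mathcal O(\varepsilon)$ term, exactly as in Theorem~\ref{thm:combined}.

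\textbf{Step 3: persistence.} Roughness of exponential dichotomies on the half-line gives the following. If $\eta:=\sup_{t\ge t_x}H(t)+\varepsilon$ is small compared with $\lambda_0/C_0$, the linearization along the perturbed flow still contracts with rate $\lambda(\eta)\ge\lambda_0-C_0'\eta$. The Lyapunov–Perron (or graph-transform) construction used in Theorem~\ref{thm:combined}(iii) then produces a local invariant manifold for the nonautonomous $\mathbf w$-system. This manifold is $\mathcal O(\eta)$-close in $C^k$ to $\bar{\mathcal M}_{\mathrm{loc}}$ and depends continuously on $(H(t_x),\varepsilon)$. Pulling it back through the near-identity change of variables gives the manifold for the original variables. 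A Gronwall estimate with the contraction rate then yields $\|x(t)-\mathbf p_*(\varepsilon)\|\le C(\eta)e^{-\lambda(\eta)t}$. The equilibrium itself is continued by Theorem~\ref{thm:combined}(i). Because $H(t)\to0$, the perturbation vanishes asymptotically, so the limit point is $\mathbf p_*$ of the averaged, equivalently frozen, dynamics.

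\textbf{Main obstacle.} The hardest step is the justification in Step 2. The averaging theorem supplies only $C^0$ closeness of trajectories, whereas manifold persistence needs $C^1$ (indeed $C^k$) closeness of the vector fields. There is also a conflict between the exponential decay being claimed and an error of order $\mathcal O(H)$ that is only slowly decaying, for example $H\sim1/t$. I would handle this in two ways. First, derive the $C^1$ closeness from the near-identity transformation together with the derivative bounds of Proposition~\ref{prop:derivative_bounds}, using (A6) to absorb $\omega^{-1}$ terms. Second, state the exponential rate relative to the moving, slowly drifting equilibrium determined by the frozen $H$. This is a nonautonomous slow-manifold statement. Genuine exponential convergence to $\mathbf p_*$ would follow only in the regime $H_*>0$, where the perturbation terms themselves decay exponentially.
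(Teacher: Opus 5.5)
Your strategy is the same as the paper's own justification in Subsection~\ref{sec:spectral-gap}: $\mathcal O(H)$-closeness of the full and averaged fields, hence a small perturbation of the linearization, hence persistence of gap, manifold and rates via Theorem~\ref{thm:combined}. As a proof, however, it fails at Step~1, and the same failure affects that justification. The eigenvalues you import from Section~\ref{sec:center} belong to the un-averaged field in $(H,\rho_m,y,\phi)$ at $\mathbf p_*$, and all of their damping comes from $H_*=\sqrt{V(\phi_*)/3}$. Under the combined hypotheses, Theorem~\ref{thm:averaging_scalar_cosmology} demands $H(t)\to0$. With $\Lambda=0$ (Section~\ref{sec:discussion_averaging}) and the constraint \eqref{constraint}, this forces $V(\phi_*)=0$, i.e.\ $H_*=0$. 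Your Step~3 uses exactly this when it identifies the limit as $\mathbf p_*$ ``because $H(t)\to0$''. At $H_*=0$ the spectrum of $J(\mathbf p_*)$ is $\{0,0,\pm i\sqrt{V''(\phi_*)}\}$. The nondegenerate minimum supplies a frequency, not damping, so $\lambda_0=0$ and the roughness bound $\lambda(\eta)\ge\lambda_0-C_0'\eta$ gives nothing. In fact the target estimate is false in this regime. From \eqref{fdef} and the constraint with $V\ge0$, $\Lambda=0$, one has $-\dot H=f\le cH^2$ with $c=\max\{3,\tfrac32\gamma,\tfrac p2\}$. Hence $\tfrac{d}{dt}H^{-1}\le c$ and $H(t)\ge H(t_x)/\bigl(1+cH(t_x)(t-t_x)\bigr)$. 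Since $H_*=0$, $\|x(t)-\mathbf p_*\|\ge H(t)\ge c'/(1+t)$; compare $H\simeq 2/(pt)$ on the $\Omega_G\to1$ attractor of Section~\ref{sec:quad}. So no bound $Ce^{-\lambda t}$ can hold, whatever transformation you use.

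Your two fallbacks do not rescue the statement. Restricting to $H_*>0$ abandons the averaging hypotheses: $H\not\to0$, and the $\mathcal O(H)$ error tends to $\mathcal O(H_*)$ instead of decaying. What remains is Proposition~\ref{prop:finite_reduction}, with no averaging content. Measuring the rate against a drifting frozen-$H$ equilibrium changes the claim. You also conflate $\mathbf x_*$, an equilibrium of the averaged field in the normalized variables \eqref{normalized-vars}, with $\mathbf p_*$ in $(H,\rho_m,a,y,\phi)$. Their linearizations are unrelated; compare $J(0,0)$, $J(0,1)$, $J(\pm1,0)$ in Section~\ref{sec:quad} with $J(\mathbf p_*)$. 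The hyperbolicity that actually exists is that of $\bar{\mathbf f}^1$ in e-fold time $N=\ln a$, $dN=H\,dt$ (the case $\mathbf f^0\equiv0$ of Corollary~\ref{corollary}). What your near-identity change can deliver is $d\mathbf w/dN=\bar{\mathbf f}^1(\mathbf w)+\mathcal O(H)$. Moreover, $d\ln H/dN=f^{[2]}+\mathcal O(H)$ has strictly negative average (as in Section~\ref{sec:quad}), so $H$, and with it the perturbation, decays exponentially in $N$. A hyperbolic sink $\mathbf x_*$ of $\bar{\mathbf f}^1$ then gives $\|\mathbf x-\mathbf x_*\|\le Ce^{-\mu N}$, with $\mu$ depending continuously on the small parameters. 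That is exponential in $N$, hence only a power law in $t$, and it is convergence toward $\mathbf x_*$ in normalized variables rather than toward $\mathbf p_*$. This is the provable form of the corollary.
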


To apply the combined averaging and global framework in a concrete model, follow these steps in order.
\begin{enumerate}
\item \textbf{Structural checks.} Verify compactness and uniqueness conditions for sublevel sets of $V$, and confirm the sign/monotonicity properties of the geometric term $G$ used in Sections~\ref{sec:decay}--\ref{sec:perturbations}. Ensure $V,\chi, G$ have the required regularity on the forward invariant region $\Psi_+$.
\item \textbf{Normalization and expansion.} Introduce normalized variables $\mathbf{x}$ (cf.\ \eqref{normalized-vars}) and expand the vector field in the small‑$H$ form
\begin{equation}
\dot{\mathbf{x}}=\mathbf{f}^0(\mathbf{x},\theta)+H\,\mathbf{f}^1(\mathbf{x},\theta)+R(\mathbf{x},\theta,H),\quad
\dot H=f^{[2]}(\mathbf{x},\theta)\,H^2+S(\mathbf{x},\theta,H).
\end{equation}
Identify $\mathbf{f}^0,\mathbf{f}^1,f^{[2]}$ and compute the orders of the remainders $R,S$.
\item \textbf{Check the Averaging hypotheses.} Periodicity in $\theta$, remainder orders $R=\mathcal{O}(H^2)$, $S=\mathcal{O}(H^3)$, Lipschitz continuity of the time averages, matched small initial data $H(t_x)=\varepsilon\ll1$, and the frequency–amplitude scaling $\omega^{-1}=\mathcal{O}(H(t_x))$.
\item \textbf{Averaged stability.} Compute the averaged vector field $\bar{\mathbf{f}}^0$ (time average in $\theta$) and verify asymptotic stability of its equilibrium $\mathbf{x}_*$ by linearization and spectral gap estimates.
\item \textbf{Apply averaging and global theorems.} Use the averaging theorem in \cite{Leon:2026vov} to obtain the $\mathcal{O}(H)$ error bound between full and averaged trajectories, then invoke the global decay and persistence theorems of this paper to conclude convergence and, when the spectral gap holds, exponential rates.
\end{enumerate}
\subsection{Limitations and caveats}
\label{subsec:limitations_remarks}
\begin{itemize}
\item \textbf{Scaling requirement.} The frequency–amplitude scaling is essential: if $\omega$ does not scale with the small parameter $H$, boundary terms from integration by parts need not be negligible and a resonant normal‑form treatment is required.
\item \textbf{Resonances.} Strong resonances between the fast scalar oscillations and the background expansion can change effective dissipation; resonant averaging or normal‑form reductions must be performed before applying persistence arguments.
\item \textbf{Small‑$H$ regime.} The combined approach is most effective when $H$ decays sufficiently fast. If $H$ decays too slowly (or does not decay), the averaging window may be too short to control the error uniformly in time.
\item \textbf{Degenerate minima.} If the potential minimum is degenerate (e.g., $V''(\phi_*)=0$), the spectral gap may vanish and exponential convergence need not hold; higher‑order normal forms or blow‑up analyses are then necessary.
\end{itemize}

\section{Application to the quadratic potential}
\label{sec:quad}
To illustrate the procedures, consider the quadratic potential
\begin{equation}
V(\phi)=\tfrac12 m^2\phi^2,\quad m>0.
\end{equation}
Its derivatives are
\begin{equation}
V'(\phi)=m^2\phi,\quad V''(\phi)=m^2>0,
\end{equation}
so the unique minimum at $\phi_*=0$ is nondegenerate, and the local linearization has mass $m$. 
Sublevel sets $\{V\le E\}$ are compact for any finite energy $E$. 
Since all derivatives are smooth, Proposition~\ref{prop:derivative_bounds} applies. 
For
\begin{equation}
\chi(\phi)=1+\alpha\phi,\quad |\alpha|\ll1,
\end{equation}
the remainder estimates in (A2) follow directly from Taylor expansion.

\subsection{Normalized variables and expansion.}  
Using \eqref{normalized-vars} with a chosen frequency scale $\omega$ (natural choice $\omega=m$ for near‑harmonic motion), the normalized energy variable $\Omega$ and phase $\Phi$ capture the fast oscillation. 

The vector field $ (H,\, \Phi,\,\Omega,\,\Omega_G,\,\Omega_m)^{\top} $ evolve in   time according to 

\begin{align}
\dot{H} &= -\frac{1}{2} H^2 \left(3 \gamma  \Omega_{m}+p \Omega_G+3 \Omega ^2 \cos (2 (\Phi -t \omega ))+3 \Omega ^2\right),\\
\dot{\Phi} &= \frac{\sin
   (\Phi -t \omega ) \left(\frac{\sqrt{6} \alpha  (3 \gamma -4) H \omega ^2 \Omega_{m}}{\Omega  \left(\omega -\sqrt{6} \alpha  H \Omega  \sin (\Phi -t
   \omega )\right)}+12 H \omega  \cos (\Phi -t \omega )+4 (\omega^2 -m^2)  \sin (\Phi -t \omega )\right)}{4 \omega },\\
\dot{\Omega} &= \frac{1}{12} \Bigg[6 H \Omega 
   \left(3 \gamma  \Omega_{m}+p \Omega_G+3 \Omega ^2 \cos (2 (\Phi -t \omega ))+3 \Omega ^2\right)\nonumber\\
   & -\frac{3 \sqrt{6} \alpha  (3 \gamma
   -4) H \omega  \Omega_{m} \cos (\Phi -t \omega )}{\omega -\sqrt{6} \alpha  H \Omega  \sin (\Phi -t \omega )}-36 H \Omega  \cos ^2(\Phi -t \omega
   )+\frac{6 \Omega  (m^2-\omega^2) \sin (2 (\Phi -t \omega ))}{\omega }\Bigg],\\
 \dot{\Omega_G} &= H \Omega_G \left(3 \gamma  \Omega_{m}+p
   (\Omega_G-1)+6 \Omega ^2 \cos ^2(\Phi -t \omega )\right),\\
\dot{\Omega_m} &= H \Omega_{m} (3 \gamma  (\Omega_{m}-1)+p \Omega_G) \nonumber\\
& +\frac{1}{2} H \Omega 
   \Omega_{m} \cos (\Phi -t \omega ) \left(\frac{\sqrt{6} \alpha  (3 \gamma -4) \omega }{\omega -\sqrt{6} \alpha  H \Omega  \sin (\Phi -t \omega )}+12
   \Omega  \cos (\Phi -t \omega )\right).
\end{align}
Expanding the full constrained system for small $H$ yields the schematic form
\begin{equation}
\dot{\mathbf{x}}=\mathbf{f}^0(\mathbf{x},\theta)+H\,\mathbf{f}^1(\mathbf{x},\theta)+\mathcal{O}(H^2),
\quad
\dot H = f^{[2]}(\mathbf{x},\theta)\,H^2+\mathcal{O}(H^3),
\end{equation}
with $\theta=\omega t$.

Let $\mathbf{x}=(\Phi,\,\Omega,\,\Omega_G,\,\Omega_m)^{\top}$, be the state vector for the dimensionless variables defined in \eqref{normalized-vars}. Then $\mathbf{f}^{0}(\mathbf{x},t)=\mathbf{0}$ and
\begin{equation}
    \mathbf{f}^{1}(\mathbf{x},\theta)= \left(
\begin{array}{c}
  \frac{1}{4} \left[\sqrt{6} \alpha  (4-3 \gamma ) \Omega _m\Omega^{-1} \sin (\theta -\Phi )-6 \sin (2 \theta -2\Phi )\right]\\
\frac{1}{4} \{2 p \Omega  \Omega _G+\Omega _m \left[6 \gamma  \Omega +\sqrt{6} \alpha  (4-3 \gamma ) \cos (\theta-\Phi )\right]+12 \Omega  \left(\Omega ^2-1\right) \cos
   ^2(\theta-\Phi )\}\\
 \Omega _G \left[p (\Omega _G-1)+3 \gamma  \Omega _m+6 \Omega ^2 \cos ^2(\theta-\Phi )\right]\\
\Omega _m \left[-3 \gamma + p \Omega _G+3 \gamma  \Omega _m+\Omega  \cos (\theta-\Phi ) \left(\sqrt{6} \alpha  (\frac{3}{2} \gamma -2)+6 \Omega  \cos (\theta-\Phi
   )\right)\right]
\end{array}
\right)
\end{equation}
where $\theta= m t $.
We also get
\begin{equation}
    f^{[2]}(\mathbf{x},\theta)=-\frac{1}{2} \left[p \Omega _G+3 \gamma  \Omega _m+6 \Omega ^2 \cos ^2(\theta-\Phi)\right].
\end{equation}

In this representation, the Friedmann restriction takes the form
\begin{equation}
    \Omega^2+\Omega_m+\Omega_G=1.
\end{equation}
The Friedmann condition can be used to reduce the problem's dimensionality by one. Furthermore, after applying the averaging procedure described in section \ref{sec:discussion_averaging}, we obtain
\begin{equation}
  \left(\begin{array}{c}
    {\overline\Omega}^{\prime}\\
    {{\overline\Omega}_G}^{\prime}\\
    {\overline{\Omega}_m}^{\prime}
    \end{array}\right):= \frac{1}{H} \left(\begin{array}{c}
    \dot{\overline\Omega}\\
    \dot{\overline\Omega}_G\\
    \dot{\overline\Omega}_m
    \end{array}\right)=   \left(\begin{array}{c}
    \frac{1}{2} \overline\Omega  \left(p \overline\Omega _G+3 \gamma  \overline\Omega _m+3 \overline\Omega ^2-3\right)\\
     \overline\Omega _G \left[p \left( \overline\Omega _G-1\right)+3 \left(\gamma  \overline\Omega _m+\overline\Omega ^2\right)\right]\\
    \overline\Omega _m \left[p \overline\Omega _G+3 \gamma  \left(\overline\Omega _m-1\right)+3 \overline\Omega ^2\right]
    \end{array}\right).
\end{equation}
Notice that, in the previous system of equations, the dynamics for the state variable $\Phi$ are absent. This follows from the fact that, after choosing $\omega=m$, we obtain $\dot{\overline\Phi}=0$.

Using the Friedmann equation as a definition of $\overline\Omega_G$, we obtain the two-dimensional system
\begin{equation}
\label{averaging}
  \left(\begin{array}{c}
    {\overline\Omega}^{\prime}\\
    {\overline{\Omega}_m}^{\prime}
    \end{array}\right)=   \left(\begin{array}{c}
    \frac{1}{2} \overline\Omega  \left[p \left(1- \overline \Omega^2-\overline \Omega_m\right)+3 \gamma  \overline\Omega _m+3 (\overline\Omega ^2-1)\right]\\
    \overline\Omega _m \left[p \left(1- \overline \Omega^2-\overline \Omega_m\right)+3 \gamma  \left(\overline\Omega _m-1\right)+3 \overline\Omega ^2\right]
    \end{array}\right).
\end{equation}

\subsubsection{Equilibrium points and stability}

To understand the local phase-space structure of the system \eqref{averaging}, we perform a stability analysis within the framework of qualitative dynamical systems theory. This requires identifying the critical points and analyzing the eigenvalues of the Jacobian matrix of the linearized flow around each of them.

Let $u=\overline\Omega$ and $v=\overline\Omega_m$. Then the system \eqref{averaging} can be written as
\begin{equation}
u'=\tfrac12 u\,A(u,v),\qquad 
v'=v\,B(u,v),    
\end{equation}
where
\begin{equation}
A(u,v)=-(p-3)u^2+(3\gamma-p)v+(p-3), \qquad
B(u,v)=A(u,v)+3(1-\gamma).
\end{equation}

The equilibrium (or critical) points are obtained by imposing
\begin{equation}
    u'=0, \qquad v'=0.
\end{equation}
Since the right-hand sides are factorized, these conditions are equivalent to
\begin{equation}
    u=0 \ \text{or}\ A(u,v)=0, \qquad v=0 \ \text{or}\ B(u,v)=0.
\end{equation}
Therefore, the critical points must arise from the possible simultaneous combinations:
\begin{equation}
(u=0,\;v=0), \qquad (u=0,\;B=0), \qquad (v=0,\;A=0),
\end{equation}
together with any eventual solutions of
\begin{equation}
A(u,v)=0, \qquad B(u,v)=0,
\end{equation}
which only occurs if $\gamma=1$.

For generic parameters $p\neq 3$, $p\neq 3\gamma$, and $\gamma\neq 1$, the system admits the following isolated equilibrium points:
\begin{equation}
    (0,0), \qquad (0,1), \qquad (1,0), \qquad (-1,0).
\end{equation}

The Jacobian matrix of the system is
\begin{equation}
    J(u,v)=
\frac{1}{2}\begin{pmatrix}
3(3-p)u^2+(3\gamma-p)v+(p-3) & (3\gamma-p)u \\[6pt]
-4(p-3)uv & 2(3-p)u^2-4(p-3\gamma)v+2(p-3\gamma)
\end{pmatrix}.
\end{equation}

We now evaluate the Jacobian at each equilibrium point:

\begin{enumerate}
\item \textbf{Point $(0,0)$}  
\begin{equation}
J(0,0)=\frac{1}{2}\begin{pmatrix} p-3 & 0 \\ 0 & 2(p-3\gamma) \end{pmatrix},
\end{equation}
with eigenvalues
\begin{equation}
\lambda_1=\tfrac12(p-3), \qquad \lambda_2=p-3\gamma.
\end{equation}
Classification:
\begin{itemize}
\item sink if $p<3$ and $p<3\gamma$,
\item source if $p>3$ and $p>3\gamma$,
\item saddle otherwise.
\end{itemize}

\item \textbf{Point $(0,1)$}  
\begin{equation}
J(0,1)=\frac{1}{2}\begin{pmatrix} 3(\gamma-1) & 0 \\ 0 & 2(3\gamma-p) \end{pmatrix},
\end{equation}
with eigenvalues
\begin{equation}
\lambda_1=\tfrac{3}{2}(\gamma-1), \qquad \lambda_2=(3\gamma-p).
\end{equation}
Classification:
\begin{itemize}
\item sink if $\gamma<1$ and $p>3\gamma$,
\item source if $\gamma>1$ and $p<3\gamma$,
\item saddle otherwise.
\end{itemize}

\item \textbf{Point $(1,0)$}  
\begin{equation}
J(1,0)= \begin{pmatrix} 3-p & 3\gamma-p \\ 0 & 3(1-\gamma) \end{pmatrix},
\end{equation}
with eigenvalues
\begin{equation}
\lambda_1=3-p, \qquad \lambda_2=3(1-\gamma).
\end{equation}
Classification:
\begin{itemize}
\item sink if $p>3$ and $\gamma>1$,
\item source if $p<3$ and $\gamma<1$,
\item saddle otherwise.
\end{itemize}

\item \textbf{Point $(-1,0)$}  
The Jacobian has the same eigenvalues as at $(1,0)$, so the stability classification is identical.
\end{enumerate}

The origin $(0,0)$ corresponds to negligible scalar and matter contributions; its stability depends on whether the effective coupling $p$ lies below or above the critical values $3$ and $3\gamma$.  
The point $(0,1)$ represents a pure matter state, while $(\pm 1,0)$ corresponds to scalar–dominated configurations.  
The signs of the eigenvalues determine which component governs the late–time dynamics: matter, scalar field, or a transient saddle regime.

\subsubsection{Special parameter cases.}
In addition to the four isolated equilibria obtained for generic values of 
$p\neq 3$, $p\neq 3\gamma$, and $\gamma\neq 1$, the system 
\eqref{averaging} develops continuous families of equilibrium points when the
parameters take special values:

\begin{enumerate}

\item[\textbf{(i)}] \textbf{Case $p = 3$.}  
For $v=0$ one has
\begin{equation}
    A(u,0)=-(p-3)u^2+(p-3),
\end{equation}
so when $p=3$ the condition $A(u,0)=0$ holds for all $u$.   
Since $v'=vB(u,v)$ vanishes identically for $v=0$, the entire line 
\begin{equation}
\{(u,0): u\in\mathbb{R}\}
\end{equation}
consists of equilibrium points.

\item[\textbf{(ii)}] \textbf{Case $p = 3\gamma$.}  
For $u=0$ one finds
\begin{equation}
    B(0,v)=(3\gamma-p)v+(p-3).
\end{equation}
If $p=3\gamma$ and simultaneously $\gamma=1$, then $B(0,v)\equiv 0$ and the entire line 
\begin{equation}
\{(0,v): v\in\mathbb{R}\}
\end{equation}
is equilibrium. Otherwise, only isolated points remain.

\item[\textbf{(iii)}] \textbf{Case $\gamma = 1$.}  The system is reduced to 
\begin{align}
 u'   & = \frac{1}{2} (p-3) u \left(1-u^2-v\right), \label{eq165}\\
 v'   & =(p-3) v \left(1-u^2-v\right) \label{eq166}. 
\end{align}
In this case $B(u,v)=A(u,v)$, so the equilibrium conditions reduce to
\begin{equation}
   uA(u,v)=0, \qquad vA(u,v)=0.
\end{equation}
Thus any point with $u\neq 0$, $v\neq 0$, and $A(u,v)=0$
is an equilibrium. This yields a \emph{curve} of equilibria in the $(u,v)$–plane, given implicitly by
\begin{equation}
    (p-3)(1-u^2-v)=0.
\end{equation}
Hence:
\begin{itemize}
\item If $p=3$, then $A(u,v)\equiv 0$ and every point is an equilibrium.
\item If $p\neq 3$, the equilibria form the parabola $v = 1 - u^2$, 
which corresponds to the invariant surface 
\begin{equation}
\overline\Omega^2+\overline\Omega_m=1, \qquad \overline\Omega_G=0.
\end{equation}
Along this curve, the Jacobian has one zero eigenvalue (reflecting the continuous family of equilibria) and one transverse eigenvalue $\lambda = p-3$.
Therefore, 
\begin{itemize}
\item for $p>3$ the parabola is transversely stable (attracting),
\item for $p<3$ the parabola is transversely unstable (repelling),
\item for $p=3$ the system is fully degenerate, with every point an equilibrium.
\end{itemize}
\end{itemize}
\end{enumerate}
As illustrated in Figure~\ref{fig:phase-plane}, the phase portrait provides a geometric visualization of the dynamical system \eqref{eq165}--\eqref{eq166}. The trajectories capture the qualitative behavior of solutions, while equilibrium sets and invariant manifolds delineate the physically admissible region of phase space. 

\begin{figure}[H]
    \centering
    \includegraphics[width=0.8\textwidth]{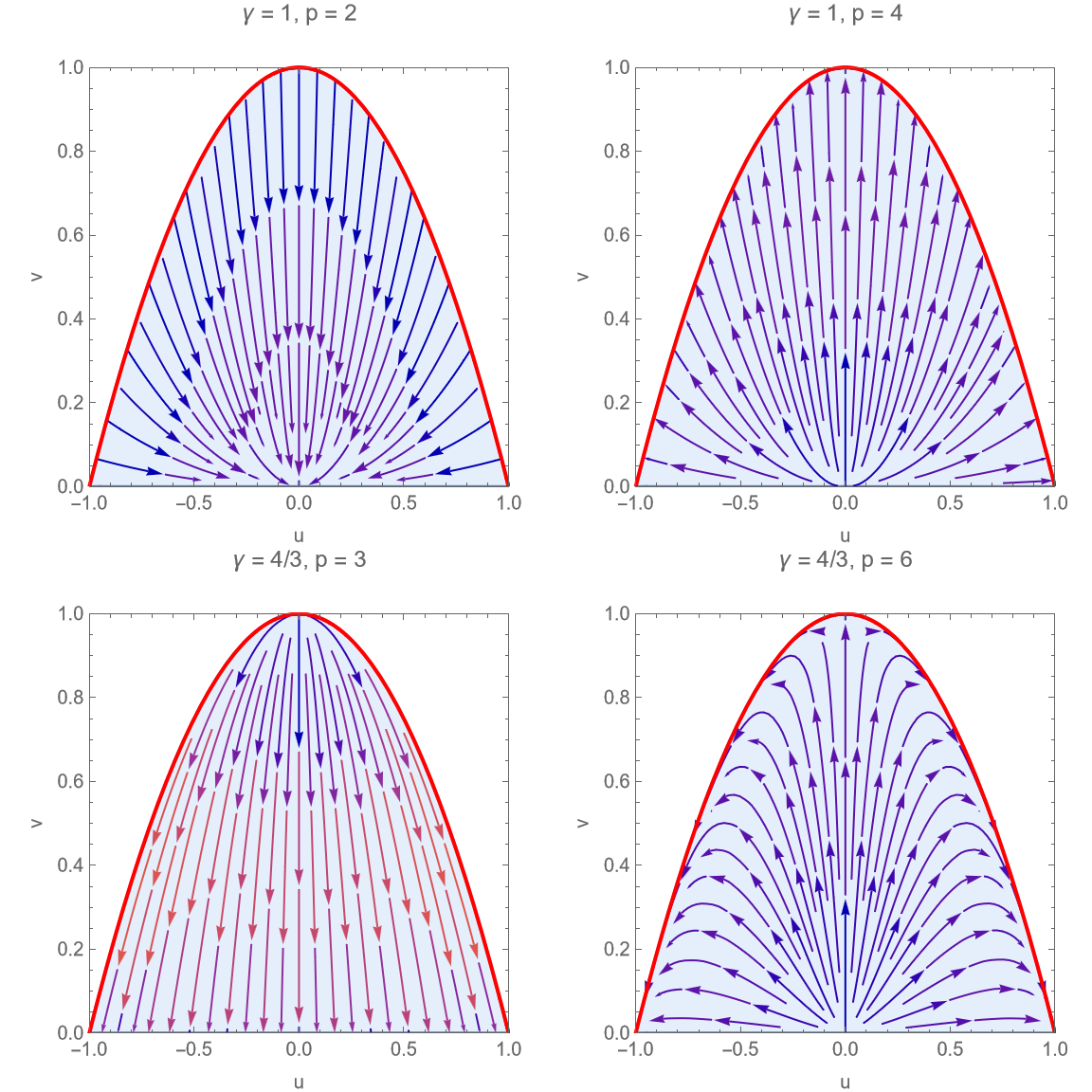}
    \caption{Phase plane of the dynamical system \eqref{eq165}--\eqref{eq166}. 
    The physically relevant region is bounded by the parabola $v=1-u^2$ and the positive $v$ axis. 
    Streamlines indicate the flow of $(u,v)$, highlighting the role of the equilibrium curve and axes as invariant sets.}
    \label{fig:phase-plane}
\end{figure}
For concreteness, the top row corresponds to a dust background ($\gamma=1$), with $p=2$ representing a cosmic–string–like fluid and $p=4$ a radiation–like component $\rho_G$. 
The bottom row corresponds to a radiation background ($\gamma=4/3$), with $p=3$ representing a dust–like fluid and $p=6$ a stiff–like component $\rho_G$.

\subsection{Averaged dissipation.}  
We can express the energy density of the scalar field as $E=\tfrac12\dot\phi^2+\tfrac12 m^2\phi^2$. For small $H$ and weak coupling $\chi(\phi)=1+\alpha\phi+\mathcal{O}(\alpha^2)$, we
have 
\begin{equation}
    \dot E/E = -\Gamma(\mathbf{x}) H + \mathcal{O}(H^2),
\end{equation}
where
\begin{equation}
 \Gamma(\mathbf{x}) =  \frac{\cos (m t-\Phi ) \left(3 \sqrt{6} \alpha  \gamma   \Omega_{m}-4 \sqrt{6} \alpha  \Omega_{m}+12 \Omega  \cos (m t-\Phi )\right)}{2 \Omega }.
\end{equation}
The averaged energy loss per fast period is
\begin{equation}
\dot{\overline{E}} = -\Gamma(\bar{\mathbf{x}}) \, H \,\overline{E} + \mathcal{O}(H^2),
\end{equation}
with the leading coefficient
\begin{equation}
\Gamma(\bar{\mathbf{x}})=3.
\end{equation}
In particular, $\Gamma>0$ for dissipative couplings, leading to $\overline{E}=\overline{E}_0 a^{-3}$.

The fact that $\Gamma(\bar{\mathbf{x}})=3$ demonstrates that, to first order in $H$, the scalar field loses energy exactly as a pressureless component, independently of the microscopic details of the coupling. This reinforces the interpretation of coherent scalar oscillations as an effective fluid of matter and shows that weak dissipative interactions do not destabilize this correspondence. Moreover, the positivity of $\Gamma$ ensures that the averaged dynamics remain monotonic and integrable, guaranteeing that the energy density redshifts as $a^{-3}$ throughout the slow-damping regime.

\subsection{Numerical Integration}

To analyze the dynamical behaviour of the model, we solved both the full
(non-averaged) system and its corresponding averaged system for a set of
initial conditions satisfying the Friedmann constraint
$
\Omega^{2} + \Omega_{m} + \Omega_{G} = 1.
$
The full system consists of five coupled nonlinear differential equations
for the variables
$
(H,\,\Phi,\,\Omega,\,\Omega_{G},\,\Omega_{m}),
$
and contains fast oscillatory terms driven by the phase
$
\theta = \omega t.
$
The averaged system reduces the dynamics to the slow variables
$
(\Omega,\,\Omega_{m},\,H),
$
after removing the rapid oscillations associated with the combination
$\Phi - \theta$.

Both systems were integrated forward and backward in time over the interval
$[-T,\,T]$ with $T = 10^{4}$.
We employed the \texttt{solve\_ivp} routine from the SciPy library, which
implements adaptive step-size control suitable for stiff and multi-scale
systems.
A dense temporal sampling (up to $2\times 10^{5}$ evaluation points) was
used to ensure accurate resolution of the fast oscillations in the full
system and to allow a direct comparison with the averaged dynamics.

The numerical solutions were represented in two complementary phase spaces.
The first is the two-dimensional plane $(\Omega,\,\Omega_{m})$, where the
physically allowed region is given by
$
\Omega^{2} + \Omega_{m} \le 1.
$
The second is the three-dimensional space
$(\Omega,\,\Omega_{m},\,H)$, which illustrates the full dynamical evolution
of the slow variables.
Trajectories of the full system are plotted as solid curves, while those of the averaged system are shown as dashed curves.
Star markers indicate initial conditions. Table~\ref{tab:ICs} lists the initial values of $(H_{0},\Omega_{0},\Omega_{m0})$ used in the simulations, together with the corresponding $\Omega_{G0}$ obtained from the Friedmann constraint.

\textbf{\begin{table}[h!]
\centering
\caption{\label{tab:ICs}  Initial conditions used in the numerical integrations. 
All sets satisfy the Friedmann constraint 
$\Omega^{2} + \Omega_{m} + \Omega_{G} = 1$.}
\begin{tabular}{c c c c c}
\hline
\hline
Case & $H_{0}$ & $\Omega_{0}$ & $\Omega_{m0}$ & $\Omega_{G0}=1-\Omega_{0}^{2}-\Omega_{m0}$ \\
\hline
1 & $10^{-3}$ & $0.80$ & $0.10$ & $0.26$ \\
2 & $10^{-3}$ & $0.70$ & $0.20$ & $0.31$ \\
3 & $10^{-3}$ & $0.60$ & $0.30$ & $0.34$ \\
4 & $10^{-3}$ & $0.50$ & $0.40$ & $0.35$ \\
5 & $10^{-3}$ & $0.90$ & $0.05$ & $0.14$ \\
\hline
\hline
\end{tabular}
\end{table}}

Figures~\ref{fig:phase_space_p1}--\ref{fig:phase_space_p2} summarise the behaviour of the full and averaged dynamical systems for the representative values $p=1$ and $p=2$. Each figure contains a two-dimensional projection in the $(\Omega,\Omega_{m})$ plane and a three-dimensional representation in $(\Omega,\Omega_{m},H)$. 
The solutions of the full system exhibit high-frequency oscillations
superimposed on a slow evolution. One can easily compare the slow dynamics and the oscillatory structure of the complete model.

Figure~\ref{fig:phase_space_p1} corresponds to the case $p=1$. The full system (solid curves) features fast oscillations superimposed on a slow drift, while the averaged system (dashed curves) follows a smooth trajectory that illustrates the long-term behaviour of the full dynamics. We observe that the full system approaches the average system over time. The three-dimensional figure also shows that the deviations of the averaged system from the full system become less pronounced as $H$ decreases.

Figure~\ref{fig:phase_space_p2} shows the behaviour for $p=2$. Here, the influence of the $p$--dependent terms becomes more evident. The trajectories display a stronger drift toward the boundary of the allowed region.

\begin{figure}[H]
    \centering
    \includegraphics[width=\textwidth]{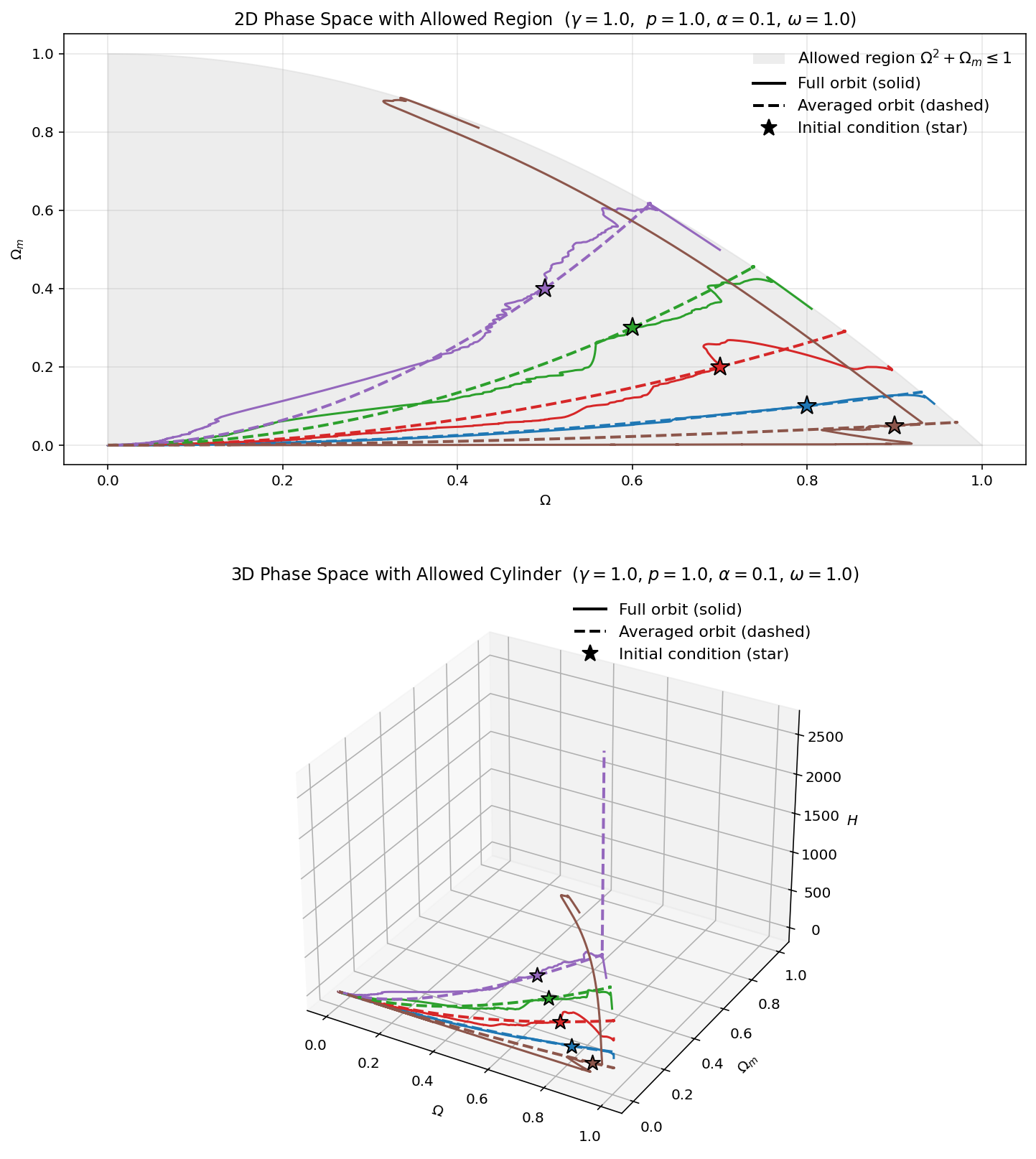}
    \caption{
    Phase--space trajectories for the parameter set 
    $(\gamma, p, \alpha, \omega)=(1.0, 1.0, 0.1, 1.0)$.
    \textbf{Top panel:} Two--dimensional phase space $(\Omega,\Omega_{m})$,
    showing the allowed region $\Omega^{2}+\Omega_{m}\leq 1$.
    Solid curves denote the full system, dashed curves the averaged system,
    and star markers indicate the initial conditions listed in 
    Table~\ref{tab:ICs}.
    \textbf{Bottom panel:} Three--dimensional trajectories in 
    $(\Omega,\Omega_{m},H)$, illustrating the slow evolution of $H$ and the
    close agreement between the full and averaged dynamics.
    }
    \label{fig:phase_space_p1}
\end{figure}

\begin{figure}[H]
    \centering
    \includegraphics[width=\textwidth]{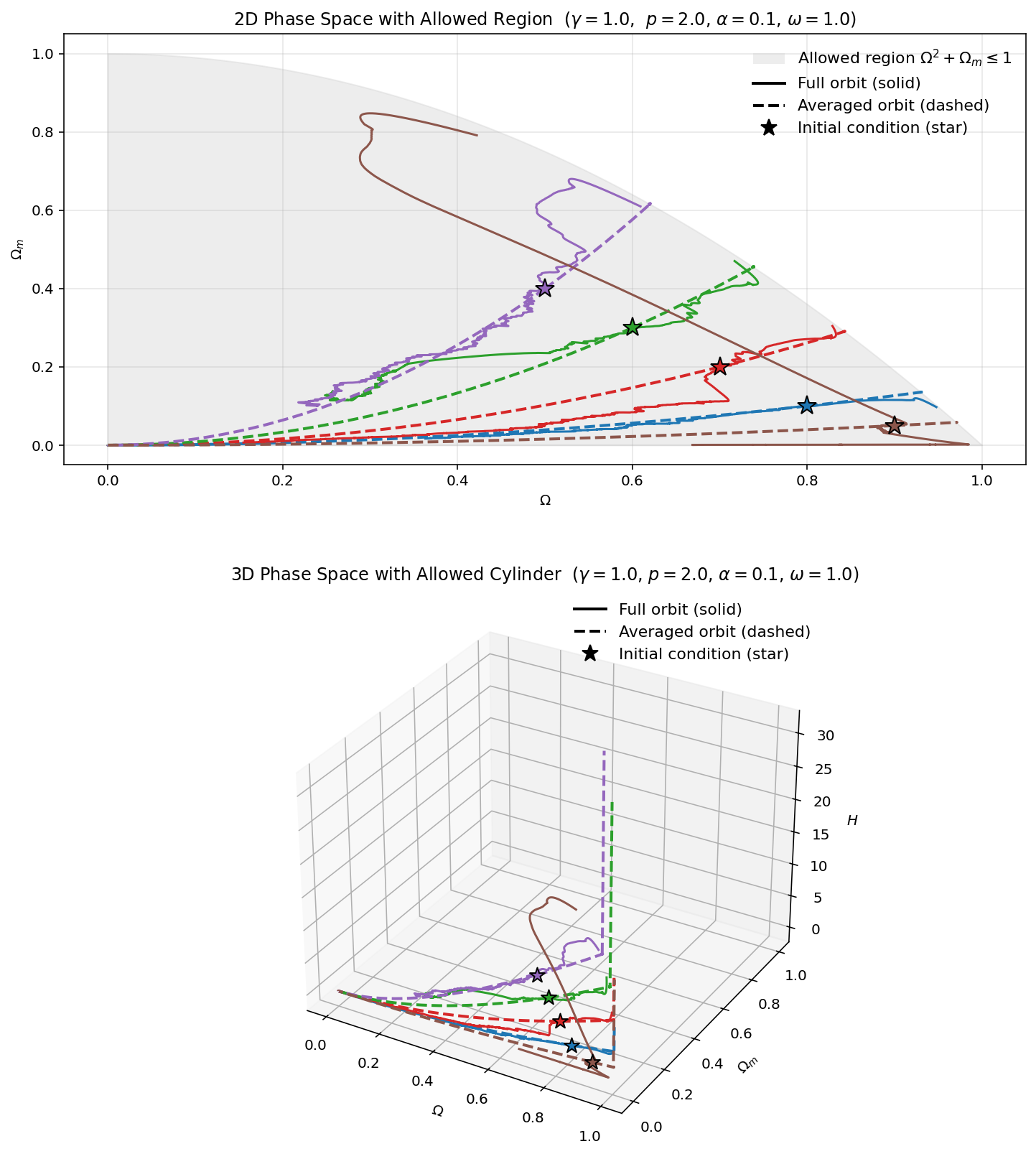}
    \caption{
    Phase--space trajectories for the case $p=2$ with parameters
    $(\gamma, p,\alpha,\omega)=(1.0, 2.0,0.1,1.0)$.
    \textbf{Top panel:} Two--dimensional phase space $(\Omega,\Omega_{m})$,
    showing how the increased value of $p$ modifies the flow structure.
    \textbf{Bottom panel:} Three--dimensional trajectories in 
    $(\Omega,\Omega_{m},H)$, highlighting the stronger drift toward the
    boundary of the allowed region and the continued agreement between the
    full and averaged systems.
    }
    \label{fig:phase_space_p2}
\end{figure}

Nevertheless, the averaged system continues to reproduce the slow component of the full dynamics, both in the two-dimensional projection and in the three-dimensional figure.

Taken together, Figures~\ref{fig:phase_space_p1}--\ref{fig:phase_space_p2} demonstrate that the averaged system provides a good late-time approximation to the slow dynamics of the full model for the two choices of the $p$ parameter and for various initial conditions. 

We have considered $p<3\gamma$ and $\gamma=1$, hence, both integrations show the late-time behavior
\begin{equation}
     \Omega_{G} = 1 \implies 3H^2=\rho_{G}(a) = \kappa^{2} a^{-p} \implies 3{\dot{a}}^2= \kappa^{2} a^{2-p}. \label{eq.96}
\end{equation}
Starting from \eqref{eq.96}
we obtain
\begin{equation}
\dot{a}^2 = \frac{\kappa^2}{3} a^{2-p}
\quad \implies \quad
\dot{a} = \pm \frac{\kappa}{\sqrt{3}} a^{1 - \tfrac{p}{2}}.
\end{equation}
Hence
\begin{equation}
\frac{da}{dt} = \pm \frac{\kappa}{\sqrt{3}} a^{1 - \tfrac{p}{2}}
\quad \implies \quad
\frac{da}{a^{1 - \tfrac{p}{2}}} = \pm \frac{\kappa}{\sqrt{3}} dt.
\end{equation}
Integrating,
\begin{equation}
\int a^{\tfrac{p}{2} - 1} \, da = \pm \frac{\kappa}{\sqrt{3}} \int dt
\quad \implies \quad
\frac{2}{p} a^{p/2} = \pm \frac{\kappa}{\sqrt{3}} t + C.
\end{equation}
Therefore,
\begin{equation}
a^{p/2}(t) = \frac{p}{2} \left( \pm \frac{\kappa}{\sqrt{3}} t + C \right)
\quad \implies \quad
a(t) = \left[ \frac{p}{2} \left( \pm \frac{\kappa}{\sqrt{3}} t + C \right) \right]^{2/p}, \quad p \neq 0.
\end{equation}

For the special case $p=0$,
\begin{equation}
\dot{a} = \pm \frac{\kappa}{\sqrt{3}} a
\quad \implies \quad
a(t) = A \exp\!\left(\pm \frac{\kappa}{\sqrt{3}} t\right).
\end{equation}

The scale factor solutions allow us to compute the deceleration parameter,
\begin{equation}
q \equiv - \frac{a \ddot{a}}{\dot{a}^2}.
\end{equation}
From
\begin{equation}
 \dot{a} = \pm \frac{\kappa}{\sqrt{3}} a^{1 - \tfrac{p}{2}} \implies   \ddot{a} = \pm \frac{\kappa}{\sqrt{3}} \left(1 - \frac{p}{2}\right) a^{- \tfrac{p}{2}} \dot{a}.
\end{equation}
Substituting into the definition of $q$ gives
\begin{equation}
q = - \left(1 - \frac{p}{2}\right).
\end{equation}

Hence the sign of $q$ depends directly on the parameter $p$:
\begin{equation}
q < 0 \quad \Leftrightarrow \quad p < 2,
\end{equation}
which corresponds to accelerated expansion, while
\begin{equation}
q > 0 \quad \Leftrightarrow \quad p > 2,
\end{equation}
indicates decelerated expansion. 

The critical case $p=2$ yields $q=0$, i.e.\ a coasting solution with linear growth of the scale factor.
In general, values of $p$ below $2$ lead to accelerated expansion driven by the geometric term. In particular, the case $p=1$ corresponds to $q=-\tfrac{1}{2}$, showing acceleration, while $p=2$ corresponds to the transition to non-accelerating expansion.

\section{Exact Scalar Field solutions for a minimally coupled scalar field in General Relativity}
\label{Asect:2}
In this Section, we switch our attention to cosmic evolution scenarios in the context of scalar-tensor gravity admitting exact solutions. The previous Sections focused on approximate methods for solving the evolution equations under suitable assumptions; this Section completes the analysis by considering models that admit analytic solutions. We start by outlining the development of the quadrature approach described in the literature \cite{Chimento:1995da}. 
 Unlike in the previous part of the paper, we consider here a cosmological model governed by a minimally coupled scalar field, $\phi$, and a perfect-fluid matter component with a barotropic index, $\gamma$. The evolution equations reduce to 
\begin{subequations}
\begin{align}
\ddot{\phi} + 3H\,\dot{\phi} + \frac{dV}{d\phi} &= 0, \label{eq:KG_GR} \\
\dot{\rho}_m + 3\gamma H\,\rho_m &= 0, \label{eq:matter_GR} \\
3H^2 &= \frac{1}{2}\dot{\phi}^2 + V(\phi) + \rho_m + G(a),\label{eq:Fried_GR}
\end{align}
\end{subequations}
with $G(a)$ being the geometric term, defined as:
\begin{equation}
    G(a) = \begin{cases}
         -\frac{3k}{a^2},k=0, \pm 1, & \quad \text{FLRW metric,} \\
         \frac{\sigma_0^2}{a^6}, & \quad \text{Bianchi I}.
    \end{cases}
\end{equation}
Differentiating Eq.~\eqref{eq:Fried_GR} and using Eqs.~\eqref{eq:KG_GR}–\eqref{eq:matter_GR}, we obtain an auxiliary identity
\begin{equation}
\frac{1}{2}\dot{\phi}^2 - V(\phi) + (\gamma - 1)\rho_m - G(a) - \frac{1}{3}a G'(a)= -2\dot{H} - 3H^2. \label{eq:aux_energy_GR}
\end{equation}

The Raychaudhuri equation then reads
\begin{equation}
\dot{H} = -\frac{1}{2} \left( \gamma \rho_m + \dot{\phi}^2 -\frac{1}{3}a G'(a)\right), \label{eq:Ray_GR}
\end{equation}
highlighting the contribution of kinetic and matter energy to the expansion deceleration.

\subsection{Standard cosmology in \texorpdfstring{$\tau$}{τ}-time parameterization}
\label{Asect:2.A}
To simplify the evolution system, we adopt a new time variable $ \tau $ defined by
\begin{equation}
dt = a^3(\tau)\, d\tau,
\quad \implies \quad
\frac{d}{dt} = \frac{1}{a^3(\tau)}\, \frac{d}{d\tau}.
\end{equation}

From this parametrization, the time function becomes
\begin{equation}
t(\tau) = \int a^3(\tau)\, d\tau + C_1. \label{eq:t_tau_GR}
\end{equation}

\subsubsection{Dynamical equations in \texorpdfstring{$\tau$}{τ}-time}
\label{Asect:2.A.1}

The dynamical equations in \texorpdfstring{$\tau$}{τ}-time are the following
\begin{itemize}
  \item \textbf{Klein--Gordon equation}
  \begin{equation}
  \phi''(\tau) + a^6(\tau)\, \frac{dV}{d\phi} = 0. \label{eq:KG_GR_tau}
  \end{equation}

  \item \textbf{Matter continuity equation}
  \begin{equation}
  \rho_m'(\tau) + 3\gamma\, \frac{a'(\tau)}{a(\tau)}\, \rho_m(\tau) = 0 \implies \rho_m(\tau) = \rho_{m,0} \left( \frac{a(\tau)}{a_0} \right)^{-3\gamma}. \label{eq:matter_GR_tau}
  \end{equation}

  \item \textbf{Friedmann equation}
  \begin{equation}
  3H^2(\tau) = \frac{1}{2 a^6(\tau)} \left( \phi'(\tau) \right)^2 + V(\phi(\tau)) + \rho_{m,0} \left( \frac{a(\tau)}{a_0} \right)^{-3\gamma}+G\left(a(\tau)\right). \label{eq:Friedmann_GR_tau}
  \end{equation}

  \item \textbf{Raychaudhuri equation}
  \begin{equation}
  H'(\tau) = -\frac{1}{2} \left[ \gamma \rho_{m,0} \left( \frac{a(\tau)}{a_0} \right)^{3(1-\gamma)} + \frac{ \left( \phi'(\tau) \right)^2 }{ a^3(\tau) } -\frac{1}{3}a(\tau)G'(a(\tau))\right]. \label{eq:Ray_GR_tau}
  \end{equation}
\end{itemize}

\subsubsection{Scalar field parametrization and quadrature formalism}
\label{Asect:2.A.2}
Defining the potential via \begin{equation}
\label{quad}
V[\phi(a)]=\frac{\mathcal{F}(a)}{a^6},
\end{equation} we obtain the scalar field energy
\begin{equation}
\mathcal{E}(\tau) = \frac{1}{a^6(\tau)} \left[ \frac{1}{2} \left( \phi'(\tau) \right)^2 + \mathcal{F}(a(\tau)) \right]. \label{eq:energy_GR_tau}
\end{equation}

Integrating in standard GR, this energy splits into
\begin{equation}
\frac{1}{2}\dot{\phi}^2 + V(\phi) = \frac{C}{a^6} + \frac{6}{a^6} \int \frac{\mathcal{F}(a)}{a} \, da. \label{eq:phi_integral_mu1}
\end{equation}

The effective Friedmann function becomes:
\begin{equation}
3H^2\equiv \mathcal{G}(a) = G(a) + \frac{\rho_{m,0}}{a^{3\gamma}} + \Lambda + \underbrace{\frac{C}{a^6} + \frac{6}{a^6} \int \frac{\mathcal{F}(a)}{a} \, da}_{\text{Scalar field energy density}}. \label{eq:G_mu1}
\end{equation}

The kinetic contribution reads
\begin{equation}
\frac{1}{2}{\dot{\phi}}^2\equiv \mathcal{L}(a) = -\frac{\mathcal{F}(a)}{a^6} + \underbrace{\frac{C}{a^6} + \frac{6}{a^6} \int \frac{\mathcal{F}(a)}{a} \, da}_{\text{Scalar field energy density}}.\label{eq:L_mu1}
\end{equation}

The corresponding quadratures are
\begin{subequations}
	\label{quad-chimento}
	\begin{align}
	&t-t_0=\sqrt{3}\int \frac{d a}{a\sqrt{\mathcal{G}(a)}},\label{quad1}\\
	&\phi-\phi_0=\sqrt{6}\int\sqrt{\frac{\mathcal{L}(a)}{\mathcal{G}(a)}}\frac{d a}{a}.\label{quad2}
	\end{align}
\end{subequations}

Eqs. \eqref{quad}, \eqref{quad1} and \eqref{quad2}, allows to obtain a solution $a(t), \phi(t)$ for a potential $V(\phi),$ the functions $\mathcal{F}(a), \chi(a)$, the parameters $\rho_{m,0}, \gamma, k, \sigma_0, $ and $\Lambda$, and the integration constant $C$.
The solution \eqref{quad1} and \eqref{quad2} has three integration constants, $t_0,\phi_0, C$; however, it does not provide the general solution for a given potential, but it provides a special solution for each member of a family of potentials.   
 
To obtain physical solutions it is required that $\mathcal{G}(a)$ and $\mathcal{L}(a)$ are nonnegative on an interval $(a_1,a_2),$ $0\leq a_1\leq a_2\leq \infty$ \cite{Chimento:1995da}:
\begin{itemize}
	\item sufficient conditions: $a>0, \dot a \neq 0, \dot{\phi}\neq 0$.
	\item to relax the above conditions, we need to study the behavior of the scale factor, determined by the function $\mathcal{G}(a)$:
	\begin{itemize}
		\item $a(t)$ is monotonic ($\dot a\neq 0$) when $\mathcal{G}(a)>0$,
		\item it is bounded if $\mathcal{G}(a_0)=0$ for $0<a_0< \infty$,
		\item a growing monotonic solution starts from $a=0$ and expands without bound
		\item a solution is {\em{singular}} when $a(t)$ vanishes at a finite time, that is, when the integral
		\begin{equation}
		T(a_1,a_2)=\int_{a_1}^{a_2} \frac{d a}{a \sqrt{\mathcal{G}(a)}}
		\end{equation}
		converges in the limit $a_1\rightarrow 0$.
	\end{itemize}  
\end{itemize}

\subsubsection{Inflationary Parameters in Minimal Coupling FLRW}
\label{Asect:2.A.3}
Let us discuss the slow-roll approximation. Two parameters quantify the length of the accelerated expansion and the smallness of the inflaton's kinetic energy relative to its potential, which dominates the scalar field energy in slow-roll inflation. For this purpose, the slow-roll parameters are introduced, as expressed with the time derivatives of the Hubble function and the scalar field \cite{Copeland:1993jj, Lidsey:1995np, Copeland:1998fz}: 
\begin{align}
    \epsilon_H (a) = -\frac{\dot{H}}{H^2}\equiv \frac{\dot{\phi}^2}{2V + \dot{\phi}^2}, \quad \eta_H(a) = -\frac{\ddot{\phi}}{\dot{\phi}H}.
\end{align}
Written in terms of the functions $\mathcal{L}(a)$ and $\mathcal{F}(a)$, they are
\begin{subequations}
\begin{align}
\epsilon_H(a) &= 1 - \frac{\mathcal{F}(a)}{\mathcal{F}(a) + a^6 \mathcal{L}(a)} = 1 - \frac{\mathcal{F}(a)}{C + 6 \int \frac{\mathcal{F}(a)}{a} \, da}, \label{eq:Epsilon_def} \\
\eta_H(a) & = \frac{a \mathcal{L}'(a)}{2 \mathcal{L}(a)} = -\frac{6 \left(-2 \mathcal{F}(a) + 6 \int \frac{\mathcal{F}(a)}{a} \, da + C \right) + a \mathcal{F}'(a)}{2 \left( -\mathcal{F}(a) + 6 \int \frac{\mathcal{F}(a)}{a} \, da + C \right)}, \label{eq:eta_def}
\end{align}
\end{subequations}
where $\epsilon_H$ measures the relative contribution of the kinetic field energy to its total energy density and $\eta_H$ measures the ratio of the field acceleration relative to the friction term. The slow-roll approximation is valid when $|\epsilon_H|\ll 1$ and $|\eta_H|\ll 1$. These requirements impose constraints on the form of the potential and the value of the initial conditions \cite{Chimento:1995da}.

On the other hand, the slow-roll conditions, assuming $M^2_\text{Pl} = 1$, can also be expressed using the scalar field potential (and its derivatives)
\begin{align}
   & \epsilon_V(a) = \frac{1}{2}\left(\frac{V_{,\phi}}{V}\right)^2, \quad \eta_V(a) = \frac{V_{,\phi\phi}}{V}.
\end{align}
Or, equivalently
\begin{subequations}
    \begin{align}
    & \epsilon_V(a) = \frac{1}{2}\frac{\mathcal{G}(a)}{6\mathcal{L}(a)}\left(\frac{a \mathcal{F}'(a) -6 \mathcal{F}(a)}{\mathcal{F}(a)}\right)^2,\\
    & \eta_V(a) = \frac{a^2 \mathcal{G}(a) \mathcal{F}''(a)}{6 \mathcal{F}(a) \mathcal{L}(a)}+\frac{a^2 \mathcal{F}'(a)
   \mathcal{G}'(a)}{12 \mathcal{F}(a) \mathcal{L}(a)}-\frac{a^2 \mathcal{G}(a) \mathcal{F}'(a) \mathcal{L}'(a)}{12
   \mathcal{F}(a) \mathcal{L}(a)^2}\nonumber\\
   &\quad\quad\quad-\frac{11 a \mathcal{G}(a) \mathcal{F}'(a)}{6 \mathcal{F}(a)
   \mathcal{L}(a)}-\frac{a \mathcal{G}'(a)}{2 \mathcal{L}(a)}+\frac{a \mathcal{G}(a) \mathcal{L}'(a)}{2
   \mathcal{L}(a)^2}+\frac{6 \mathcal{G}(a)}{\mathcal{L}(a)}.
\end{align}
\end{subequations}
Up to the first order, they can be related to the slow-roll parameters \eqref{eq:Epsilon_def} and \eqref{eq:eta_def} in the following way:
\begin{align}
    \epsilon_H(a)\approx\epsilon_V(a),\quad \eta_H(a)\approx \eta_V(a) - \epsilon_V(a).
\end{align}
The usual formula gives the number of e-foldings:
\begin{equation}
    N =\int_t^{t_{\text{end}}} H \text{d}t = \int_{\phi_{\text{end}}}^\phi\frac{d\phi}{\sqrt{2\epsilon_V(a(\phi))}},
\end{equation}
where $\phi$ stands for the value of the inflaton at the beginning of the inflation, and $\phi_{\text{end}}$ is the value of the field at the end. We will always assume that $\phi_{\text{end}} \ll \phi$. Moreover, during the slow-roll inflation, the following identity holds
\begin{equation}
    \frac{6\mathcal{L}(a)}{\mathcal{G}(a)} =- \frac{a\mathcal{F}'(a) - 6\mathcal{F}(a)}{\mathcal{F}(a)}.
\end{equation}
One can use this fact to write (notice the change of the integration limits):
\begin{equation}
    N = \int_{\phi}^{\phi_\text{end}} \sqrt{\frac{G(a(\phi))}{6L(a(\phi))}}d\phi \equiv \int_{\phi}^{\phi_\text{end}} \frac{1}{a(\phi)\frac{d\phi}{da}}d\phi.
\end{equation}
This formula can be used to compute the number of e-foldings in terms of the scalar field.

\subsection{Dynamics and inflationary parameters from given \texorpdfstring{$\mathcal{F}(a)=B a^s$}{\mathcal{F}(a)=\mathcal{F}(a)=B a^s}}
\label{Asect:2.B}
We consider a minimally coupled scalar field with $\mathcal{F}(a)=B a^s$
\begin{equation}
\mathcal{L}(a) = \frac{C + B\left( \frac{6}{s} - 1 \right) a^s}{a^6}, \quad
\mathcal{G}(a) = G(a) + \frac{\rho_{m,0}}{a^{3\gamma}} + \Lambda + \frac{C}{a^6} + \frac{6B}{s} a^{s - 6}.
\end{equation}
The cosmic time is given by 
\begin{equation}
t(a) = \int \frac{da}{a \sqrt{\mathcal{G}(a)}} = \int \frac{\sqrt{3}da}{a \sqrt{ G(a) + \frac{\rho_{m,0}}{a^{3\gamma}} + \Lambda + \frac{C}{a^6} + \frac{6B}{s} a^{s - 6} }}.
\end{equation}
The scalar field is then
\begin{align}
\phi(a) &= \phi_0 \pm \int \sqrt{ \frac{6 \mathcal{L}(a)}{a^2 \mathcal{G}(a)} } \, da \nonumber\\
&= \phi_0 \pm \int \sqrt{ \frac{6\left[ C + B\left( \frac{6}{s} - 1 \right) a^s \right] }{ a^8 \left[ G(a) + \frac{\rho_{m,0}}{a^{3\gamma}} + \Lambda + \frac{C}{a^6} + \frac{6B}{s} a^{s - 6} \right] } } \, da.
\end{align}

For the exact integral, the Hubble function is given by 
\begin{equation}\label{eq:hubble}
H(a) = \frac{1}{\sqrt{3}} \sqrt{ G(a) + \frac{\rho_{m,0}}{a^{3\gamma}} + \Lambda + \frac{C}{a^6} + \frac{6B}{s} a^{s - 6} }.
\end{equation}
The deceleration parameter is given by 
\begin{equation}
q(a) = -1 - \frac{a}{2\mathcal{G}(a)} \frac{d\mathcal{G}}{da}, \quad
\frac{d\mathcal{G}}{da} = -\frac{3\gamma \rho_{m,0}}{a^{3\gamma + 1}} - \frac{6C}{a^7} + \frac{6B(s - 6)}{s} a^{s - 7} + G'(a).
\end{equation}

\subsubsection{Slow-roll parameters}
\label{Asect:2.B.1}
Assuming a potential $ V(a) = B a^{s-6} $, we write
\begin{align}
   \epsilon_H(a) =\frac{Ba^s(6-s) + Cs}{6Ba^s + Cs},\quad  \eta_H(a) = \frac{Ba^s(s-6)^2+6Cs}{2Ba^s(s-6)-2Cs}.
\end{align}
Alternatively, written in terms of the scalar field potential
\begin{subequations}
\begin{align}
   & \epsilon_V(a) =\frac{(s-6)^2 a^{-3 \gamma }
   \left(a^{3 \gamma }
   \left(-\left(6 B a^s+s
   \left(a^6 G(a)+a^6
   \Lambda
   +C\right)\right)\right)-a^6
 \rho_{m, 0}  
   s\right)}{12 \left(B (s-6)
   a^s-C s\right)},\\
& \eta_V(a) = \frac{a^{-3\gamma}}{12(Ba^s(s-6) - Cs)^2}\Bigg(\rho_{0,m} s a^6(-Ba^s (s-6)(s-3(\gamma+2)) + Cs(2s - 3(\gamma+2))) \nonumber \\
& \quad a^{3\gamma}\Big(-12B^2a^{2s}(s-6)^2 + Cs^2\big(2c(s-6) + 2\Lambda a^6(s-3) + 2G(a)a^6(s-3) + a^7 G'(a)\big) \nonumber \\
&\quad-Bsa^s(c(144 +s(s-36)) +\Lambda a^6(s-6)^2 + a^6(s-6)((s-6)G(a) + a G'(a)))\Big)\Bigg).
\end{align}
\end{subequations}
The spectral tilt can be computed using the following definition
\begin{equation}
n_s(a)-1 =-4\epsilon_H(a) +2\eta_H(a) = - 6 \epsilon_V(a) + 2 \eta_V(a), \quad r(a) = 16 \epsilon(a).
\end{equation}
We can write slow-roll parameters in terms of the e-foldings, $N$
\begin{align}
 \epsilon_H(N) =\frac{C s+B (6-s) e^{N s}}{6 B e^{N s}+C s},\quad \eta_H(N) =\frac{B (s-6)^2 e^{N s}+6 C s}{2 B (s-6) e^{N s}-2 C s}.
\end{align}
Therefore, the scalar spectral index and the tensor-to-scalar ratio become
\begin{subequations}
    \begin{align}
       & n_s(N) =1+ \frac{4 \left(B (s-6) e^{N s}-C s\right)}{6 B e^{N s}+C s}+\frac{B (s-6)^2 e^{N
   s}+6 C s}{B (s-6) e^{N s}-C s}, \\
 &  r(N) = \frac{16 \left(C s-B (s-6) e^{Ns}\right)}{6 B e^{N s}+C s}.
    \end{align}
\end{subequations}
It turns out that, for this particular setup with $C\neq 0$, one can obtain an explicit relation between these two parameters
\begin{equation}
    n_s(r) = -\frac{16 (s-6)}{r}-\frac{r}{4}+s-11.
\end{equation}
In Fig. \ref{fig:inflation}, we present different $n_r$-$r$ relations for a range of values of the parameter $s$ with parameters $B$, $C$ fixed to be $B=1$ and $C=1$. We find that, for successful inflation and correct values of the scalar spectral index and the tensor-to-scalar ratio, $s$ must be slightly less than $s=6$. Any value above it yields a negative tensor-to-scalar ratio, providing additional justification for the previous assumption that $0 < s < 6$. 
\begin{figure}[h]
    \centering
    \includegraphics[width=15cm]{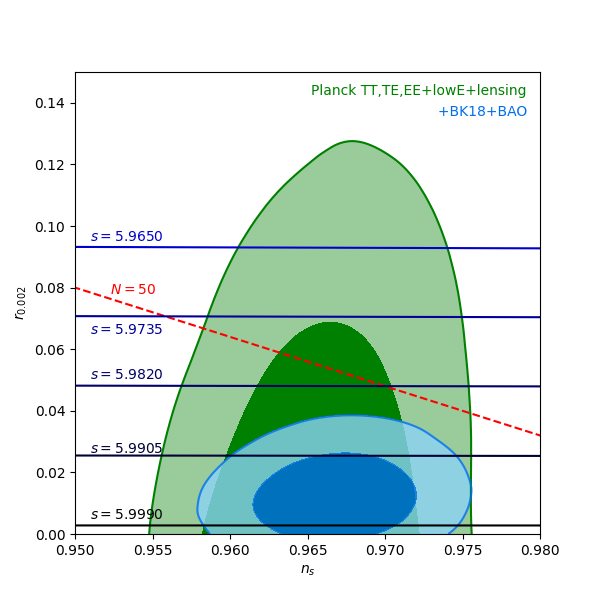}
    \caption{Constraints on the scalar spectral index and the tensor-to-scalar ratio in the $\Lambda$CDM model using the Planck TT, TE, EE+lowE+lensing (green color, with $1\sigma$ and $2\sigma$ confidence regions) \cite{Planck:2018vyg}, and joint constraints with BAO and BICEP2/Keck (blue color, with $1\sigma$ and $2\sigma$ confidence regions) \cite{BICEP2:2018kqh}. Solid lines represent the $n_s$-$r$ relations for different values of the parameter $s$, with the remaining two parameters chosen as $B = 1$, $C=1$. The dashed line represents the values of $n_s$ and $r$ for $N=50$ e-foldings, plotted for different values of the parameter $s$; the line for $ N=60$ overlaps with it entirely. 
The figure was created using publicly available Python code downloadable from \href{http//bicepkeck.org/bk18_2021_release.html}{http//bicepkeck.org/bk18$\_$2021$\_$release.html}.}
    \label{fig:inflation}
\end{figure}

\subsubsection{Explicit dynamical quantities for scalar source model}
\label{Asect:2.C}

Assume late-time regime dominated by $ \mathcal{G}(a) \approx G(a) + \Lambda + \frac{6B}{s} a^{s - 6} $. Then, 
\begin{align}
\mathcal{G}(a) &\sim \begin{cases}
    G_* a^{s - 6}, \quad G_* = \frac{6B}{s}, & \text{if}\; s>6\\
    \Lambda, & \text{if}\; s<6
\end{cases} 
\nonumber\\
&\implies t(a) \approx \begin{cases} \int \frac{\sqrt{3}da}{a \sqrt{G_* a^{s - 6}}}
= \frac{\sqrt{3}}{\sqrt{G_*}} \int a^{-1 - (s - 6)/2} da
= \frac{\sqrt{3}}{\sqrt{G_*}} \cdot \frac{a^{(6 - s)/2}}{(6 - s)/2},& \text{if}\; s>6\\
  \sqrt{\frac{3}{\Lambda}}\ln a, & \text{if}\; s<6
\end{cases} .
\end{align}
From above
\begin{equation}
a(t) \sim 
\begin{cases}\left[ \frac{|6 - s|}{2} \sqrt{\frac{G_*}{3}} \, t \right]^{2/(6 - s)} , & \text{if}\; s>6\\
\exp\left(\sqrt{\frac{\Lambda}{3}} t\right) , & \text{if}\; s<6
\end{cases}.
\end{equation}
If $s > 6$, the scalar field is
\begin{small}
\begin{align}
\phi(a) &= \phi_0 \pm \int \sqrt{ \frac{6 \mathcal{L}(a)}{a^2 \mathcal{G}(a)} } da
= \phi_0 \pm \int \sqrt{ \frac{6 [C + B(6/s - 1)a^s] }{ a^8 \mathcal{G}(a) } } da \nonumber\\
&\approx \phi_0 \pm \frac{1}{\sqrt{G_*}} \int \sqrt{\frac{a^{-s-2}(6Ba^{s}(6-s) + 6Cs)}{s}} da \nonumber\\
&= \phi_0 \mp \frac{2 \sqrt{6} \left(\sqrt{C s a^{-s}-B(s -6)}-\sqrt{B} \sqrt{s-6} \arctan
 \left(\frac{\sqrt{Cs a^{-s}-B( s-6)}}{\sqrt{B}
   \sqrt{s-6}}\right)\right)}{\sqrt{G_*} s^{3/2}}.
\end{align}
\end{small}
On the other hand, when $s < 6$, one gets
\begin{small}
\begin{align}
\phi(a) &= \phi_0 \pm \int \sqrt{ \frac{6 \mathcal{L}(a)}{a^2 \mathcal{G}(a)} } da
\approx \phi_0 \pm \int \sqrt{ \frac{6 [C + B(6/s - 1)a^s] }{ a^8 \Lambda } } da \nonumber\\
& = \phi_0 \pm \frac{12 \sqrt{C s-B (s-6) a^s}-2 \sqrt{c} s^{3/2} \,
   _2F_1\left(\frac{1}{2},-\frac{3}{s};\frac{s-3}{s};\frac{a^s B (s-6)}{c
   s}\right)}{a^3 \sqrt{\Lambda } (s-6) \sqrt{s}}.
\end{align}
\end{small}

From Eq \eqref{eq:hubble}, we have
\begin{equation}
H(t) \sim 
\begin{cases} \frac{1}{\sqrt{3}} \sqrt{\frac{6B}{s} a(t)^{s - 6} }, & \text{if}\; s>6\\
\sqrt{\frac{\Lambda}{3}}, & \text{if}\; s<6
\end{cases}, \quad \text{as}\; a\rightarrow \infty. 
\end{equation}

\subsubsection{Scalar field parametrization and quadrature formalism in the early-time limit}
The early-time limit will be dominated by the term $\mathcal{G}(a)\approx\beta a^{-p}$, where $\beta$ represents some constant, and $p = \max\{3\gamma, 6, 6-s\}$. Then, one gets
\begin{equation}
    t(a) \approx \frac{2 \sqrt{3} a^{p/2}}{\sqrt{\beta } p} \implies a(t) \approx 12^{-1/p} \left(\sqrt{\beta } p \right)^{2/p}t^{2/p}.
\end{equation}
The scalar field is 
\begin{align}
    \phi(a) \approx 
    \begin{cases}\frac{12 a^{\frac{p}{2}-3} \left(\sqrt{c} s^{3/2} \,
   _2F_1\left(\frac{1}{2},\frac{p-6}{2 s};\frac{p-6}{2 s}+1;\frac{a^s B (s-6)}{c
   s}\right)+(p-6) \sqrt{c s-B (s-6) a^s}\right)}{\sqrt{\beta } (p-6) \sqrt{s}
   (p+s-6)}, & \text{if}\; p\neq 6\\
   12(\sqrt{\beta } s^{3/2})^{-1} \left(\sqrt{c s-B (s-6) a^s}-\sqrt{c} \sqrt{s} \tanh
   ^{-1}\left(\frac{\sqrt{c s-B (s-6) a^s}}{\sqrt{c}
   \sqrt{s}}\right)\right), & \text{if}\; p=6
   \end{cases}.
\end{align}
And finally, the deceleration parameter is
\begin{equation}
    q = -1+\frac{p}{2}.
\end{equation}

\subsection{FLRW models}
\label{Asect:2.D}
Setting the input function as $\mathcal{F}(a)=B a^s$, $G(a) = -\frac{3k}{a^2}  $, $\rho_{r,0}=\rho_m=\Lambda=k=C=0$, that is, using the flat FRW metric, no matter, and only the scalar field, we obtain
\begin{subequations}
	\begin{align}
	&t=t_0+\frac{\sqrt{2s}}{\sqrt{B}(6-s)} a^{3-\frac{s}{2}},\\
	&\phi =\phi_0+\sqrt{6-s}\ln a,\\
	&V[\phi(a)]=B a^{s-6}.
	\end{align}
\end{subequations}
After inversion, we obtain
\begin{subequations}
	\begin{align}
	&a(t)=\left[\frac{\sqrt{B}|6-s|}{\sqrt{2s}}(t-t_0)\right]^{\frac{2}{6-s}},\\
	&\phi =\phi_0+\ln\left[\frac{\sqrt{B|6-s|}}{\sqrt{2s}}(t-t_0)\right]^{\frac{2}{\sqrt{6-s}}},\\
	&V[\phi(a)]=B e^{-\sqrt{6-s}(\phi-\phi_0)}.
	\end{align}
\end{subequations}
Consider the parametric solution for the minimal coupling case, where $
\mathcal{F}(a) = B a^s, s \leq 6, C > 0,
$
and we assume all background terms vanish
$\rho_{m,0} = \Lambda = k = 0$.
The solutions for the time quadrature, scalar field, and potential are given by
\begin{subequations}
\begin{align}
t(a) &= \frac{a^3\sqrt{6Ba^s + Cs} \, _2F_1\left(1, \frac{1}{2}+\frac{3}{s}; \frac{s + 3}{s}; -\frac{6 B a^s}{C s}\right)}{\sqrt{3s}C} + t_0, \\
\phi(a) &= \frac{\sqrt{6 - s}}{s} \ln\left(2 \sqrt{6 B a^s + C s} \sqrt{C s - B (s - 6) a^s} - \frac{12 B (s - 6) a^s + C (s - 12) s}{\sqrt{6} \sqrt{6 - s}}\right) \nonumber \\
&\quad -\frac{\sqrt{6}}{s} \ln\left(2 \sqrt{6 B a^s + C s} \sqrt{C s - B (s - 6) a^s} - B (s - 12) a^s + 2 C s \right) \nonumber \\
&\quad + \sqrt{6} \ln a + \phi_0, \\
V(\phi(a)) &= B a^{s - 6}.
\end{align}
\end{subequations}

\subsubsection{Asymptotic analysis of flat FLRW models}
\label{Asect:2.D.1}

\textbf{Early-time limit ($ a \to 0 $).} At early times, the behavior is characterized by
\begin{itemize}
    \item $ V(\phi) \to \infty $ for $ s < 6 $; $ V(\phi) \to B $ for $ s = 6 $.
    \item Scalar field evolves as $ \phi(a) \sim \sqrt{6} \ln a + \phi_0 + \mathcal{O}(1) $, implying $ a \sim e^{\phi / \sqrt{6}} $ as $ \phi \to -\infty $.
    \item The hypergeometric function tends to unity as its argument approaches zero, so the time evolution simplifies to
    \begin{equation}
    t(a) \sim \frac{a^3}{\sqrt{3 C}} + t_0 + \mathcal{O}(a^{3 + s}),
    \end{equation}
    leading to the inverse relation $ a(t) \sim (t - t_0)^{1/3} $.
    \item The inflationary parameters are:
    \begin{align}
\epsilon_H(a) &\to 1, \\
\eta_H(a) &  \to -3.
\end{align}
This regime is non-inflationary; the universe is dominated by scalar kinetic energy ($ \epsilon_H \to 1 $) and quickly varying ($ \eta_H \to -3 $).
\end{itemize}

\textbf{Late-time limit ($ a \to \infty $).} For large-scale factor
\begin{itemize}
    \item $ V(\phi) \to 0 $ for $ s < 6 $; $ V(\phi) \to B $ for $ s = 6 $.
    \item Scalar field grows logarithmically as
    \begin{equation}
    \phi(a) \sim \sqrt{6-s}\ln a + \phi_0 + \mathcal{O}(1),
    \quad \text{so} \quad a \sim e^{\phi /\sqrt{6-s
    }}.
    \end{equation}
    \item The hypergeometric function $_2 F_1(a,b,c,z)$ admits an asymptotic expansion for $ |z| \to -\infty $ according to the formula
    \begin{equation}_2 F_1(a,b,c,z)\approx \frac{\Gamma(b-a)\Gamma(c)}{\Gamma(b)\Gamma(c-a)}(-z)^{-a} + \frac{\Gamma(a-b)\Gamma(c)}{\Gamma(a)\Gamma(c-b)}(-z)^{-b},
    \end{equation}
    yielding
    \begin{equation}
    t(a) \sim A_1 a^{3-\frac{s}{2} } + A_2
    \quad \implies \quad a(t) \sim t^{1 /(3-\frac{s}{2})} \text{ for } s < 6.  
     \end{equation}
\end{itemize}

\paragraph{Discussion.}

The asymptotic behavior of the system separates naturally into two dynamical regimes, summarized in Table~\ref{tab:asymptotics}. 

\begin{table}[h]
\caption{Asymptotic behavior of the system}
\label{tab:asymptotics}
\begin{center}
\renewcommand{\arraystretch}{1.3}
\begin{tabular}{|c|c|c|c|c|}
\hline
Regime & $ V(\phi) $ & $ \phi(a) \sim$ & $ a(t) \sim $ & Interpretation \\
\hline
$ a \to 0 $ & $ B a^{s - 6} \to \infty $ & $ \sqrt{6} \ln a $ & $ t^{1/3} $ & Stiff-matter early universe \\
\hline
$ a \to \infty $ & $ B a^{s - 6} \to 0 $ & $ \sqrt{6-a}\ln a $ & $ t^{1/(3-\frac{s}{2})} $ & Kinetic-dominated scalar decay \\
\hline
\end{tabular}
\end{center}
\end{table}

In the limit $a \to 0$, the potential scales as $V(\phi) \sim B a^{\,s-6}$, which diverges for $s < 6$. In this regime the scalar field evolves logarithmically,
\begin{equation}
\phi(a) \sim \sqrt{6}\,\ln a,
\end{equation}
and the scale factor follows the expansion law
\begin{equation}
a(t) \sim t^{1/3}.
\end{equation}

At late times, $a \to \infty$, the potential becomes negligible, $V(\phi) \to 0$, and the dynamics transition into a kinetic-dominated regime. The field retains a logarithmic dependence on the scale factor, though with a modified coefficient, while the expansion evolves as
\begin{equation}
a(t) \sim t^{1/(3 - s/2)}.
\end{equation}
The deceleration parameter for this model is $q = 2 - \frac{s}{2}$, which will correspond to accelerated expansion  for $q > 4$. The current value of the parameter of about $q_0\approx -\frac{1}{2}$ \cite{Camarena:2020} corresponds to $s = 5$. 


\subsection{Bianchi I models}
\label{Asect:2.E}
Now, let's consider $\mathcal{F}(a)=B a^s, s\leq 6$, $\rho_{r,0}=\rho_m=\Lambda= C=0$, and the Bianchi I metric, then
\begin{subequations}
\begin{align}
&t=\frac{a^3 \sqrt{6 B a^s+s \sigma_0^2} \, _2F_1\left(1,\frac{1}{2}+\frac{3}{s};\frac{s+3}{s};-\frac{6 a^s B}{s \text{$\sigma $}^2_0}\right)}{\sqrt{3} \sqrt{s} \sigma_0^2}+t_0,\\
&\phi= \phi_0+\frac{2 \sqrt{6-s} \ln \left(6 B a^{s/2}+\sqrt{6} \sqrt{B} \sqrt{6 B
		a^s+s \sigma_0^2}\right)}{s},\\
&V[\phi(a)]=B a^{s-6}.
\end{align}
\end{subequations}

Let us also consider the case with non-vanishing cosmological constant and $\mathcal{F}(a)=B a^s, s=3$, $\rho_{r,0}=\rho_m=C=0$,
\begin{small}
\begin{subequations}
	\begin{align}
	&t=\frac{\ln \left(a^3 \Lambda +\sqrt{\Lambda } \sqrt{a^6 \Lambda +2 a^3 B+\sigma_0^2}+B\right)}{\sqrt{3} \sqrt{\Lambda }}+t_0,\\
	&\phi =\phi_0-\frac{2 i \sqrt{\frac{2}{3}} \sqrt{B} \sqrt{\sqrt{B^2-\Lambda  \sigma_0^2}+B} \mathcal{F}\left(i \sinh ^{-1}\left(\frac{a^{3/2}
			\sqrt{\Lambda }}{\sqrt{B+\sqrt{B^2-\Lambda  \sigma_0^2}}}\right)|\frac{2 B \left(B+\sqrt{B^2-\Lambda  \sigma_0^2}\right)-\Lambda 
			\sigma_0^2}{\Lambda  \sigma_0^2}\right)}{\sqrt{\Lambda } \sqrt{\sigma_0^2}},\\
	&V[\phi(a)]=B a^{s-6}.
	\end{align}
\end{subequations}
\end{small}
where ${\mathcal{F}}$ gives the elliptic integral of the first kind.

Inverting the above expressions, we get  
\begin{small}
\begin{subequations}
\begin{align}
&a(t)=\frac{\sqrt[3]{e^{\sqrt{3} \sqrt{\Lambda } (t_0-t)} \left(\left(B-e^{\sqrt{3} \sqrt{\Lambda } (t-t_0)}\right)^2-\Lambda  \sigma^2_0\right)}}{\sqrt[3]{2} \sqrt[3]{\Lambda }},\\
&\phi(a(t))=\phi_0-\frac{2 i \sqrt{\frac{2}{3}} \sqrt{B} \sqrt{\sqrt{B^2-\Lambda  \sigma_0^2}+B} \mathcal{F}\left(i \sinh ^{-1}\left(\frac{a(t)^{3/2}
		\sqrt{\Lambda }}{\sqrt{B+\sqrt{B^2-\Lambda  \sigma_0^2}}}\right)|\frac{2 B \left(B+\sqrt{B^2-\Lambda  \sigma_0^2}\right)-\Lambda 
		\sigma_0^2}{\Lambda  \sigma_0^2}\right)}{\sqrt{\Lambda } \sqrt{\sigma_0^2}},\\
& V(\phi)=-\frac{B \Lambda  \text{ns}\left(\frac{i \sqrt{\frac{3}{2}} \sqrt{\Lambda } \sqrt{\sigma_0^2} (\phi -\phi_0)}{2 \sqrt{B}
		\sqrt{B+\sqrt{B^2-\Lambda  \sigma_0^2}}}|\frac{2 B \left(B+\sqrt{B^2-\Lambda  \sigma_0^2}\right)-\Lambda  \sigma_0^2}{\Lambda 
		\sigma_0^2}\right)^2}{\sqrt{B^2-\Lambda  \sigma_0^2}+B},\\
&H(t)=\frac{\sqrt{\Lambda } \left(\left(B^2-\Lambda  \sigma_0^2\right) e^{2 \sqrt{3} \sqrt{\Lambda } (t_0-t)}-1\right)}{\sqrt{3} \left(\left(\Lambda 
	\sigma_0^2-B^2\right) e^{2 \sqrt{3} \sqrt{\Lambda } (t_0-t)}+2 B e^{\sqrt{3} \sqrt{\Lambda } (t_0-t)}-1\right)},
\end{align}
\end{subequations}
\end{small}
where $\text{ns}$ denotes the Jacobi elliptic function $\text{ns}(u|m)$.
This solution gives an accelerating expansion for a positive cosmological constant since for $\Lambda>0$, $q\rightarrow -1$ as $t\rightarrow\infty$, 
$H\rightarrow \sqrt{\frac{\Lambda}{3}}$ for $\Lambda>0$ as  $ t\rightarrow\infty$ 

\subsection{Asymptotic analysis of Bianchi I models}
\label{Asect:2.F}
We consider two analytic solutions for Bianchi I cosmologies with scalar source $ \mathcal{F}(a) = B a^s $, where $ s \leq 6 $, and analyze their asymptotic behavior in the limits $ a \to 0 $ and $ a \to \infty $.

\vspace{0.5em}
\subsubsection{Case 1: vanishing background terms (\texorpdfstring{$\rho_{m,0} = \Lambda = C = 0 $}{ρm0 = Λ = C = 0})}
\label{Asect:2.F.1}
\textbf{Early-time limit ($ a \to 0 $).}
\begin{itemize}
    \item $ V(\phi) \to \infty $ for $ s < 6 $; $ V(\phi) \to B $ for $ s = 6 $.
    \item $ \phi(a) \sim \text{const}$.
    \item $ t(a) \sim \frac{a^3}{\sqrt{3 s \sigma_0^2}} + t_0 $, so $ a(t) \sim (t - t_0)^{1/3} $.
\end{itemize}

\textbf{Late-time limit ($ a \to \infty $).}
\begin{itemize}
    \item $ V(\phi) \to 0 $ for $ s < 6 $; $ V(\phi) \to B $ for $ s = 6 $.
    \item $ \phi(a) \sim \sqrt{6 - s} \ln a + \phi_0 $, so $ a \sim e^{(\phi - \phi_0)/\sqrt{6 - s}} $.
    \item $ t(a) \sim \sigma_0^{\frac{1}{2} + \frac{3}{s}} a^{3-\frac{s}{2}} $, hence $ a(t) \sim t^{1/(3 -\frac{s}{2})} $.
\end{itemize}

\vspace{0.5em}
\subsubsection{Case 2: nonzero cosmological constant (\texorpdfstring{$ \Lambda > 0, s = 3 $)}{Λ > 0, s = 3 }}
\label{Asect:2.F.2}

\textbf{Late-time limit ($ t \to \infty $).}
\begin{itemize}
    \item $ a(t) \to \infty $, with exponential growth $ a(t) \sim e^{\sqrt{\Lambda/3} \, t} $.
    \item $ H(t) \to \sqrt{\Lambda/3} $. 
    \item $ q(t) \to -1 $, confirming accelerated expansion.
    \item $ V(\phi) \to 0 $, consistent with scalar field dilution.
\end{itemize}

\textbf{Early-time limit ($ t \to t_0 $).}
\begin{itemize}
    \item In general $ a(t) \to 0 $, with $ V(\phi) \to \text{const} $ if $t_0 = \frac{\ln( B-\sqrt{\Lambda}\sigma_0)}{\sqrt {3\Lambda}}$.
    \item $\phi(t) \rightarrow \infty$ if $t_0 = \frac{\ln( B-\sqrt{\Lambda}\sigma_0)}{\sqrt {3\Lambda}}$.
    \item Dynamics resemble stiff-matter or Kasner-like behavior.
\end{itemize}

\paragraph{Discussion.} The asymptotic structure of the solutions summarized in Table~\ref{Asect:2.F.3} reveals two distinct dynamical regimes for Bianchi~I cosmologies with the cosmological constant included. 

\begin{table}[h]
\caption{Asymptotic behavior}
\label{Asect:2.F.3}
\begin{center}
\renewcommand{\arraystretch}{1.3}
\begin{tabular}{|c|c|c|c|c|}
\hline
Regime & $ V(\phi) $ & $ \phi(a) $ & $ a(t) $ & Interpretation \\
\hline
$ a \to 0 $ & $ \to \infty $ & $ \ln a $ & $ t^{1/3} $ & Anisotropic stiff-like early universe \\
\hline
$ a \to \infty $ & $ \to 0 $ & $ \ln a $ & $ e^{\sqrt{\Lambda/3} \, t} $ & Accelerated isotropic expansion \\
\hline
\end{tabular}
\end{center}
\end{table}

In the limit $a \to 0$, the potential diverges, $V(\phi)\to\infty$, and the scalar field evolves logarithmically as $\phi(a)\sim \ln a$. The corresponding expansion law, $a(t)\sim t^{1/3}$, reflects a stiff-like early universe. This behavior is consistent with the well-known dominance of shear and kinetic terms near the initial singularity in anisotropic models.

At late times, $a \to \infty$, the potential decays to zero and the scalar field again exhibits a logarithmic dependence on the scale factor. In this regime the scale factor grows exponentially, $a(t)\sim e^{\sqrt{\Lambda/3}\, t}$, signaling the onset of accelerated expansion.

Overall, these results confirm that Bianchi~I models with scalar sources naturally evolve from an anisotropic, stiff-like early phase toward a late-time isotropic state, with the cosmological constant driving the system toward accelerated expansion.

\subsubsection{Inflationary parameters in Bianchi I cosmology}
\label{Asect:2.F.4}
We analyze the slow-roll parameters for the Bianchi I solution with scalar source $ \mathcal{F}(a) = B a^s$, assuming minimal coupling and vanishing background terms
\begin{equation}
\rho_{m,0} = \Lambda = C = 0.
\end{equation}
The scalar field kinetic term is given by
\begin{equation}
\mathcal{L}(a) = \frac{6 - s}{s} B a^{s - 6},
\end{equation}
and the Friedmann function includes anisotropic shear:
\begin{equation}
\mathcal{G}(a) = \frac{6 B}{s a^{6 - s}} + \frac{\sigma_0^2}{a^6}.
\end{equation}

The inflationary parameters are:
\begin{subequations}
\begin{align}
\epsilon_H(a) &= 1- \frac{s}{6}, \label{eq:epsilon_bianchi} \\
\eta_H(a) &=  \frac{1}{2}(s-6). \label{eq:eta_bianchi}
\end{align}
\end{subequations}
Obviously, with vanishing constant $C$, the slow-roll parameters do not depend on the geometric term $G(a)$, since it does enter the definition of the $\mathcal{L}(a)$ function, and are the same for both the FLRW model, as well as for the Bianchi I models. The fact that the slow-roll parameters are constant means that the inflationary period does not come to a halt (one expects the period of an accelerated expansion to stop when $\epsilon_H(a) \approx 1$).

\section{Exact Scalar field solutions for nonminimally coupled scalar field}
\label{Asect:3}

Let's consider a nonminimally coupled scalar field cosmology given by \cite{Gonzalez:2007ht, Fadragas:2014mra}:
\begin{subequations}
	\label{non_min}
	\begin{align}
	&\dot\phi\left[\ddot\phi+3 H \dot \phi +\frac{d V(\phi)}{d\phi}\right]=\frac{1}{2}(4-3\gamma)\rho_m H a\frac{d\ln\chi}{d a},\label{Fried1b}\\
	&\dot{\rho_m}+3\gamma H\rho_m=-\frac{1}{2}(4-3\gamma)\rho_m H a\frac{d\ln\chi}{d a},\label{consb}\\
	& 3H^2=\frac{1}{2}\dot\phi^2+V(\phi)+\Lambda+G(a) + \rho_m.\label{Fried2b}
	\end{align}
\end{subequations}
Integrating \eqref{consb} it follows
\begin{equation}
\rho_m=\frac{\rho_{m,0}}{a^{3\gamma}}\chi(a)^{-2+\frac{3\gamma}{2}}.
\end{equation}

Introducing the new time variable $d\tau=a^{-3}dt$, and parameterizing the potential according to \eqref{quad} the equation \eqref{Fried1b} can be written as
\begin{subequations}
\begin{align}
&\frac{d}{d\tau}\left[\frac{1}{2}\left(\frac{d\phi}{d\tau}\right)^2+\mathcal{F}(a)\right]=6\frac{\mathcal{F}}{a}\frac{d a}{d\tau}+\frac{1}{2}\rho_{m,0}(4-3\gamma)a^{3(2-\gamma)} \chi^{-3+\frac{3\gamma}{2}}\frac{d \chi}{d \tau},\\
&\frac{1}{2}\left(\frac{d\phi}{d\tau}\right)^2+\mathcal{F}(a)=C + 6 \int\frac{\mathcal{F}}{a} d a +\frac{1}{2}\rho_{m,0}(4-3\gamma)\int a^{3(2-\gamma)} \chi^{-3+\frac{3\gamma}{2}} d \chi \nonumber\\
&=C-\rho_{m,0}
a^{3(2-\gamma)} \chi^{-2+\frac{3\gamma}{2}} + 6 \int\frac{\mathcal{F}}{a} d a  +3\rho_{m,0}(2-\gamma) \int \chi^{-2+\frac{3\gamma}{2}} a^{3(2-\gamma)}\frac{d a}{a}.
\end{align}
\end{subequations}
so that, finally,  
\begin{equation}
    \frac{1}{2}\dot{\phi}^2+\frac{\mathcal{F}(a)}{a^6}=\frac{C}{a^6}-\rho_{m,0}a^{-3\gamma}\chi^{-2+\frac{3\gamma}{2}} +\frac{3}{a^6}\int\left[2\mathcal{F}+(2-\gamma)\rho_{m,0}a^{3(2-\gamma)} \chi(a)^{-2+\frac{3\gamma}{2}} \right]\frac{d a}{a},
\end{equation}
and the kinetic part,
$\frac{1}{2}\dot{\phi}^2=\mathcal{L}(a)$,
becomes
\begin{align}\label{def_L_nonminimal}
\mathcal{L}(a)=-\frac{\rho_{m,0}}{a^{3\gamma}}\chi(a)^{-2+\frac{3\gamma}{2}}-\frac{\mathcal{F}(a)}{a^6}+\frac{C}{a^6}+\frac{3}{a^6}\int\left[2\mathcal{F}+(2-\gamma)\rho_{m,0}a^{3(2-\gamma)} \chi(a)^{-2+\frac{3\gamma}{2}} \right]\frac{d a}{a}.
\end{align}
Moreover, in analogy to the previous part of the paper, we introduce:
\begin{align}\label{def_G_nonminimal}
&3H^2=\mathcal{G}(a)=G(a)+\Lambda+\frac{C}{a^6}+\frac{3}{a^6}\int\left[2\mathcal{F}+(2-\gamma)\rho_{m,0}a^{3(2-\gamma)} \chi(a)^{-2+\frac{3\gamma}{2}} \right]\frac{d a}{a}. 
\end{align}

As before, we obtain the quadratures \eqref{quad-chimento}
but with the definitions of $\mathcal{L}(a)$ and $\mathcal{G}(a)$ given by \eqref{def_L_nonminimal} and \eqref{def_G_nonminimal}.  $t_0$ and $\phi_0$ are the other two integration constants. As compared to the previous case, the theory is now enriched by one function $\chi$, describing the non-minimal coupling between the scalar field and matter sector.

Identically to the previous case, to obtain physical solutions it is required that $\mathcal{G}(a)$ and $\mathcal{L}(a)$ are nonnegative on an interval $(a_1,a_2),$ $0\leq a_1\leq a_2\leq \infty.$ \cite{Chimento:1995da}

Setting $\chi(a)\equiv 1$ and taking integration by parts, the equations \eqref{quad-chimento} (the minimally coupled case) are recovered.

\subsection{Flat FLRW metric}
\label{Asect:3.A}
Selecting $\mathcal{F}(a) = B a^s$ and $\chi(a) = \chi_0 a^r$, and imposing the parameter choices $C = k = \Lambda = 0$, with $r = \frac{2(s - 2)}{3\gamma - 4} + 2$, we obtain
\begin{equation}
\Delta t = -\frac{2 \sqrt{s} \chi_0 a^{3 - s/2}}{(s - 6) \sqrt{2B \chi_0^2 - (\gamma - 2) \rho_{m,0} \chi_0^{3\gamma/2}}},
\end{equation}
\begin{equation}
\Delta\phi = \frac{\ln(a) \sqrt{B(s - 6) \chi_0^2 + \rho_{m,0} \chi_0^{3\gamma/2}(3\gamma + s - 6)}}{\sqrt{\frac{1}{2}(\gamma - 2) \rho_{m,0} \chi_0^{3\gamma/2} - B \chi_0^2}}.
\end{equation}

Applying inverse function properties yields
\begin{subequations}
\begin{align}
a(t) &= 2^{\frac{2}{s - 6}} \left(-\frac{(s - 6)\, \Delta t\, \sqrt{2B \chi_0^2 - (\gamma - 2) \rho_{m,0} \chi_0^{3\gamma/2}}}{\sqrt{s}\, \chi_0}\right)^{-\frac{2}{s - 6}}, \\
\phi(t) &= \frac{\sqrt{2B(s - 6) \chi_0^2 + 2\rho_{m,0} \chi_0^{3\gamma/2}(3\gamma + s - 6)}\, \ln(a(t))}{\sqrt{(\gamma - 2)\rho_{m,0} \chi_0^{3\gamma/2} - 2B \chi_0^2}} + \phi_0,
\end{align}
\begin{equation}
V(\phi) = B \exp\left(
\frac{(s - 6)\, \Delta\phi\, \sqrt{\frac{1}{2}(\gamma - 2) \rho_{m,0} \chi_0^{3\gamma/2} - B \chi_0^2}}
{\sqrt{B(s - 6) \chi_0^2 + \rho_{m,0} \chi_0^{3\gamma/2}(3\gamma + s - 6)}}
\right).
\end{equation}
\end{subequations}

This solution is non-accelerating. However, for $ \Lambda \neq 0 $, we find:
\begin{small}
\begin{equation}
\Delta t = \frac{2\sqrt{3}}{\sqrt{\Lambda}(6 - s)} \sinh^{-1} \left( 
\frac{\sqrt{\Lambda s} a^{3 - s/2}}{\sqrt{6B - 3(\gamma - 2)\rho_{m,0} \chi_0^{3\gamma/2 - 2}}} 
\right),
\end{equation}
\begin{equation}
\Delta\phi = K(a)\, \csc^{-1} \left( 
\frac{\sqrt{\Lambda s} \chi_0 a^{3 - s/2}}{\sqrt{3(\gamma - 2)\rho_{m,0} \chi_0^{3\gamma/2} - 6B \chi_0^2}} 
\right),
\end{equation}
\end{small}

where
\begin{small}
\begin{equation}
K(a) = \frac{2\sqrt{2\Lambda s}\chi_0 a^{\frac{3\gamma}{2} + \frac{2s}{3\gamma - 4} + 3} \sqrt{B(s - 6)\chi_0^2 + \rho_{m,0} \chi_0^{3\gamma/2}(3\gamma + s - 6)} \sqrt{\frac{a^{s - 6}(6B \chi_0^2 - 3(\gamma - 2)\rho_{m,0} \chi_0^{3\gamma/2})}{\Lambda s \chi_0^2} + 1}}{(s - 6) \sqrt{(\gamma - 2)\rho_{m,0} \chi_0^{3\gamma/2} - 2B \chi_0^2} \sqrt{a^{3\gamma(\frac{s}{3\gamma - 4} + 1)}(3(\gamma - 2)\rho_{m,0} \chi_0^{3\gamma/2} - 6B \chi_0^2) - \Lambda s \chi_0^2 a^{3\gamma + \frac{4s}{3\gamma - 4} + 6}}}.
\end{equation}
\end{small}
Using inverse function identities again, we obtain
\begin{equation}
a(t) = \left(-\frac{B' \sinh\left(\frac{\sqrt{\Lambda}(s - 6)\Delta t}{2\sqrt{3}}\right)}{\sqrt{s\Lambda}}\right)^{-\frac{2}{s - 6}},
\end{equation}
\begin{equation}
H(t) = -\frac{\sqrt{\Lambda}}{\sqrt{3}} \coth\left( \frac{\sqrt{\Lambda}(s - 6)\Delta t}{2\sqrt{3}} \right),
\end{equation}
\begin{equation}
q(t) = -1 + \frac{6 - s}{\cosh\left( \frac{\sqrt{\Lambda}(s - 6)(t - t_0)}{\sqrt{3}} \right) + 1},
\end{equation}
where $ B' = \sqrt{6B - 3(\gamma - 2)\rho_{m,0} \chi_0^{3\gamma/2 - 2}} $.

\subsubsection{Slow-roll parameters in flat FLRW cosmology with interaction}
\label{Asect:3.A.1}
Assuming $k=\Lambda=C=0$, we consider the interacting scalar field model in a flat FLRW background. The slow-roll parameters are defined as
\begin{small}
\begin{subequations}
\begin{align}
\epsilon_H(a) &= 1 + \frac{\mathcal{F}(a)}{
\rho_{m,0} a^{6 - 3\gamma} \chi(a)^{\frac{3\gamma}{2} - 2}
- 3 \int \left( \frac{2 \mathcal{F}(a)}{a} - (\gamma - 2) \rho_{m,0} a^{5 - 3\gamma} \chi(a)^{\frac{3\gamma}{2} - 2} \right) da
}, \\
\eta_H(a) &= \frac{a \mathcal{F}'(a)}{
2 \left[
\rho_{m,0} a^{6 - 3\gamma} \chi(a)^{\frac{3\gamma}{2} - 2}
- 3 \int \left( \frac{2 \mathcal{F}(a)}{a} - (\gamma - 2) \rho_{m,0} a^{5 - 3\gamma} \chi(a)^{\frac{3\gamma}{2} - 2} \right) da
+ \mathcal{F}(a)
\right]
} \nonumber \\
&\quad - \frac{3 a^{3\gamma} \mathcal{F}(a) \chi(a)^2}{
a^6 \rho_{m,0} \chi(a)^{3\gamma/2}
- a^{3\gamma} \chi(a)^2 \left[
3 \int \left( \frac{2 \mathcal{F}(a)}{a} - (\gamma - 2) \rho_{m,0} a^{5 - 3\gamma} \chi(a)^{\frac{3\gamma}{2} - 2} \right) da
- \mathcal{F}(a)
\right]
} \nonumber \\
&\quad + \frac{a^7 (3\gamma - 4) \rho_{m,0} \chi'(a)}{
4 a^6 \rho_{m,0} \chi(a)
- 4 a^{3\gamma} \chi(a)^{3 - \frac{3\gamma}{2}} \left[
3 \int \left( \frac{2 \mathcal{F}(a)}{a} - (\gamma - 2) \rho_{m,0} a^{5 - 3\gamma} \chi(a)^{\frac{3\gamma}{2} - 2} \right) da
- \mathcal{F}(a)
\right]
} - 3.
\end{align}
\end{subequations}
\end{small}
The integral $\mathcal{I}(a) =\int \left( \frac{2 \mathcal{F}(a)}{a} - (\gamma - 2) \rho_{m,0} a^{5 - 3\gamma} \chi(a)^{\frac{3\gamma}{2} - 2} \right) da$ becomes
\begin{equation}
\mathcal{I}(a) = \frac{a^s \left(2 B \chi_0^2-(\gamma -2)\rho_{m,0} \chi_0^{3 \gamma /2}\right)}{s \chi_0^2}.
\end{equation}

Then, the slow-roll parameters are
\begin{subequations}
\begin{align}
\epsilon_H(a) &= \frac{B (s-6) \chi_0^2+\rho_{m,0} \chi_0^{3 \gamma /2}
   (3 \gamma +s-6)}{\rho_{m,0} \chi_0^{3 \gamma /2} (3 \gamma
   +s-6)-6 B \chi_0^2}, \\
\eta_H(a) &= \frac{1}{2}(s-6).
\end{align}
\end{subequations}
For this particular set of parameters, the slow-roll parameters are constant, meaning that the inflation (for an appropriate combination of the remaining constants) will not stop at any moment.

\subsection{Bianchi I metric}
\label{Asect:3.B}
Selecting $ \mathcal{F}(a) = B a^s $ and $ \chi(a) = \chi_0 a^r $, and setting parameters $ C = k = \Lambda = 0 $, with $ r = \frac{2(s - 2)}{3\gamma - 4} + 2 $, the resulting solution is
\begin{small}
\begin{subequations}
\begin{align}
\Delta t &= \frac{a^3 \sqrt{ \frac{3 a^s (2B \chi_0^2 - (\gamma - 2)\rho_{m,0} \chi_0^{3\gamma/2}) }{s \chi_0^2} + \sigma_0^2 }\, {}_2F_1\left(1, \frac{1}{2} + \frac{3}{s}; \frac{s + 3}{s}; \frac{3 a^s((\gamma - 2)\rho_{m,0} \chi_0^{3\gamma/2} - 2B \chi_0^2)}{s \sigma_0^2 \chi_0^2} \right)}{\sqrt{3}\sigma_0^2}, \\
\Delta\phi &= K_2(a)\, \csc^{-1} \left( \frac{\sqrt{s\sigma_0^2} \chi_0 a^{-s/2}}{\sqrt{3(\gamma - 2)\rho_{m,0} \chi_0^{3\gamma/2} - 6B \chi_0^2}} \right),
\end{align}
\end{subequations}
\end{small}
with
\begin{small}
\begin{equation}
K_2(a) = \frac{2 \sqrt{2} a^{\frac{1}{2}(3\gamma + \frac{4s}{3\gamma - 4})} \sqrt{B(s - 6)\chi_0^2 + \rho_{m,0} \chi_0^{3\gamma/2}(3\gamma + s - 6)} \sqrt{a^s(6B \chi_0^2 - 3(\gamma - 2)\rho_{m,0} \chi_0^{3\gamma/2}) + s\sigma_0^2 \chi_0^2}}{s \sqrt{2B \chi_0^2 - (\gamma - 2)\rho_{m,0} \chi_0^{3\gamma/2}} \sqrt{a^{3\gamma(\frac{s}{3\gamma - 4} + 1)}(6B \chi_0^2 - 3(\gamma - 2)\rho_{m,0} \chi_0^{3\gamma/2}) + s\sigma_0^2 \chi_0^2 a^{3\gamma + \frac{4s}{3\gamma - 4}}}}.
\end{equation}
\end{small}
Because the inflationary parameters do not depend on the geometric term $G(a)$ (since it does not enter the function $\mathcal{L}(a)$, which is the key component of the slow-roll parameters), the Bianchi models will yield precisely the same values of them as the FRLW model. Obviously, the scalar spectral index and the tensor-to-scalar ratio will also be equal. This, however, does not mean that the dependence of these quantities on the cosmic time or the scalar field will be the same.

\section{Exact Scalar Field solutions in Brane Cosmology}
\label{Asect:4}

In this section, we analyze scalar field trapping in brane-world cosmologies with FLRW and Bianchi~I embeddings. The scalar field $\phi$ evolves under a potential $V(\phi)$ and interacts with a matter fluid of energy density $\rho_m$, governed by an equation of state $p_m = (\gamma - 1)\rho_m$. The total energy density $\rho_T$ includes contributions from the scalar field, matter, and a dark radiation term $\rho_{r,0}$, with brane corrections introduced via the parameter $\sigma_b = 1/(2\lambda_b)$.

\subsection{Scalar field trapping in brane universes: FLRW and Bianchi I cosmologies}
\label{Asect:4.A}
Let us consider a brane universe given either by an FLRW or a Bianchi I model, where a scalar field potential and a matter fluid energy density $\rho_m$ with equation of state $p_m=(\gamma-1)\rho_m$ are trapped.
The equations field is given by 
\begin{subequations}
	\label{branes_syst}
	\begin{align}
	&\ddot{\phi}+3H\dot{\phi}+V'(\phi)=0,\label{Fried1c}\\
	&\dot{\rho_m}+3\gamma H\rho_m=0\label{cons3},\\
	&3 H^2=\rho_T(1+\sigma_b \rho_T)+3\frac{\rho_{r,0}}{a^4}+\Lambda+G(a),
	\end{align}
\end{subequations}
where 
\begin{equation}
\rho_T=\frac{1}{2}\dot{\phi}^2+V(\phi)+\rho_m,
\end{equation}
$\rho_{r,0}$ is related with the so-called dark radiation term, and $\sigma_b=\frac{1}{2\lambda_b}$, where $\lambda_b$ denotes the brane tension.
Integrating \eqref{cons3} we obtain 
\begin{equation}
\rho_m=\frac{\rho_{m,0}}{a^{3\gamma}},
\end{equation}
Introducing the new time variable $dt=a^3 d\tau$ in \eqref{Fried1c}, and given by \eqref{quad}, we obtain the first integral
\begin{equation}\label{eq2c}
\frac{1}{2}\dot{\phi}^2+V(\phi)=\frac{C}{a^6}+\frac{6}{a^6}\int \frac{\mathcal{F}(a)}{a}da,
\end{equation}
where $C$ is an arbitrary integration constant.
That is, 
\begin{align}
\rho_T=\frac{\rho_{m,0}}{a^{3\gamma}}+\frac{C}{a^6}+\frac{6}{a^6}\int \frac{\mathcal{F}(a)}{a}da,
\end{align}
Hence, as before, the system \eqref{branes_syst}
can be written as
\begin{align}
3H^2=\mathcal{G}(a),\\
\frac{1}{2}\dot{\phi}^2=\mathcal{L}(a),
\end{align}
where 
\begin{subequations}
	\label{L_G_branes}
	\begin{align}
	&\mathcal{G}(a)=\frac{\rho_{m,0}}{a^{3\gamma}}+\frac{C}{a^6}+\frac{6}{a^6}\int \frac{\mathcal{F}(a)}{a}da+3\frac{\rho_{r,0}}{a^4}+\Lambda+G(a) +\sigma_b \left[\frac{\rho_{m,0}}{a^{3\gamma}}+\frac{C}{a^6}+\frac{6}{a^6}\int \frac{\mathcal{F}(a)}{a}da\right]^2,\\
	&\mathcal{L}(a)=-\frac{\mathcal{F}(a)}{a^6}+\frac{C}{a^6}+\frac{6}{a^6}\int \frac{\mathcal{F}(a)}{a}da.
	\end{align}
\end{subequations}
As before, we obtain the quadratures \eqref{quad1} and \eqref{quad2}, 
where $t_0$ and $\phi_0$ are other two integration constants and $\mathcal{G}(a)$ and $\mathcal{L}(a)$ are given by \eqref{L_G_branes}. 

Eqs. \eqref{quad}, \eqref{quad1} and \eqref{quad2}, allows to obtain a solution $a(t), \phi(t)$ for a potential $V(\phi),$ the functions $\mathcal{F}(a), \chi(a)$, the parameters $\rho_{m,0},\rho_{r,0}, \gamma, k, \sigma_0, \sigma_b$ and $\Lambda$, the choice of the primitives involved and the integration constant $C$.
The solution \eqref{quad1} and \eqref{quad2} has three integration constants, $t_0,\phi_0, C$.

As before, to obtain physical solutions it is required that $\mathcal{G}(a)$ and $\mathcal{L}(a)$ are nonnegative on an interval $(a_1,a_2),$ $0\leq a_1\leq a_2\leq \infty.$ \cite{Chimento:1995da}

Now, in the low energy limit $\rho_T\ll\frac{1}{\sigma_b}$, and assuming that no dark-radiation is present, the equations \eqref{quad-chimento} are recovered.

We consider scalar field–dominated cosmologies with brane corrections, governed by a source term
\begin{equation}
\mathcal{F}(a) = B a^s, \quad \gamma = 1, \quad C = \Lambda = \rho_{r,0} = G(a) = 0.
\end{equation}

We introduce
\begin{align}
G_{\text{base}}(a) &= \frac{\rho_{m,0}}{a^3} + \frac{6B}{s a^{6 - s}}, \\
G_{\text{brane}}(a) &= \sigma_b \left( \frac{\rho_{m,0}}{a^3} + \frac{6B}{s a^{6 - s}}\right)^2, \\
\mathcal{G}(a) &= G_{\text{base}}(a) + G_{\text{brane}}(a), \\
\mathcal{L}(a) &= \frac{B(6 - s)}{s} a^{s - 6}.
\end{align}

The scalar field integral becomes
\begin{equation}
\phi(a) = \phi_0 +  \sqrt{6Bs(6-s)}  \int \frac{a^{2-\frac{s}{2}}}{\sqrt{6 B
   a^s+a^3 \rho_{m,0}  s} \sqrt{6 B \sigma_b a^s+a^6 s+a^3
   \rho_{m,0} s \sigma_b}} \, da.
\end{equation}

\subsection{Late-time behavior \texorpdfstring{$(a \to \infty)$}{(a -> ∞)}}
\label{Asect:4.B}
In the asymptotic regime, if $0 < s < 3$, brane corrections dilute and the dominating term is
\begin{equation}
\mathcal{G}(a) \approx \frac{\rho_{m,0}}{a^3}, \quad 
\mathcal{L}(a) = \frac{B(6 - s)}{s a^{6 - s}}.
\end{equation}

Then the evolution proceeds as
\begin{align}\label{eq:late-time-brane}
t(a)-t_0 &\approx \frac{2 a^{3/2}}{ \sqrt{3\rho_{m,0}}}, \quad 
a(t) \approx \frac{\sqrt[3]{3\rho_{m,0}} (t-t_0)^{2/3}}{2^{2/3}}, \quad 
H(t) = \frac{2}{3(t - t_0)}, \quad q(t) = \frac{1}{2}.
\end{align}

The scalar field evolves approximately as
\begin{equation}
\phi(t) \approx \phi_0 + \frac{\sqrt{B} 2^{\frac{5}{2}-\frac{s}{3}} 3^{s/6} \sqrt{\frac{6-s}{s}}
   \rho_{m,0}^{\frac{s}{6}-1} (t-t_0)^{\frac{s}{3}-1}}{s-3}.
\end{equation}
Hence, we have the potential to be reconstructed as 
\begin{equation}
a(\phi) \sim \left(\frac{24B (6-s)}{\rho_{m,0} (s-3)^2 s (\phi - \phi_0)
   ^2}\right)^{\frac{1}{3-s}} \quad \implies \quad
V(\phi) \sim B \left(\frac{24B (6-s)}{\rho_{m,0} (s-3)^2 s (\phi-\phi_0)
   ^2}\right)^{\frac{s-6}{3-s}}.
\end{equation}

\subsection{Early-time behavior \texorpdfstring{$(a \to 0)$}{(a -> 0)}}
\label{Asect:4.C}
As $a \to 0$, brane nonlinearities dominate. Let
\begin{equation}
\mathcal{G}(a) \sim \sigma_b a^{-p}, \quad 
p = \max\{6, 2(6 - s)\}.
\end{equation}

Then, 
\begin{equation}
a(t) \sim 12^{-1/p} \left(p \sqrt{\sigma_b } t\right)^{2/p}, \quad 
H(t) = \frac{2}{pt}, \quad 
q(t) = \frac{p}{2} - 1.
\end{equation}
We obtain the scalar field evolution
\begin{align}
\phi(t)&\sim 2 \sqrt{6} \sqrt{\frac{B (6-s) \left(12^{-1/p} \left(\frac{1}{p^2 \sigma_b 
   }\right)^{-1/p}\right)^{p+s-6}}{s \sigma_b (p+s-6)^2}}(t-t_0)^{\frac{p+s-6}{p}}, \\ 
V(\phi) &\sim B \left(\frac{24B (6-s)}{s \sigma_b  (\phi-\phi_0) ^2
   (p+s-6)^2}\right)^{\frac{s-6}{-p-s+6}}.
\end{align}

\subsection{Example: quadratic source \texorpdfstring{$(\mathcal{F}(a) = B a^2)$}{(\mathcal{F}(a) = B a**2)}}
\label{Asect:4.D}
Fixing $ s = 2 $:
\begin{align}
\mathcal{G}(a) &= \frac{3 B}{a^4}+\frac{\rho_{m,0}}{a^3}+\sigma_b \left(\frac{3
   B}{a^4}+\frac{\rho_{m,0}}{a^3}\right)^2, \\
\mathcal{L}(a) &= \frac{2B}{a^4}.
\end{align}

The late-time dynamics is the same as in \eqref{eq:late-time-brane}. 

Then the evolution proceeds as 
\begin{equation}
\phi(a) \approx \phi_0 -4 \sqrt{3} \sqrt{\frac{B}{a \rho_{m,0} }}, \quad 
V(\phi)\sim \frac{\rho_{m,0} ^4 \phi ^8}{B^3}, 
\end{equation}
with an early time limit
\begin{equation}
\mathcal{G}(a) \sim a^{-8} \implies 
a(t) \sim \sqrt[4]{t}, \quad q = 3, \quad 
\phi(t) \sim \sqrt{t}, \quad V(\phi) \sim \phi^{-2}.
\end{equation}

\subsection{Bianchi I brane cosmology}
\label{Asect:4.E}
For the Bianchi I case, we have the field quadrature
\begin{small}
\begin{equation}
\phi(a) = \phi_0 + \sqrt{6Bs(6-s)}\int \frac{a^{\frac{s+6}{2}-1}}{\sqrt{36
   B^2 \sigma_b a^{2 s}+6 B s a^{s+3} \left(a^3+2\rho_{m,0}
  \sigma_b\right)+a^6 s^2 \left(\rho_{m,0} \left(a^3+\rho_{m,0} \sigma_b\right)+\sigma_{BI} \right)}} \, da.
\end{equation}
\end{small}
Let us notice that the inclusion of the anisotropic shear does not change neither the late-time, nor the early-time limit, since the early-time limit will be dominated by a term $\sigma' a^{-p}$ with $p=\max\{6, 2(6-s)\}$, while the late-time dynamics will be dominated by $\rho_{m, 0} a^{-3}$.

\section{Concluding remarks}
\label{conclusions}
Due to the length of the paper, let us first summarize the content of the presented work. In Section~\ref{sect:0}, we formulated the constrained dynamical system for a scalar field coupled to matter and geometry.
The technical core appeared in Section~\ref{sec:decay_center_perturb}. Section~\ref{sec:decay} derived refined decay estimates, including integrability of the dissipative quantity, uniform derivative bounds, pointwise $O(1/t)$ decay, and a bootstrap to exponential convergence near nondegenerate minima. Section~\ref{sec:center} constructed the local invariant manifold $\mathcal{M}_{\mathrm{loc}}$, performed finite--dimensional reduction on the Friedmann constraint, and recorded its regularity and attraction properties. Section~\ref{sec:perturbations} established persistence of equilibria, decay estimates, and local manifolds under small $C^k$ perturbations of the coupling $\chi(\phi)$ and geometric term $G(a)$.

Section~\ref{sec:discussion_averaging} summarized the averaging framework, established a correspondence between its assumptions and the model under study, and explained how averaging errors entered the global analysis. It included, among others, the Barbalat and LaSalle arguments and spectral gap considerations. We also provided a practical recipe for verifying hypotheses and discussed the limitations. Section~\ref{sec:quad} introduced the quadratic potential.

Section~\ref{Asect:2} developed the analytic framework for minimally coupled scalar fields in FLRW and Bianchi~I geometries. Reformulation in $\tau$--time enabled integration, yielding quadrature expressions for $\phi(a)$ and $t(a)$ and closed-form inflationary observables. Subsequent subsections classified FLRW solutions, analyzed asymptotics, and extended the formalism to anisotropic Bianchi~I backgrounds.

In Section~\ref{Asect:3}, we examined scalar--matter non-minimal coupling in both FLRW and Bianchi~I geometries. Section~\ref{Asect:4} introduced scalar fields in brane--world scenarios, incorporating geometric corrections and modified Friedmann equations. It reviewed scalar trapping, derived early-- and late--time behaviors, introduced analytic source terms, and generalized to anisotropic brane cosmologies.

In this work, we developed a unified analytic and numerical framework for scalar-field cosmologies that integrates averaging theory, asymptotic analysis, and exact quadrature techniques. Starting from the full five--dimensional Einstein--scalar--fluid system, we established global decay estimates for matter, scalar kinetic energy, and geometric contributions under broad structural hypotheses. These results, making use of Barbalat’s lemma and uniform derivative bounds, showed that expanding solutions with nonnegative potentials are driven toward low-energy regimes in which dissipation dominates, and the scalar field approaches either a finite equilibrium or a monotonic runaway configuration.

A central contribution of the paper was the implementation of averaging methods for oscillatory scalar fields. By isolating the fast phase and constructing the associated averaged system, we proved that the slow dynamics approximate the full oscillatory flow with an $\mathcal{O}(H)$ error. This reduction allowed us to study the dissipative mechanisms, the influence of geometric and matter sectors, and the structure of asymptotic regimes. Near nondegenerate minima of the potential, the averaging analysis was complemented by a finite-dimensional reduction yielding a smooth local invariant manifold on which the dynamics converge exponentially to equilibrium.

Beyond asymptotic reductions, we introduced a quadrature-based formulation that expresses the evolution of the scalar field, Hubble rate, and scale factor in closed form for a broad class of potentials and geometric terms. This approach unified FLRW, anisotropic Bianchi~I, and brane--world configurations, enabling analytic computation of inflationary observables. 

Taken together, our analysis provides a unified framework for studying scalar-field cosmologies across oscillatory, dissipative, and asymptotic regimes. The framework involved non-minimal couplings, geometric corrections, and mixed matter sectors, and remained robust under small perturbations of the potential, coupling function, and geometric term. The techniques introduced here naturally extended to multifield models, higher-order curvature corrections, and stochastic or quantum--corrected scalar dynamics.

Looking ahead, several natural extensions follow from the present work. A first direction is the derivation of \emph{quantitative thresholds}. Obtaining explicit perturbation bounds that ensure the persistence of equilibria, invariant manifolds, and spectral gaps under small deformations of the potential $V$, the matter coupling $\chi$, and the geometric term $G$, would enhance the applicability of the persistence results and quantify the admissible size of model variations.

A second avenue concerns \emph{resonant and degenerate regimes}. Extending the analysis to resonant normal forms and higher--order averaging would enable a systematic classification of algebraic decay, metastability, and bifurcation phenomena near degenerate minima, which often arise in models with flat potentials or multiple competing scales.

A third direction involves incorporating \emph{weak inhomogeneities}. Generalizing the framework to weakly inhomogeneous FLRW perturbations would allow one to test the robustness of attractors under spatial variations and mode coupling, thereby linking homogeneous dynamical systems with cosmological perturbation theory.

Finally, there is considerable scope for \emph{numerical and phenomenological applications}. We will aim to constrain the parameters of various theories by connecting theoretical results to observational data. 

In summary, the synthesis of dissipative control, averaging, local reduction, and exact quadrature solutions yields a framework for scalar--field cosmologies under perturbations. It clarifies when potential minima act as attractors, quantifies approach rates, and provides analytic tools for computing inflationary observables in both isotropic and anisotropic backgrounds.

\acknowledgments{Genly Leon and Claudio Michea gratefully acknowledge funding from the Agencia Nacional de Investigación y Desarrollo (ANID) through FONDECYT Grant No. 1240514, ETAPA 2025. Aleksander Kozak is supported by Proyecto No. 3250036, Concurso Fondecyt de Postdoctorado 2025. We also extend our gratitude to the Vicerrectoría de Investigación y Desarrollo Tecnológico (VRIDT) of UCN for the scientific support provided through the Núcleo de Investigación en Geometría Diferencial y Aplicaciones, Resolution VRIDT No.~096/2022, and through the Núcleo de Investigación en Simetrías y la Estructura del Universo (NISEU),  Resolution VRIDT No.~200/2025.}

\bibliography{refs}

\appendix

\end{document}